\newcommand{\Lfinal}{\lat_{\textnormal{final}}}
\newcommand{\Lint}{\lat_{\textnormal{int}}}
\newcommand{\cS}{\mathcal{S}}
\newcommand{\cG}{\mathcal{G}}
\newcommand{\cA}{\mathcal{A}}
\newcommand{\Bfinal}{B_{\textnormal{final}}}
\newcommand{\Gfinal}{G_{\textnormal{final}}}
\newcommand{\Cfinal}{\C_{\textnormal{final}}}
\newcommand{\mindistp}{\lambda_1^{(p)}}
\newcommand{\Cint}{\C_{\textnormal{int}}}
\newcommand{\Crand}{\C_{\textnormal{rand}}}
\newcommand{\BCH}{\C_{\textnormal{BCH}}}
\newcommand{\GBCH}{G_{\textnormal{BCH}}}
\newcommand{\BBCH}{B_{\textnormal{BCH}}}
\newcommand{\BNCP}{B_{\textnormal{NCP}}}
\newcommand{\latBCH}{\lat_{\textnormal{BCH}}}
\newcommand{\latNCP}{\lat_{\textnormal{NCP}}}
\newcommand{\latRS}{\lat_{\textnormal{RS}}}
\newcommand{\latRSparam}{\lat_{\textnormal{RS}_{q,\ell}}}
\newcommand{\NCP}{\textnormal{NCP}}
\newcommand{\MDP}{\textnormal{MDP}}
\renewcommand{\C}{\mathcal{C}}
\newcommand{\Bint}{B_{\textnormal{int}}}
\newcommand{\Gint}{G_{\textnormal{int}}}
\newcommand{\RSparam}{\textnormal{RS}_{q,\ell}}
\newcommand{\W}{\cc{W}}
\newcommand{\FPT}{\cc{FPT}}
\newcommand{\fullornot}[2]{#1}
\date{\today}
\renewcommand{\le}{\leqslant}
\renewcommand{\leq}{\leqslant}
\renewcommand{\geq}{\geqslant}
\title{Parameterized Inapproximability of the\\ Minimum Distance Problem over all Fields and the\\ Shortest Vector Problem in all $\ell_p$ Norms\thanks{A preliminary version of this work appeared at STOC 2023~\cite{BCGR23}.}}
\author{Huck Bennett\thanks{University of Colorado Boulder. \email{huck.bennett@oregonstate.edu}. This work was mainly done while the author was at Oregon State University.}
\and Mahdi Cheraghchi\thanks{University of Michigan, Ann Arbor. \email{mahdich@umich.edu}.}
\and Venkatesan Guruswami\thanks{University of California, Berkeley. \email{venkatg@berkeley.edu}}
\and Jo\~{a}o Ribeiro\thanks{NOVA LINCS and NOVA School of Science and Technology -- Universidade Nova de Lisboa. \email{joao.ribeiro@fct.unl.pt}. This work was mainly done while the author was at Carnegie Mellon University.}} %{}}
\date{}
\begin{document}

\maketitle
\thispagestyle{empty}
\listoffixmes
\begin{abstract}
We prove that the Minimum Distance Problem ($\MDP$) on linear codes over any fixed finite field and parameterized by the input distance bound is $\W[1]$-hard to approximate within any constant factor.
We also prove analogous results for the parameterized Shortest Vector Problem ($\SVP$) on integer lattices. Specifically, we prove that $\SVP$ in the $\ell_p$ norm is $\W[1]$-hard to approximate within any constant factor for any fixed $p >1$ and $\W[1]$-hard to approximate within a factor approaching $2$ for $p=1$.
(We show hardness under randomized reductions in each case.)

These results answer the main questions left open (and explicitly posed) by
Bhattacharyya, Bonnet, Egri, Ghoshal, Karthik C.\ S., Lin, Manurangsi, and Marx (Journal of the ACM, 2021) on the complexity of parameterized $\MDP$ and $\SVP$. 
For $\MDP$, they established similar hardness for \emph{binary} linear codes and left the case of general fields open. For $\SVP$ in $\ell_p$ norms with $p > 1$, they showed inapproximability within \emph{some} constant factor (depending on $p$) and left open showing such hardness for arbitrary constant factors. 
They also left open showing $\W[1]$-hardness even of exact SVP in the $\ell_1$ norm.
\end{abstract}
\newpage
\tableofcontents
\pagenumbering{roman}
\newpage
\pagenumbering{arabic}

\section{Introduction}
\label{sec:intro}

Error correcting codes and point lattices are fundamental mathematical objects, and computational problems on them have a wide range of applications in computer science including to robust communication, cryptography, optimization, complexity theory, and more. Indeed, because computational problems on codes and lattices are so ubiquitous, a highly active line of research spanning decades has worked to understand the complexity of the problems themselves.
In particular, a great deal of work has studied the complexity of the Minimum Distance Problem ($\MDP$) (and its affine version, the Nearest Codeword Problem ($\NCP$)) on linear error correcting codes~\cite{BMvT78,journals/jcss/AroraBSS97,Var97,journals/tit/DumerMS03,CW12}.
Similarly, a large amount of work has studied the complexity of the analogous problems on lattices, the Shortest Vector Problem ($\SVP$) (and its affine version, the Closest Vector Problem ($\CVP$))~\cite{vanEmdeBoas81,journals/jcss/AroraBSS97,DBLP:conf/stoc/Ajtai98,journals/siamcomp/Micciancio00,Kho05,HR12}.

In $\MDP_q$, the goal is, given a linear error correcting code $\C$ over a finite field $\F_q$ and a distance bound $k$ as input, to determine whether or not the minimum Hamming weight of a nonzero codeword in $\C$ is at most $k$.
Similarly, in $\SVP_p$ the goal is, given a lattice $\lat$ and a distance bound $k$ as input, to determine whether or not the $\ell_p$ norm of some nonzero vector in $\lat$ is at most $k$.\footnote{The $\ell_p$ norm used is fixed and independent of the input. One may also consider $\SVP$ with respect to arbitrary norms, but it is most commonly considered with respect to $\ell_p$ norms (and especially with respect to the Euclidean norm $\ell_2$) as is the case in this work.} 
One may also consider $\gamma$-approximate versions of these problems for $\gamma \geq 1$, which we denote by $\gamma$-$\MDP_q$ and $\gamma$-$\SVP_p$, respectively.
(In what follows we refer to linear error correcting codes over finite fields simply as ``codes.'' We define codes, lattices, and computational problems on them formally in~\cref{sec:coding-problems,sec:lattice-problems}.)

In the 1990s, the field of \emph{parameterized complexity}, in which the running time of an algorithm for a given computational problem is considered not just as a function of the problem's input size $n$ but also in terms of some parameter $k$, developed and matured.
The fundamental notion of efficiency in the study of parameterized algorithms is \emph{fixed-parameter tractability}, which means that the algorithm runs in time $f(k) \cdot \poly(n)$ for some (possibly fast-growing) function $f(k)$ depending on the parameter $k$ but not the input length.
A computational problem (formally, problem-parameter pair) with such an algorithm is called \emph{fixed-parameter tractable} (FPT), and the set of all such problems forms the complexity class $\FPT$.
On the other hand, the canonical notion of inefficiency for parameterized problems is $\W[1]$-hardness, which is analogous to $\NP$-hardness in the non-parameterized setting. 
To show $\W[1]$-hardness of a given problem, it suffices to give an \emph{FPT reduction} from a known $\W[1]$-hard problem to that problem.
Giving such a reduction in particular implies that the problem reduced to is not in $\FPT$ unless $\W[1] = \FPT$, which is widely conjectured not to be the case.
(Determining whether $\FPT$ is equal to $\W[1]$ is a major open question, and is the analog of the $\P$ versus $\NP$ question in the parameterized world.)
See the books by Downey and Fellows~\cite{DF99,Downey2013-di} for comprehensive references on parameterized complexity.

\medskip\noindent\textbf{Parameterized complexity of $\MDP$ and $\SVP$.} As part of the development of parameterized complexity as a whole, substantial interest arose in the parameterized complexity (specifically, $\W[1]$-hardness) of computational problems on codes and lattices. This was especially true for $\MDP$ and $\SVP$, where in each case the parameter $k$ of interest is the input distance bound.\footnote{In the parameterized setting, we consider $\SVP$ only on \emph{integer} lattices; otherwise the distance bound is not meaningful.}
Indeed, until recently, one of the major unresolved questions in parameterized complexity theory was to determine whether the Minimum Distance Problem on binary codes was $\W[1]$-hard. It was one of the few remaining open problems from~\cite{DF99}, and Downey and Fellows called it one of the ``most infamous'' such open problems in their follow-up book~\cite{Downey2013-di}.\footnote{\label{footnote:even-set}More precisely,~\cite{DF99,Downey2013-di} asked about the complexity of the Even Set Problem, which is equivalent to the dual formulation of the Minimum Distance Problem over $\F_2$.}
Similarly, the fixed-parameter (in)tractability of the Shortest Vector Problem in the $\ell_2$ norm was mentioned as an important unresolved question in~\cite{DF99,Downey2013-di}.
Interestingly, it is not known whether $\gamma$-$\MDP_q$ nor $\gamma$-$\SVP_p$ are in W[1].

In recent seminal work, Bhattacharyya, Bonnet, Egri, Ghoshal, Karthik C.\ S., Lin, Manurangsi, and Marx~\cite{BBEGSLMM21},  building on work of Lin~\cite{journals/jacm/Lin18}, resolved both of these questions in the affirmative. They in fact even showed that both parameterized $\MDP$ and $\SVP$ are hard to approximate. Specifically, they showed that for any constant $\gamma \geq 1$, $\gamma$-$\MDP_{2}$ is $\W[1]$-hard to approximate under randomized reductions, and that for any $p > 1$ and constant $\gamma$ satisfying $1 \leq \gamma < (1/2+1/2^p)^{-1/p}$, $\gamma$-$\SVP_p$ is $\W[1]$-hard to approximate under randomized reductions.

However, despite its major achievements,~\cite{BBEGSLMM21} still fell short of providing a complete understanding of the parameterized hardness of approximate $\MDP$ and $\SVP$.
To that end, they gave several open questions.
Specifically, the authors asked whether it was possible to show $\W[1]$-hardness of $\MDP$ over \emph{all} finite fields $\F_q$ (and not just for the binary case of $\F_2$).
They also asked about showing $\W[1]$-hardness of $\SVP$ in \emph{all} $\ell_p$ norms (specifically, they asked about $\gamma$-$\SVP_1$, for which they did not show hardness even in the exact case of $\gamma = 1$), and about showing $\W[1]$-hardness of $\gamma$-$\SVP_p$ with arbitrarily large constant $\gamma$ for (some) $p$ (they did not obtain such a result for any $p$).%
\footnote{In fact,~\cite{BBEGSLMM21} asked about such a result for $p \neq 2$ and claimed such a result in passing for $p = 2$. However, the claim was referring to a result from prior work (specifically, \cite{BGKM18}) that showed hardness only under stronger hypotheses. See \cref{remark:svp2-tensoring}.}
The first two of these three questions from~\cite{BBEGSLMM21}
were also asked as Open Questions $2$ and $3$, respectively, in a recent survey on approximation in parameterized complexity by Feldmann, Karthik C.\ S., Lee, and Manurangsi~\cite{FKLM20}, which discussed important open problems in the field as a whole.

\subsection{Our contributions}

In this work, we answer all three of the open questions of~\cite{BBEGSLMM21} discussed above, 
and provide a nearly complete picture of the parameterized inapproximability of $\MDP$ and $\SVP$. In each of the three theorems below (i.e., \cref{thm:hardness-mdp,thm:hardness-l1,thm:hardness-lp-tensor}) the parameter of interest is the input distance bound $k$.

We first give our hardness result for $\MDP$, which resolves the first open question from~\cite{BBEGSLMM21} (also asked as~\cite[Open Question 2]{FKLM20}).

\begin{restatable}{theorem}{hardnessmdp}
\label{thm:hardness-mdp}
For any fixed prime power $q$ and constant $\gamma \geq 1$, $\gamma$-$\MDP_q$ is $\W[1]$-hard under randomized FPT reductions with two-sided error.
\end{restatable}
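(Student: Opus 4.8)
The plan is to reduce from $\gamma'$-$\MDP_2$, which is known to be $\W[1]$-hard for every constant $\gamma' \geq 1$ by~\cite{BBEGSLMM21}, to $\gamma$-$\MDP_q$ for an arbitrary fixed prime power $q$ and arbitrary constant $\gamma \geq 1$. The central difficulty is that replacing $\F_2$ by $\F_q$ changes the geometry of codewords dramatically: a binary vector of Hamming weight $w$ has no canonical image in $\F_q^n$, and more importantly $\F_q$-linear combinations of ``lifted'' codewords can have much smaller weight than their binary counterparts (e.g., over $\F_3$ the all-ones and all-twos vectors are scalar multiples of each other). So the core task is to build a weight-preserving, $\F_q$-linear gadget that (i) embeds an $\F_2$-linear code into an $\F_q$-linear code while (ii) controlling the minimum distance from below, so that a small codeword over $\F_q$ forces a small codeword over $\F_2$. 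The natural approach is the standard concatenation/composition trick: replace each coordinate by a block of coordinates via an inner code over $\F_q$ that has good distance and that ``recognizes'' exactly the $\F_2$-scalars.

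First I would recall the structure of the $\F_2$-hard instance: a generator matrix $\Gint$ for a code $\Cint$ over $\F_2$, together with a distance bound $k$, where YES instances have a nonzero codeword of weight $\leq k$ and NO instances have minimum distance $> \gamma' k$ (with $\gamma'$ chosen at the end as a large enough function of $\gamma$ and $q$). Second, I would fix an inner code: for each $\F_2$-symbol I want a map $\varphi\colon \F_2 \to \F_q^\ell$ (extended $\F_2$-linearly, and crucially \emph{not} $\F_q$-linearly on coordinates) realized by writing the final code as the $\F_q$-span of the rows of $\Gint$ after a suitable symbol-wise substitution, padded with extra ``free'' coordinates that kill unwanted low-weight $\F_q$-combinations. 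Concretely, one clean way is to take $\latRSparam$-type or $\BCH$-type inner codes (as suggested by the macros \verb|\RSparam|, \verb|\BCH| already set up in the preamble): use a Reed--Solomon or BCH code over $\F_q$ of length $\ell = \ell(q,\gamma)$ and large relative distance to instantiate the substitution, so that any nonzero $\F_q$-linear combination of the substituted generators has weight at least (relative distance)$\times$(support size of the corresponding $\F_2$-combination). Third, I would argue completeness: a weight-$\leq k$ binary codeword maps to an $\F_q$-codeword of weight exactly $w \cdot k'$ for the appropriate block parameters, so set the new bound $k_{\mathrm{new}}$ accordingly — this direction is routine and exact. Fourth, soundness: given any nonzero $\F_q$-codeword of weight $\leq \gamma \cdot k_{\mathrm{new}}$, project/decode block-by-block through the inner code to extract a nonzero $\F_2$-codeword whose weight is at most $\gamma k_{\mathrm{new}}/(\text{inner distance}) \leq \gamma' k$ once the inner relative distance is chosen large enough (this is where $\ell$, and hence the reduction, depends on $q$ and $\gamma$; randomization, if needed, enters only in constructing a good inner code or in the underlying $\F_2$ hardness and contributes the two-sided error).

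Finally I would check the FPT bookkeeping: $k_{\mathrm{new}} = O_{q,\gamma}(k)$ so the new parameter is bounded by a function of the old one; the instance size grows by a factor $\mathrm{poly}(\ell) = O_{q,\gamma}(1)$ times $\mathrm{poly}(n)$; and the whole construction is computable in FPT time, so composing with the $\W[1]$-hardness of $\gamma'$-$\MDP_2$ gives $\W[1]$-hardness of $\gamma$-$\MDP_q$ under randomized FPT reductions with two-sided error. The step I expect to be the main obstacle is the soundness analysis together with the design of the inner substitution: one must ensure that \emph{every} nonzero $\F_q$-linear combination — including those mixing many binary generators with many distinct nonzero scalars from $\F_q \setminus \F_2$ — still has large weight, which is exactly the phenomenon that fails for naive lifts and is the reason the problem was open for general $q$. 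Getting a clean inner code (likely a carefully chosen BCH or RS code, cf.\ the \verb|\BBCH|, \verb|\BNCP|, \verb|\latRS| macros) with the right distance-vs-length tradeoff, and verifying it composes correctly with the parameter bound, is the technical heart of the argument.
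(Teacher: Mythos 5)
Your plan reduces from $\gamma'$-$\MDP_2$ to $\gamma$-$\MDP_q$ via a concatenation-style field change, whereas the paper never passes through $\MDP_2$ at all: it reduces directly from $\gamma$-$\NCP_q$ over the \emph{same} field $\F_q$ (W[1]-hard by \cref{thm:hardNCP}) via a Khot-style intermediate-code-plus-sparsification argument instantiated with $q$-ary BCH-based locally dense codes, then tensors to boost the gap. The step you yourself flag as ``the technical heart'' --- making every $\F_q$-linear combination with non-binary coefficients have large weight --- is a genuine gap in your plan, and I believe it is not fillable by any inner code when $q$ is not a power of $2$.

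Concretely, a per-coordinate substitution $\varphi$ with $\varphi(0)=\vec{0}$ and $\varphi(1)=\vec{u}\in\F_q^\ell$ yields a code $\C'$ whose codeword for coefficient vector $\vec{a}\in\F_q^n$ has block $i$ equal to $(G\vec{a})_i\cdot\vec{u}$, where $G\vec{a}$ is computed \emph{over $\F_q$} with the $\{0,1\}$ entries of $G$ read into $\F_q$. Hence $\lambda(\C')$ equals $\ell$ times the minimum distance of the $\F_q$-linear code generated by the $\{0,1\}$-matrix $G$ over $\F_q$ --- a quantity with no general relation to the $\F_2$-minimum distance of $\C(G)$. For instance, the matrix with columns $(1,0,1)^T$, $(1,1,0)^T$, $(0,1,1)^T$ has $\F_2$-rank $2$ and $\F_2$-minimum distance $2$, but its determinant over $\Z$ is $2$, so over $\F_3$ it has rank $3$ and generates all of $\F_3^3$ with minimum distance $1$: a NO instance over $\F_2$ becomes a YES instance over $\F_3$, and soundness fails. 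The root cause is that $\{0,1\}$ is not an $\F_q$-subspace of $\F_q$ for $q$ not a power of $2$, so no $\F_q$-linear gadget --- padding coordinates, inner code, or otherwise --- can privilege $\F_2$-like coefficient vectors over arbitrary $\F_q$-coefficient vectors. (For $q=2^r$, $\F_2\hookrightarrow\F_q$ and the $\F_q$-span of an $\F_2$-code preserves minimum distance, so your route is trivially fine there; it is precisely the odd-characteristic case that was open.) The paper's way around this is to never change fields: starting from $\NCP_q$, the key observation is that Khot's CVP-to-SVP reduction, adapted to codes, needs only locally dense codes of constant relative radius $\alpha<1$, a requirement met by $q$-ary BCH codes for every $q$, unlike the near-sphere-packing requirement underlying~\cite{BBEGSLMM21} that only binary BCH codes satisfy.
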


Second, we settle the second open question from~\cite{BBEGSLMM21} (also asked as~\cite[Open Question 3]{FKLM20}) by showing the following hardness result for parameterized $\gamma$-$\SVP_p$ for all (finite) $p \geq 1$ and some $\gamma = \gamma(p)$.\footnote{We do not consider the case of $p = \infty$ because, as~\cite{BBEGSLMM21} notes, $\SVP$ in the $\ell_{\infty}$ norm is $\NP$-hard even when $k = 1$.} Indeed, in particular applies to the $\ell_1$ norm.
 It also shows hardness of approximation for larger factors $\gamma(p)$ for $p > 1$ than~\cite{BBEGSLMM21} does.
 
\begin{restatable}{theorem}{hardnesslone}\label{thm:hardness-l1}
For any fixed $p \in [1, \infty)$ and constant $\gamma\in[1,2^{1/p})$, $\gamma$-$\SVP_p$ is $\W[1]$-hard under randomized FPT reductions with two-sided error.
\end{restatable}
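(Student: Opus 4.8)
The plan is to prove this by reducing from the $\W[1]$-hard Minimum Distance Problem over $\F_2$ supplied by \cref{thm:hardness-mdp}, transforming a hard $\gamma^p$-$\MDP_2$ instance into a hard $\gamma$-$\SVP_p$ instance through a code-to-lattice construction in the spirit of Micciancio's $\NP$-hardness proof for approximating $\SVP_p$ within $2^{1/p}-\epsilon$. First I would pass from $\gamma^p$-$\MDP_2$ (which has any constant gap by \cref{thm:hardness-mdp}) to a $\gamma^p$-gap Nearest Codeword Problem over $\F_2$ via the standard reduction that guesses a nonzero coordinate of the minimum-weight codeword. Given an $\NCP_2$ instance --- a code $\C \subseteq \F_2^n$ and target $t$ with $d_H(t,\C) \le k$ (YES) or $d_H(t,\C) > \gamma^p k$ (NO) --- I would then build an integer lattice $\lat$ and a rational bound $\beta$ with $\lambda_1^{(p)}(\lat) \le \beta$ in YES instances and $\lambda_1^{(p)}(\lat) > \gamma\beta$ in NO instances, with $\beta$ (equivalently its $p$-th power) as the new parameter. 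The skeleton is the Construction-A lattice of $\C$ together with a homogenizing coordinate that carries $t$ scaled by a tunable ``slack'' weight $s$: a codeword at Hamming distance $w$ from $t$ lifts to an integer vector of $\ell_p$ norm $(w+s^p)^{1/p}$. The difficulty is that Construction A contains $2\Z^n$, hence short spurious vectors (such as $2e_1$, of $\ell_p$ norm $2$) that would cap $\lambda_1^{(p)}$ at a constant; since the distance bound has to serve as the parameter, and $\SVP_p$ (like $\MDP_2$) with a constant distance bound is in $\P$, these spurious vectors must be removed.

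The crux is therefore a sparsification step that kills the low-norm Construction-A vectors while retaining the lift of a close codeword in the YES case. I would implement it by intersecting the lattice with an algebraically structured sublattice built from BCH codes --- with Reed--Solomon codes entering the analysis --- arranged so that: in NO instances, no nonzero vector of $\ell_p$ norm $\le \gamma\beta$ survives (a surviving one would reduce, modulo the auxiliary modulus, to a low-weight codeword of a high-distance BCH code, hence to $0$, forcing it to be long after all); and in YES instances, the lift of a designated close codeword survives the sparsification with probability at least $1/f(k)$ --- crucially inverse-FPT in $k$ rather than inverse-exponential in $n$. A plain random sparsification would need to eliminate up to $n^{O(k)}$ spurious short vectors, forcing the sublattice index and thus the reciprocal success probability up to $n^{\Omega(k)}$, which is merely XP and not FPT; the algebraic structure of BCH codes is what confines the dangerous vectors to an FPT-sized, controllable family. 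The slack $s$ is then tuned to absorb the few remaining short vectors --- essentially the small multiples $2t,3t,\dots$ of the target direction --- and it is the tension between forcing those to be long and keeping the genuine YES vector of $\ell_p$ norm at most $(k+s^p)^{1/p}$ short that caps the attainable gap at (just below) $2^{1/p}$, which is exactly the hypothesis $\gamma < 2^{1/p}$. For $p=1$ this is the range $\gamma < 2$, and in particular it yields $\W[1]$-hardness of \emph{exact} $\SVP_1$.

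The rest is largely bookkeeping: checking that the whole reduction runs in FPT time; that the new parameter is bounded by a function of $k$ alone (this controls the designed distances, and hence lengths and moduli, of the BCH and Reed--Solomon gadgets); that YES instances map to YES instances whenever the designated codeword lift survives the sparsification; and that NO instances always map to NO instances --- deterministically, so that the error is one-sided and amplification by repeating the randomized reduction $\poly(f(k),n)$ times is legitimate. I expect the sparsification analysis to be the main obstacle: one must simultaneously ensure that \emph{every} dangerous short vector vanishes in the NO case, that a genuinely short lift persists with inverse-FPT probability in the YES case, and that the distance bound stays bounded in terms of $k$, and these requirements pull the BCH/RS parameters in opposite directions --- threading them while pushing $\gamma$ arbitrarily close to $2^{1/p}$ is the heart of the matter.
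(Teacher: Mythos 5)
Your plan diverges from the paper's proof in ways that leave genuine gaps. The paper does not route through $\MDP$ or even $\NCP$: it reduces directly from $\gamma$-$\CVP_p$, which is already $\W[1]$-hard for every $p, \gamma \geq 1$ (\cref{thm:W1hardnessCVP}), to $\gamma'$-$\SVP_p$ via Khot's $\CVP$-to-$\SVP$ reduction (\cref{thm:redCVPtoSVP}). Your opening step, passing from $\gamma^p$-$\MDP_2$ to $\gamma^p$-$\NCP_2$ by ``guessing a nonzero coordinate of the minimum-weight codeword,'' already fails to be a many-one FPT reduction: a correct guess succeeds only with probability $\geq 1/n$, not $1/f(k)$, and boosting it requires an OR over $n$ instances. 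It is also unnecessary, since $\NCP$'s $\W[1]$-hardness is a known import (\cref{thm:hardNCP}) and $\CVP$'s is even more convenient.

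The deeper gap is that your sparsification cannot work with a single good vector, and your proposal never constructs the locally dense gadget that makes it work. You want ``the lift of a designated close codeword'' (singular) to survive with probability $1/f(k)$; but to kill the $2\Z^n$ garbage in Construction A --- already the $n$ vectors $2\vec{e}_i$ have constant $\ell_p$ norm, and there are $n^{\Theta(k^p)}$ annoying vectors of norm at most $\gamma k$ overall --- the sublattice index $\rho$ must be $n^{\Omega(k^p)}$, and a lone designated lift then survives a random index-$\rho$ constraint with probability only about $1/\rho = n^{-\Omega(k^p)}$, not $1/f(k)$. Your proposed remedy, intersecting with a BCH Construction-A lattice to ``confine the dangerous vectors to an FPT-sized family,'' does not remove any of this garbage: if $\lat_1 = \C_1 + 2\Z^n$ and $\lat_2 = \BCH + 2\Z^n$, then $\lat_1 \cap \lat_2 = (\C_1 \cap \BCH) + 2\Z^n$, so every $2\vec{e}_i$ is still present. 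What actually resolves the tension is \emph{local density}: the paper appends a locally dense lattice $(A,\vec{s})$ having a number $N$ of lattice points within $\ell_p$ radius $\alpha d$ of $\vec{s}$ that is polynomially larger than the annoying-vector count --- both counts grow with the ambient dimension and neither is bounded by $f(k)$ alone --- chooses $\rho$ sandwiched between them, and keeps at least one of the $N$ good vectors with \emph{constant} probability via pairwise independence and Chebyshev (\cref{lem:chebyshev}) while a union bound kills all annoying vectors. Your intuition that the ``slack'' tuning caps the gap at $2^{1/p}$ is not where the bound comes from: the cap is the relative radius $\alpha$ of the locally dense lattice, and the Reed--Solomon Construction-A lattices of Bennett--Peikert (\cref{thm:ldl}, \cref{lem:distlatRS}, \cref{lem:shiftlatRS}) are chosen precisely because they achieve integral locally dense lattices with $\alpha$ arbitrarily close to $2^{-1/p}$, which BCH-based gadgets cannot and which is exactly what reaching $p = 1$ and gaps approaching $2^{1/p}$ requires.
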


Finally, we establish the parameterized inapproximability of $\SVP$ with an arbitrary constant approximation factor in the $\ell_p$ norm for all fixed $p>1$. This resolves the third question from~\cite{BBEGSLMM21} mentioned above.

\begin{restatable}{theorem}{hardnesslp} \label{thm:hardness-lp-tensor}
For any fixed $p \in (1, \infty)$ and constant $\gamma \geq 1$,
$\gamma$-$\SVP_p$ is $\W[1]$-hard under randomized FPT reductions with two-sided error.
\end{restatable}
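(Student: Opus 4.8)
The plan is to bootstrap the weak constant-factor hardness of $\gamma$-$\SVP_p$ established in \cref{thm:hardness-l1} (which gives hardness for any $\gamma < 2^{1/p}$, in particular for some fixed $\gamma_0 > 1$) up to hardness for an arbitrary constant factor, using the tensor product of lattices. Recall that for the $\ell_p$ norm with $p > 1$, the tensor power behaves multiplicatively on short vectors in the following sense: if $\lat$ is a lattice whose shortest nonzero vector has $\ell_p$ norm either at most $\beta$ (YES case) or greater than $\gamma_0 \beta$ (NO case), then the $t$-fold tensor power $\lat^{\otimes t}$ has a nonzero vector of $\ell_p$ norm at most $\beta^t$ in the YES case (take the tensor of the short vector with itself $t$ times), while in the NO case one can show — crucially, this is where $p > 1$ is used — that every nonzero vector of $\lat^{\otimes t}$ has $\ell_p$ norm greater than $\gamma_0^t \beta^t$. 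Thus tensoring amplifies the gap from $\gamma_0$ to $\gamma_0^t$, and choosing $t = \lceil \log \gamma / \log \gamma_0 \rceil$ makes the gap exceed any desired constant $\gamma$.

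Concretely, I would proceed as follows. First, invoke \cref{thm:hardness-l1} with a fixed $p > 1$ to obtain a randomized FPT reduction producing $\SVP_p$ instances $(\lat, k)$ with a $\gamma_0$-gap for some constant $\gamma_0 = \gamma_0(p) > 1$; by scaling we may assume the YES/NO thresholds are $k$ and $\gamma_0 k$. Second, fix the target factor $\gamma \geq 1$ and set $t = t(\gamma, \gamma_0)$ a constant so that $\gamma_0^t \geq \gamma$. Third, apply the tensoring operation $t$ times to get $(\lat^{\otimes t}, k^t)$; since $t$ is a constant, the dimension of $\lat^{\otimes t}$ is $\dim(\lat)^t = \poly(n)$ and the new distance bound is $k^t$, which is bounded by a function of the original parameter $k$ alone (since $t$ depends only on $\gamma, p$), so the composition is still an FPT reduction in the parameter. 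Fourth, establish the key soundness claim: in the NO case, $\mindistp(\lat^{\otimes t}) > \gamma_0^t k^t \geq \gamma k^t$; combined with completeness ($\mindistp(\lat^{\otimes t}) \leq k^t$ in the YES case), this yields a $\gamma$-gap and hence $\W[1]$-hardness of $\gamma$-$\SVP_p$.

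The main obstacle — and the only step that is not essentially bookkeeping — is the soundness claim that tensoring multiplies the minimum distance, i.e., $\mindistp(\lat_1 \otimes \lat_2) \geq \mindistp(\lat_1)\cdot\mindistp(\lat_2)$ for $p > 1$. Unlike the easy completeness direction, this lower bound is genuinely a property of $\ell_p$ for $p > 1$ and \emph{fails} for $p = 1$ (which is exactly why this theorem excludes $p = 1$, and why \cref{thm:hardness-l1} had to use a different technique there). The standard way to prove it is to take any nonzero $v \in \lat_1 \otimes \lat_2$, view it as a matrix with rows $v_1, \dots, v_m$ (each row lying in $\lat_2$ after accounting for the lattice structure in the appropriate basis), and argue that the vector of row-norms $(\|v_1\|_p, \dots, \|v_m\|_p)$ can itself be related to a lattice vector in $\lat_1$; then $\|v\|_p^p = \sum_i \|v_i\|_p^p$ combined with the super-additivity/convexity properties of $x \mapsto x^{p}$ for $p \geq 1$ gives the bound. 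Some care is needed to make the ``vector of row norms lies in $\lat_1$'' step precise — typically one does not get a lattice vector on the nose but rather a bound $\|v\|_p \geq \mindistp(\lat_2) \cdot (\text{something} \geq \mindistp(\lat_1))$ by a careful coordinate-wise argument, possibly passing through the dual or using the specific basis structure. I expect this to already be known in the literature (e.g., it underlies prior hardness-amplification results for $\SVP_2$), so I would cite or adapt the corresponding lemma rather than reprove it from scratch, and then the rest of the argument is the routine FPT-reduction composition sketched above.
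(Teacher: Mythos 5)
Your proposal hinges on the claim that, for $p > 1$, tensoring is multiplicative for $\lambda_1^{(p)}$, i.e., $\lambda_1^{(p)}(\lat_1 \otimes \lat_2) \geq \lambda_1^{(p)}(\lat_1)\cdot\lambda_1^{(p)}(\lat_2)$. This is false, and the failure is \emph{not} a $p=1$ artifact: it fails already for $p = 2$, which is precisely why hardness amplification for $\SVP$ has historically been so delicate. The paper states this explicitly in \cref{sec:tensoring-lattices} and points to \cite[Lemma 2.3]{HR12} for a concrete counterexample (roughly: take $\lat_2 = \lat_1^*$ for a lattice with $\lambda_1(\lat_1)\lambda_1(\lat_1^*) = \Theta(n)$; then $\sum_i \vec{b}_i \otimes \vec{b}_i^*$ is a tensor-product lattice vector of $\ell_2$ norm only $\sqrt{n}$). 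The reason \cref{thm:hardness-lp-tensor} excludes $p=1$ has nothing to do with tensoring being better behaved for $p > 1$; it is because the binary BCH-based locally dense lattices used in the reduction have $\ell_p$ relative radius strictly below $1$ only for $p > 1$ (the same limitation Khot faced).

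Because multiplicativity fails in general, you cannot invoke \cref{thm:hardness-l1} as a black box and then tensor. The paper instead constructs $\gamma'$-$\SVP_p$ instances that are \emph{specifically engineered} to tensor well: \cref{thm:redNCPtoSVP} gives a direct randomized FPT reduction from $\gamma$-$\NCP_2$ to $\gamma'$-$\SVP_p$ whose NO instances satisfy the Haviv--Regev conditions (every nonzero lattice vector either has large Hamming weight, or is in $2\Z^m$ with moderately large Hamming weight, or is in $2\Z^m$ with very large $\ell_p$ norm). \cref{lem:havivregev}, a generalization of \cite[Lemma 3.4]{HR12} from $\ell_2$ to $\ell_p$, then shows that these conditions suffice to make the $c$-fold tensor power behave multiplicatively. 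Critically, the instances produced by the proof of \cref{thm:hardness-l1} (which reduces from $\CVP_p$ and uses Reed--Solomon-based locally dense lattices) do \emph{not} obviously satisfy these conditions, and the paper emphasizes that the two theorems require not only different locally dense lattices but also different source problems ($\NCP_2$ over $\F_2$ versus $\CVP_p$). So the missing ingredient in your proposal is the whole mechanism for establishing and exploiting the Haviv--Regev structural conditions, which is the actual content of \cref{sec:NCPtoSVPtensor}.
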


\begin{remark} \label{remark:svp2-tensoring}
We note that~\cite{BBEGSLMM21} erroneously claimed in a passing remark that the important Euclidean (i.e., $p=2$) special case of \cref{thm:hardness-lp-tensor} was already known.
However, that remark was in fact referring to a result from an earlier version of~\cite{BBEGSLMM21} (i.e.,~\cite{BGKM18}) that shows parameterized hardness of $\gamma$-$\SVP_2$ for arbitrary constant $\gamma \geq 1$, \emph{but only under the (randomized) Gap Exponential Time Hypothesis (Gap-ETH) or the Parameterized Inapproximability Hypothesis (PIH)}, which are stronger assumptions than $\W[1] \neq \cc{FPT}$.\footnote{Randomized Gap-ETH~\cite{Din16,MR17} states that there exist constants $\eps,c>0$ such that no randomized algorithm which is given as input a
$3$-CNF formula $F$ with $m$ clauses and runs in time $O(2^{cm})$ can distinguish with probability at least 2/3 between the cases where $F$ is satisfiable and where only at most a $(1 -\eps)$-fraction of clauses in $F$ are satisfiable.\\
PIH~\cite{LRSZ20} states that there exists a constant $\eps>0$ such that it is $\W[1]$-hard to approximate the Multicolored Densest Subgraph problem to within a $\gamma=1+\eps$ approximation factor. This corresponds to the problem where we are given as input a graph $G=(V,E)$ with the vertex set partitioned into $k$ sets $V_1,\dots,V_k$, and the goal is to select vertices $v_1\in V_1,\dots,v_k\in V_k$ that induce as many edges as possible in $G$.}
In particular, the result in \cref{thm:hardness-lp-tensor} was previously unknown for any $p$.
We thank Pasin Manurangsi~\cite{Man22} for clarifying this for us.
\end{remark}

We provide a technical overview of our arguments in \cref{sec:tech-overview} and provide formal proofs of \cref{thm:hardness-mdp,thm:hardness-l1,thm:hardness-lp-tensor} in \cref{sec:NCPtoMDP,sec:CVPtoSVP,sec:NCPtoSVPtensor}, respectively.

\medskip\noindent\textbf{Fine-grained hardness of parameterized $\MDP$ and $\SVP$.}
Our reductions also directly yield improved results concerning the fine-grained hardness of $\gamma$-$\MDP_q$ and $\gamma$-$\SVP_p$ under Gap-ETH.
Leveraging results from~\cite{BGKM18,BBEGSLMM21}, Manurangsi~\cite{Man20} showed that there are no (possibly randomized) algorithms running in time $f(k)\cdot n^{o(k)}$ for $\gamma$-$\NCP_q$ (respectively, time $f(k)\cdot n^{o(k^p)}$ for $\gamma$-$\CVP_p$) with any function $f$, $\gamma\geq 1$, and prime power $q$ (respectively, $p\geq 1$), where $n$ is the dimension of the input code (respectively, the rank of the input lattice) and $k$ is the input distance bound (in each case) assuming randomized Gap-ETH.

By inspection, our FPT reductions from approximate $\NCP_q$ to approximate $\MDP_q$ and from approximate $\CVP_p$ to approximate $\SVP_p$ in \cref{sec:NCPtoMDP,sec:CVPtoSVP}, respectively, transform the distance parameter $k$ into $k'=O(k)$ (for formal statements, see \cref{thm:redNCPtoMDP,thm:redCVPtoSVP}).
Therefore, we can combine these reductions with the results from~\cite{Man20} to immediately obtain the following results on the parameterized fine-grained hardness of $\MDP$ and $\SVP$, which imply that the brute force solution to each of these problems is essentially optimal under randomized Gap-ETH.%
\footnote{We note that we work with the standard (in the non-parameterized setting) formulation of $\gamma$-$\SVP_p$ throughout the paper, where the goal is to decide whether the input lattice has a nonzero vector $\vec{x}$ with $\norm{\vec{x}}_p \leq k$ or if all such vectors have $\ell_p$ norm greater than $\gamma k$. On the other hand,~\cite{BBEGSLMM21,Man20} work with an equivalent but different parameterization of the problem, which asks whether the input lattice has a nonzero vector $\vec{x}$ with $\norm{\vec{x}}_p^p \leq k$ or if the $p$th power of the $\ell_p$ norm of all such vectors is greater than $\gamma k$. This discrepancy leads to certain runtimes and approximation factors in our work being off by a power of $p$ from~\cite{BBEGSLMM21,Man20}.}

\begin{restatable}{theorem}{finegrainedMDP}\label{thm:finegrainedMDP}
For any function $f$, fixed prime power $q$, and every $\gamma\in\left[1,\frac{2q}{2q-1}\right)$ it holds that, assuming randomized Gap-ETH, there is no randomized algorithm running in time $f(k)\cdot n^{o(k)}$ for deciding $\gamma$-$\MDP_q$, where $n$ is the dimension of the input code  and $k$ is the input distance bound.
\end{restatable}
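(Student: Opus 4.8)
The plan is to obtain \cref{thm:finegrainedMDP} by composing the reduction from approximate $\NCP_q$ to approximate $\MDP_q$ of \cref{thm:redNCPtoMDP}, which increases the distance bound by only a constant factor, with Manurangsi's fine-grained hardness of approximate $\NCP_q$ under randomized Gap-ETH~\cite{Man20}. Fix a constant $\gamma \in [1, \frac{2q}{2q-1})$. First I would extract from \cref{thm:redNCPtoMDP} a constant $\gamma_0 = \gamma_0(\gamma,q) \geq 1$ together with a randomized FPT reduction (with two-sided error) that takes an instance of $\gamma_0$-$\NCP_q$ on a code of dimension $n$ with distance bound $k$ and outputs an instance of $\gamma$-$\MDP_q$ on a code of dimension $\poly(n)$ with distance bound $k' = O(k)$. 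The point of the restriction $\gamma < \frac{2q}{2q-1}$ is that this is the range of approximation factors the reduction of \cref{thm:redNCPtoMDP} attains while keeping the parameter blowup linear; this is why we invoke this reduction directly rather than the construction underlying \cref{thm:hardness-mdp}, which reaches all constant factors but only at the cost of a superlinear increase of the distance bound, and so would yield a weaker time lower bound.

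Next I would argue by contradiction. Suppose a randomized algorithm $\cA$ decides $\gamma$-$\MDP_q$ in time $f(k) \cdot n^{o(k)}$ for some function $f$, where $n$ and $k$ denote the dimension and distance bound of the input $\MDP_q$ instance. Given a $\gamma_0$-$\NCP_q$ instance of dimension $n$ and distance bound $k$, run the reduction and pass its output to $\cA$. Since the reduction runs in FPT time and produces an $\MDP_q$ instance of dimension $\poly(n)$ and distance bound $O(k)$, the composite procedure runs in time $h(k)\cdot\poly(n) + f(O(k)) \cdot \poly(n)^{o(O(k))}$. As $k' = O(k)$ is linear in $k$, a function that is $o(k')$ stays $o(k)$ after this substitution, and raising a fixed power of $n$ to an $o(k)$ exponent is again $n^{o(k)}$; hence the composite decides $\gamma_0$-$\NCP_q$ in time $f'(k)\cdot n^{o(k)}$ for some function $f'$. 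This contradicts~\cite{Man20}, which rules out exactly such algorithms for $\gamma_0$-$\NCP_q$ (for every constant $\gamma_0\geq 1$ and prime power $q$) under randomized Gap-ETH.

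The remaining issues are purely bookkeeping, and the one I would be most careful about is verifying that the reduction of \cref{thm:redNCPtoMDP} really does increase the distance bound by only a constant factor and the code dimension by only a polynomial factor: both are exactly what makes the exponent $o(k)$ survive the composition, whereas a reduction with, say, $k' = \Theta(k\log k)$ would only give a weaker $n^{o(k/\log k)}$-type lower bound. Secondly, since the reduction and $\cA$ are both randomized with two-sided error, one amplifies their success probabilities in the standard way so that the resulting randomized decider for $\gamma_0$-$\NCP_q$ succeeds with probability bounded away from $1/2$, matching the model of~\cite{Man20}. With these checks in place, \cref{thm:finegrainedMDP} is immediate from \cref{thm:redNCPtoMDP} and~\cite{Man20}.
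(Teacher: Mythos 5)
Your proposal is correct and takes exactly the route the paper does: the paper's own proof is the single sentence in \cref{sec:ldc} observing that since $k'=O(k)$ in \cref{thm:redNCPtoMDP}, combining \cref{thm:FGhardnessNCP,thm:redNCPtoMDP,thm:ldc} yields \cref{thm:finegrainedMDP}, and your contradiction argument, the emphasis on the linear parameter blowup (as opposed to the superlinear $k^c$ incurred by tensoring in \cref{thm:hardness-mdp}), and the amplification of success probabilities correctly fill in what the paper leaves implicit. The one small omission is that \cref{thm:redNCPtoMDP} is a reduction \emph{with advice} and must be paired with the locally dense code construction of \cref{thm:ldc}; it is the relative radius $\alpha = 1 - \frac{1}{2q}$ there that produces the $\frac{2q}{2q-1}$ threshold (via $\gamma' \to 1/\alpha$ as the $\NCP_q$ gap $\gamma_0 \to \infty$), which your write-up attributes to \cref{thm:redNCPtoMDP} alone.
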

\begin{restatable}{theorem}{finegrainedSVP}\label{thm:finegrainedSVP}
For any function $f$, fixed real number $p\geq 1$, and every $\gamma\in[1,2^{1/p})$ it holds that, assuming randomized Gap-ETH, there is no randomized algorithm running in time $f(k)\cdot n^{o(k^p)}$ for deciding $\gamma$-$\SVP_p$, where $n$ is the rank of the input lattice and $k$ is the input distance bound.
\end{restatable}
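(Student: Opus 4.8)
The plan is to derive \cref{thm:finegrainedSVP} by composing the FPT reduction of \cref{thm:redCVPtoSVP} from \cref{sec:CVPtoSVP} with the fine-grained hardness of $\gamma$-$\CVP_p$ under randomized Gap-ETH proved by Manurangsi~\cite{Man20}. Recall that \cite{Man20} shows that, assuming randomized Gap-ETH, for every function $f$, every constant $\gamma \geq 1$, and every $p \geq 1$, no randomized algorithm decides $\gamma$-$\CVP_p$ in time $f(k)\cdot n^{o(k^p)}$, where $n$ is the rank of the input lattice and $k$ is the input distance bound. (I am stating this already in the $\norm{\cdot}_p$-parameterization used in this paper; as the footnote following \cref{thm:finegrainedSVP} explains, translating from the $\norm{\cdot}_p^p$-parameterization of \cite{Man20} is exactly what turns their exponent $o(k)$ into $o(k^p)$.) Given this, it suffices to transfer the lower bound from $\CVP_p$ to $\SVP_p$, for which the only quantitative fact that needs checking is that the relevant reduction inflates the distance parameter by at most a constant factor and the rank by at most a polynomial factor. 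Note that the restriction $\gamma \in [1,2^{1/p})$ in the statement comes entirely from the reduction: \cite{Man20} gives hardness of $\gamma$-$\CVP_p$ for all constants $\gamma \geq 1$.

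The feature of \cref{thm:redCVPtoSVP} I would use is that it provides a randomized FPT reduction, correct with two-sided error and for every constant $\gamma \in [1, 2^{1/p})$, that maps a $\gamma$-$\CVP_p$ instance of rank $n$ with distance bound $k$ to a $\gamma$-$\SVP_p$ instance of rank $n' = \poly(n,k)$ with distance bound $k' = O(k)$. Suppose, for contradiction, that for some function $f$ and some $g$ with $g(k) = o(k^p)$ there were a randomized algorithm $\cA$ deciding $\gamma$-$\SVP_p$ in time $f(k)\cdot n^{g(k)}$. Then, on a $\gamma$-$\CVP_p$ instance of rank $n$ with distance bound $k$, run the reduction and then $\cA$ on the resulting instance, which has rank $\poly(n,k)$ and distance bound $ck$ for a constant $c = c(p,\gamma)$. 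After the routine absorption of the polynomial reduction cost and of all factors that depend only on $k$, the total running time takes the form $f'(k)\cdot n^{g'(k)}$ with $g'(k) = O(g(ck))$; and $g'(k) = o(k^p)$ because $g(ck)/k^p = c^p\cdot g(ck)/(ck)^p \to 0$. This would give a randomized algorithm of the forbidden form for $\gamma$-$\CVP_p$, contradicting \cite{Man20}.

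Because this is a composition argument, there is no genuinely hard step here: the one place that could in principle be delicate is confirming that the FPT reduction blows up the parameter by only a constant factor, but this is precisely the $k' = O(k)$ guarantee of \cref{thm:redCVPtoSVP} established in \cref{sec:CVPtoSVP}, so it comes for free and the rest is pure bookkeeping. The only other point to be careful about is randomness: since both \cref{thm:redCVPtoSVP} and $\cA$ have bounded two-sided error, one should first amplify each so that the composed algorithm decides $\gamma$-$\CVP_p$ with probability at least $2/3$, which is standard and costs only a $\poly(n)$ factor. Neither affects the conclusion, and \cref{thm:finegrainedSVP} follows.
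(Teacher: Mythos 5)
Your proposal is correct and takes essentially the same approach as the paper, which proves \cref{thm:finegrainedSVP} in one line by combining \cref{thm:FGhardnessCVP} (Manurangsi's Gap-ETH lower bound for $\gamma$-$\CVP_p$), \cref{thm:redCVPtoSVP} with its $k'=O(k)$ guarantee, and \cref{thm:ldl}; your write-up simply spells out the bookkeeping (rank blowup $\poly(n,k)$, constant-factor parameter inflation, error amplification). The one small imprecision is that \cref{thm:redCVPtoSVP} by itself is a reduction \emph{with advice} (it takes a locally dense lattice as input), so to obtain the advice-free FPT reduction you invoke you must also cite \cref{thm:ldl}, which supplies the randomized construction of the required locally dense lattice.
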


Previously, \cref{thm:finegrainedMDP} was only known to hold for $q=2$, and \cref{thm:finegrainedSVP} was only known to hold for $p>1$ and with approximation factors $\gamma = \gamma(p) < (1/2+1/2^p)^{-1/p}$ that are smaller than those achieved by \cref{thm:finegrainedSVP}; see~\cite{BGKM18,Man20}.

Interestingly, the standard technique of tensoring instances of $\MDP$ or $\SVP$ to boost the approximation factor cannot be used to prove fine-grained hardness results as above, because the distance parameter $k$ is mapped to $k'=k^c$ for $c>1$.
This motivates the search for FPT reductions that preserve the parameter $k$ up to a linear factor (i.e., for which $k'=O(k)$) while simultaneously showing hardness for as large an approximation factor $\gamma$ as possible.
Our pre-tensoring hardness reductions for $\MDP_q$ and $\SVP_p$ in \cref{sec:NCPtoMDP,sec:CVPtoSVP} also yield $k'=O(k)$, and it would be interesting to come up with improved reductions that achieve larger approximation factors. 
Moreover, we note that although we obtain better $\W[1]$-hardness of approximation for $\SVP_p$ with $p >1$ from the reduction in \cref{sec:NCPtoSVPtensor}, we in fact get better fine-grained hardness from our reduction in \cref{sec:CVPtoSVP}. (The reduction in \cref{sec:CVPtoSVP} also has the advantage of showing hardness of $\SVP$ in the $\ell_1$ norm.)

\subsection{Technical overview}\label{sec:tech-overview}
\subsubsection{Parameterized inapproximability of $\gamma$-$\MDP_q$}
\label{sec:mdp-overview}
Inapproximability results for $\MDP$ and $\SVP$ follow the blueprint originally pioneered by Ajtai~\cite{DBLP:conf/stoc/Ajtai98}, Micciancio~\cite{journals/siamcomp/Micciancio00}, and Khot~\cite{Kho05} for lattices and Dumer, Micciancio, and Sudan~\cite{journals/tit/DumerMS03} for codes. In each case, the idea is to reduce the affine versions of the problems ($\NCP$ and $\CVP$, respectively), for which $\NP$-hardness results were long known, to the linear versions ($\MDP$ and $\SVP$, respectively).

\medskip\noindent\textbf{The DMS reduction from $\NCP$ to $\MDP$.}
We start by illustrating the Dumer-Micciancio-Sudan (DMS) reduction from $\NCP$ to $\MDP$, which is based on analogous reductions of Ajtai~\cite{DBLP:conf/stoc/Ajtai98} and Micciancio~\cite{journals/siamcomp/Micciancio00} from $\CVP$ to $\SVP$. 
An instance of $\NCP$ consists of a linear code $\C = \C(G) \subset \F_q^m$ generated by a matrix $G \in \F_q^{m \times n}$ and a target $\vec{t} \in \F_q^m$, and the goal is to minimize the distance $\dist(\vec{t},\C)$ of $\vec{t}$ to its closest codeword, i.e., the minimum Hamming weight of $G\vec{x} - \vec{t}$ over all $\vec{x} \in \F_q^n$, where the Hamming weight of a vector $\vec{v}$ is $\norm{\vec{v}}_0=|\{i:\vec{v}_i\neq 0\}|$. 
A natural reduction to MDP will produce the instance $\C' = \text{span}(\C,\vec{t})$ generated by $G'  = ( G \mid \vec{t}) \in \F_q^{m \times (n+1)}$. 
If we restrict to codewords $G\vec{x} + \beta\vec{t}$ of $\C'$ that use the target in the combination, i.e., have $\beta \neq 0$, then the minimum distance of such a codeword equals the Hamming distance $\dist(\vec{t},\C)$ of $\vec{t}$ to $\C$. Under this (unreasonable) restriction we have a reduction that preserves the objective value. The obvious trouble though is that $\C$ (and hence $\C'$) might have short codewords of weight much smaller than $\dist(\vec{t},\C)$. In this case, the minimum distance of $\C'$ will equal the distance of $\C$, and have nothing to do with $\vec{t}$. Note, however, that this reduction \emph{does} work if $\lambda(\C) > k$, where $\lambda(\C)=\min_{\vec{c}\in\C\setminus\{\vec{0}\}}\norm{\vec{c}}_0$ is the minimum Hamming weight of $\C$. 
Further, starting from a gap-$\gamma$ version of $\NCP$ asking if $\dist(\vec{t},\C) \le k$ or $\dist(\vec{t},\C) > \gamma k$, we would get hardness of a gap-$\gamma$ version of $\MDP$ if $\lambda(\C) > \gamma k$. 

A natural goal is therefore to increase the distance of $\C$ without increasing the proximity parameter in $\NCP$ by too much. This was achieved in~\cite{journals/tit/DumerMS03} by encoding the message
according to $\C$ as well as a second code $\tilde{\C} \subset \F_q^{m'}$ with generator matrix $\tilde{G} \in \F_q^{m' \times n'}$ with large distance, say $D$.
Further, $\tilde{\C}$ will be a \emph{locally dense code} in the sense that one can find a ``bad list decoding configuration" comprising a center $\vec{s} \in \F_q^{m'}$ that has a large number of codewords of $\tilde{\C}$ within distance $\alpha D$ for some $\alpha < 1$; we call $\alpha$ the \emph{relative radius} of the locally dense code. (One can in fact efficiently construct such locally dense codes with any constant relative radius $\alpha > 1/2$ using randomness~\cite{journals/tit/DumerMS03}.) The number of codewords will be so large that one can sample a linear map $T$ that with high probability projects these codewords \emph{onto} $\F_q^n$. If $\tilde{G}$ is the generator matrix of the locally dense code $\tilde{\C}$, the reduction, which will use randomness to pick both the center $\vec{s}$ and the projection $T$, will produce the instance of $\MDP$ generated by 
\begin{equation}
\label{eq:DMS}
 \begin{pmatrix}
G T \tilde{G}  & \vec{t} \\
\tilde{G} & \vec{s}
\end{pmatrix} .
\end{equation}
The completeness of the reduction follows because for any $\vec{x} \in \F_q^n$ that might satisfy $\norm{G\vec{x}-\vec{t}}_0 \le k$, there will be a codeword $\tilde{G} \vec{y} \in \tilde{\C}$ within distance $\alpha D$ from $\vec{s}$ that projects to $\vec{x}$ under $T$. Thus multiplying the generator matrix in \cref{eq:DMS} by $(\vec{y}^T, -1)^T$
will yield a nonzero codeword of weight at most $k + \alpha D$.  Since the distance of $\tilde{\C}$ is $D$, codewords which don't use the last column of \cref{eq:DMS} will have Hamming weight at least $D$. If $\alpha D + k < D$, which is possible to ensure provided $\alpha < 1$, we get a gap. 

\medskip\noindent\textbf{Challenges in the FPT setting.}
It is reasonable to wonder whether the DMS reduction above works directly in the FPT setting.
However, as already pointed out in~\cite[Section 2.1]{BBEGSLMM21}, one quickly runs into some obstacles.
Indeed, the locally dense codes used in~\cite{journals/tit/DumerMS03} have minimum distance $D$ which depends on the input code dimension, and this is necessary to ensure that we can sample the linear map $T$ with the desired properties.
This is because the existence of $T$ implies that there must be at least $|\F_q^n|=q^n$ codewords in $\tilde{\C}$ of Hamming weight at most $\alpha D$.
Since the distance threshold of the resulting $\MDP$ instance is $k'=\alpha D+k$, it follows that $k'$ depends on the input code dimension $n$, and so the DMS reduction is not FPT.

To overcome these issues,~\cite{BBEGSLMM21} modify both the problem they reduce from as well as the reduction itself.
First, instead of reducing from $\NCP$ to $\MDP$, they reduce from a variant of $\NCP$ they call the \emph{Sparse Nearest Codeword Problem} (SNCP), where the Hamming weight of the coefficient vector realizing the nearest codeword is also taken into account.
More precisely, the objective function $\dist(\vec{t},\C)$ of $\NCP$ is replaced by
\begin{equation*}
    \min_{\vec{x}\in\F_q^n}(\|G\vec{x}-\vec{t}\|_0 +\|\vec{x}\|_0),
\end{equation*}
where $\C=\C(G)$.
It is not hard to reduce $\NCP$ to SNCP in the FPT setting, and this allows~\cite{BBEGSLMM21} to avoid having to sample the linear map $T$.
Second, they replace locally dense codes by another variant which they call \emph{locally suffix dense codes} (LSDCs).
These are codes $\tilde{\C}\subseteq\F_q^{m'}$ with minimum distance $D$ and generator matrix
\begin{equation*}
    \begin{pmatrix}
I_n   & 0 \\
\tilde{G}_1 & \tilde{G}_2
\end{pmatrix} \in\F_q^{m'\times n'}
\end{equation*}
which have the property that, given any prefix $\vec{p}\in\F_q^{n}$, for most ``suffix centers'' $\vec{s}\in\F_q^{m'-n}$ there is a suffix $\vec{u}\in\F_q^{m'-n}$ within Hamming distance $\alpha D$ of $\vec{s}$ such that $(\vec{p},\vec{u})\in\tilde{\C}$.
With the help of these notions,~\cite{BBEGSLMM21} consider the $\MDP$ instance generated by
\begin{equation*}
 \begin{pmatrix}
G   & 0 & \vec{t} \\
I_n & 0 & 0\\
\tilde{G}_1 & \tilde{G}_2 & \vec{s}
\end{pmatrix},
\end{equation*}
where $\vec{s}$ is sampled uniformly at random from $\F_q^{m'-n}$.
The proof that this reduction works is similar to the one sketched above for the DMS reduction.
The main challenge is to efficiently construct LSDCs with appropriate parameters, in particular with minimum distance $D$ \emph{independent} of $m'$, $n'$, and $n$.
Unfortunately, known constructions of locally dense codes do not yield LSDCs with the desired properties.
In the binary setting $q=2$,~\cite{BBEGSLMM21} showed that one can take $\tilde{C}$ to be a binary BCH code~\cite{Hoc59,BC60} with design minimum distance $D$.
This ingenious approach allows them to prove that $\gamma$-$\MDP_2$ is $\W[1]$-hard.

It is instructive to discuss more precisely why the choice of binary BCH codes as LSDCs works, and why it cannot be extended to other finite fields.
Binary BCH codes with minimum distance $D$ have codimension $\approx
    \left\lfloor \frac{D-1}{2}\right\rfloor \log(m'+1)$.
The crucial fact that makes the counting analysis of~\cite{BBEGSLMM21} go through is that $\left\lfloor \frac{D-1}{2}\right\rfloor$ is also the \emph{unique decoding radius} for the binary BCH code, i.e., Hamming balls of this radius centered on BCH codewords are disjoint.
In other words, binary BCH codes almost meet the sphere packing bound.
One would hope that replacing binary BCH codes with $q$-ary BCH codes would suffice to show $\W[1]$-hardness of $\gamma$-$\MDP_q$ more generally.
However, $q$-ary BCH codes with minimum distance $D$ have codimension $\approx \lfloor (D-1)(1-1/q)\rfloor\log_q(m'+1)$ (see \cref{thm:bch}), \emph{while the unique decoding radius remains $\left\lfloor \frac{D-1}{2}\right\rfloor$}.
Put differently, $q$-ary BCH codes for $q>2$ are no longer close to the sphere packing bound, which breaks the analysis from~\cite{BBEGSLMM21}.  In fact,  for $q > 2$, it is not known if there exist $q$-ary codes with
rate vs.\ distance trade-off close to the sphere packing bound. 
Therefore, it seems challenging to make the approach from~\cite{BBEGSLMM21} work as is over $\F_q$, for $q>2$.

\medskip\noindent\textbf{Our approach: Khot for codes.}
We succeed in overcoming the barriers that~\cite{BBEGSLMM21} faced and establish the $\W[1]$-hardness of $\gamma$-$\MDP_q$ for arbitrary finite fields $\F_q$ via a different and arguably simpler (direct) reduction from $\NCP$ to $\MDP$.
Our key insight is to adapt Khot’s reduction~\cite{Kho05} from $\CVP$ to $\SVP$ to the coding-theoretic setting. We are able to meet the requirements of such a reduction with locally dense codes constructed from $q$-ary BCH codes.

This approach is quite natural. In fact, early lecture notes of Khot~\cite{Kho04} showed how to use this strategy to reduce from $\NCP$ to $\MDP$ in the special case of binary codes. Our reduction is more general and requires more careful analysis in that it works with arbitrary locally dense codes with constant relative radius $\alpha < 1$, works over $\F_q$ and not just $\F_2$, and requires a more careful analysis of the the distance bound $k'$ in the output $\MDP$ instance as a function of the distance bound $k$ in the input $\NCP$ instance. However, the core idea is the same.

More precisely, given an instance $(G,\vec{t},k)$ of $\gamma$-$\NCP_q$ with $G\in\F_q^{m\times n}$ and $\vec{t}\in\F_q^m$ and an appropriate locally dense code $(\tilde{G},\vec{s})$ with $\tilde{G}\in\F_q^{m'\times n'}$ and $\vec{s}\in\F_q^{m'}$, we consider the intermediate code $\Cint$ spanned by the generator matrix
\begin{equation*}
    \Gint = \begin{pmatrix}
        G & 0 & -\vec{t}\\
        0 & \tilde{G} & -\vec{s}
    \end{pmatrix}.
\end{equation*}
This is analogous to the intermediate lattice introduced in Khot's reduction~\cite{Kho05} from $\CVP$ to $\SVP$, with the difference being that we replace the $\CVP$ instance by an $\NCP$ instance and the locally dense \emph{lattice} by a locally dense \emph{code}.
Note that it may happen that $\Cint$ contains low weight vectors even when $(G,\vec{t},k)$ is a NO instance of $\gamma$-$\NCP_q$.
This is, however, not a show-stopper, as it in fact suffices to show that there are \emph{many more} low weight vectors in $\Cint$ when $(G,\vec{t},k)$ is a YES instance of $\gamma$-$\NCP_q$ than when $(G,\vec{t},k)$ is a NO instance.
Indeed, if this holds then we can \emph{sparsify} $\Cint$ by intersecting it with an appropriate random code $\Crand$ so that, with high probability, all low weight vectors are eliminated in the NO case, but at least one low weight vector survives in the YES case.
Again, this is analogous to the lattice sparsfication performed in Khot's reduction~\cite{Kho05},
Finally, the $\gamma'$-$\MDP_q$ instance is obtained by computing a generator matrix $\Gfinal$ of $\Cfinal=\Cint\cap\Crand$ and outputting $(\Gfinal, k')$ for some appropriate $k'$.

To guarantee that the reduction is FPT, we need to ensure that $k'\leq f(k)$ for some function $f$.  In fact, in our reduction $k$ only increases by a linear factor, i.e., we get $k' \leq f(k) = O(k)$.
We briefly sketch how to establish the desired properties of $\Cint$ and choose $k'$.
Suppose that $(\tilde{G},\vec{s})$ is a locally dense code with minimum distance $D$ and such that there are at least $N$ vectors $\vec{y}$ satsifying $\|\tilde{G}\vec{y}-\vec{s}\|_0\leq \alpha D$ for some $\alpha\in(1/2,1)$.
If $(G,\vec{t},k)$ is a YES instance of $\gamma$-$\NCP_q$, i.e., there exists $\vec{x}$ such that $\|G\vec{x}-\vec{t}\|_0\leq k$, then multiplying $\Gint$ by $(\vec{x},\vec{y},1)^T$ yields a codeword of weight at most $k'= \alpha D+k$.
As a result, there are at least $N$ vectors in $\Cint$ of weight at most $k'$, which we call \emph{good}.
On the other hand, if $(G,\vec{t},k)$ is a NO instance of $\gamma$-$\NCP_q$ and $D> \gamma k$, then it is not hard to see that every codeword of weight at most $\gamma k$ in $\Cint$ is of the form $\Gint (\vec{x},\vec{0},0)^T$, and so there are at most $q^n$ such \emph{annoying} vectors in $\Cint$, where $n=\dim(\C(G))$. (The ``good'' versus ``annoying'' vectors terminology was also introduced in~\cite{Kho05}.) 

We conclude that the reduction works and is FPT if we are able to construct a $q$-ary locally dense code $(\tilde{G},\vec{s})$ as above under the constraints that (i) $D=g(k)>\gamma k$, (ii) $k'=\alpha D+k\ll \gamma k$, and (iii) $N\gg q^n$ (so that there many more good vectors in YES instances than annoying vectors in NO instances and the sparsification step works).
While the approach of~\cite{BBEGSLMM21} described above required $q$-ary codes of codimension $\approx \frac{D}{2}\log_q(m')$, which are not known to exist for $q>2$, we show that to construct locally dense codes satisfying our constraints it is enough to consider $q$-ary codes with minimum distance $D\approx \gamma k$ and codimension $\approx \beta D\log_q(m')$ for \emph{any} $\beta<1$!
Therefore, we can use $q$-ary BCH codes of length $m'=\poly(m)$ with design minimum distance $D\approx \gamma k$, which have codimension $\approx D(1-1/q)\log_q(m')$ for any prime power $q$.

This approach shows $\W[1]$-hardness of $\gamma'$-$\MDP_q$ for some approximation factor $\gamma'>1$; in fact, we get hardness with $\gamma' \approx 1/\beta \approx 1/\alpha$.
We can then amplify this approximation factor $\gamma'$ in a standard manner via \emph{tensoring} to obtain $\W[1]$-hardness of $\gamma''$-$\MDP_q$ for every $\gamma''\geq 1$.
For more details, see \cref{sec:NCPtoMDP}.

\subsubsection{Parameterized inapproximability of $\gamma$-$\SVP_p$}
\label{sec:svp-overview}

We first define locally dense lattices, which are analogous objects to locally dense codes, and which are important both for understanding the issues with~\cite{BBEGSLMM21} and our ways of handling them. A \emph{locally dense lattice} (with respect to the $\ell_p$ norm) is a lattice $\lat \subset \R^m$ together with a shift $\vec{s} \in \R^m$ such that $\lat - \vec{s}$ contains many vectors of $\ell_p$ norm at most $\alpha \mindistp(\lat)$ for some constant $\alpha = \alpha(p) \in (1/2, 1)$, where $\mindistp(\lat)=\min_{\vec{v}\in\lat\setminus\{\vec{0}\}}\|\vec{v}\|_p$. 
As is the case for locally dense codes, we call $\alpha$ the \emph{relative radius} of the corresponding locally dense lattice.

As in \cref{sec:mdp-overview}, we start by explaining the original approach from~\cite{BBEGSLMM21} towards showing $\W[1]$-hardness of $\gamma$-$\SVP_p$ for $p>1$ and some approximation factor $\gamma>1$, and why it fails to resolve the problems we tackle.
To recall,~\cite{BBEGSLMM21} proved that $\gamma$-$\CVP_p$ is $\W[1]$-hard for every fixed $p,\gamma\geq 1$.
Then, they simply noted that Khot's initial reduction~\cite{Kho05} from $\CVP_p$ to $\SVP_p$ (which is similar to our FPT reduction from $\NCP$ to $\MDP$ discussed in \cref{sec:mdp-overview}) is itself an FPT reduction if the parameters of the locally dense lattice from~\cite{Kho05} (which is based on binary BCH codes) are chosen appropriately.
Combining this observation with the $\W[1]$-hardness of $\gamma$-$\CVP_p$ immediately yields that $\gamma'$-$\SVP_p$ is $\W[1]$-hard for some approximation factor $\gamma'=\gamma'(p)>1$.
However, despite achieving this nice result, the approach of~\cite{BBEGSLMM21} has two significant shortcomings.

\medskip\noindent\textbf{Showing inapproximability of $\SVP$ in all $\ell_p$ norms (including $\ell_1$).}
The first limitation of the approach in~\cite{BBEGSLMM21} is the use of Khot's locally dense lattices, which do not suffice to show either $\NP$- or $\W[1]$-hardness of $\SVP$ in the $\ell_1$ norm.
More specifically, Khot's locally dense lattices have relative radius $\alpha = \alpha(p) > (1/2 + 1/2^p)^{1/p}$, which suffices to show $\W[1]$-hardness of parameterized $\gamma$-$\SVP_p$ (and $\NP$-hardness of non-parameterized $\gamma$-$\SVP_p$) for any
\[
\gamma' = \gamma'(p) < 1/\alpha(p) < (1/2 + 1/2^p)^{-1/p} \ \text{,}
\]
and no better. Plugging $p = 1$ into the right-hand side of this equation shows that Khot's reduction does not yield hardness even for exact $\SVP_1$ (i.e., for $\gamma'$-$\SVP_1$ with $\gamma' = 1$). Indeed, this issue is what kept Khot's reduction from showing $\NP$-hardness of $\SVP_1$ in~\cite{Kho05} and what kept~\cite{BBEGSLMM21} from showing $\W[1]$-hardness of $\SVP_1$.

Despite Khot's reduction not working, other reductions nevertheless showed $\NP$-hardness of (even approximating) $\SVP_1$. 
Unfortunately, as~\cite{BBEGSLMM21} notes, these reductions fail both because of their use of non-integral lattices and the fact that rounding real-valued lattice bases to integral ones in a black-box way amounts to a non-FPT reduction, since the minimum distance of the resulting lattices will depend on their dimension. (Multiplying rational lattice bases by the least common multiple of their entries' denominators causes a similar problem.)
First, Micciancio~\cite{journals/siamcomp/Micciancio00} showed hardness of $\gamma$-$\SVP_1$ for any $\gamma < 2$ using locally dense lattices constructed from prime number lattices. However, these locally dense lattices are non-integral and even non-rational.\footnote{We note in passing that Micciancio did in fact carefully analyze rounding these locally dense lattices to get integral ones, but emphasize again that this rounding causes the minimum distance of the resulting lattices to depend on their dimension.}
Second, Regev and Rosen~\cite{conf/stoc/RegevR06} showed how to use efficiently computable linear norm embeddings to reduce $\gamma$-$\SVP_2$ to $\gamma'$-$\SVP_p$ for any $p \geq 1$ and any constant $\gamma' < \gamma$. 
Combined with Khot's work~\cite{Kho05}, which showed $\NP$-hardness of approximating $\SVP_2$ to within any constant factor $\gamma$,~\cite{conf/stoc/RegevR06} implies that $\SVP_p$ for any $p$ (and in particular, $\SVP_1$) is $\NP$-hard to approximate within any constant factor as well.
However, the norm embeddings given in~\cite{conf/stoc/RegevR06} use random Gaussian projection matrices, and therefore output non-integral lattices. Moreover, using different, integral distributions for the projection matrices also does not obviously work.

We overcome this first issue of Khot's locally dense lattices not working in the $\ell_1$ norm by instantiating Khot's reduction with different locally dense lattices.
Specifically, we instantiate Khot's reduction with the locally dense lattices constructed in recent work of Bennett and Peikert~\cite{BP22}, which are built from Reed-Solomon codes. 
These locally dense lattices meet all of the requirements necessary for the proof of \cref{thm:hardness-l1}. Namely, they are efficiently constructible; their base lattices $\lat$ are integral; they can be constructed so that $\mindistp(\lat)$ does not depend on the dimension of the input lattice $\lat$; and for $p \in [1, \infty)$ they have $\ell_p$ relative radius $\alpha(p) \approx 1/2^p < 1$. In particular, they have $\ell_1$ relative radius $\alpha(1) \approx 1/2$ (which is essentially optimal by the triangle inequality), and so Khot's reduction shows hardness of $\gamma$-$\SVP_{1}$ for any constant $\gamma < 2$. 
(We again note that the largest approximation factor $\gamma = \gamma(p)$ for which Khot's reduction shows parameterized hardness of $\gamma$-$\SVP_p$ is $\gamma \approx 1/\alpha$, where $\alpha = \alpha(p)$ is the relative radius of the locally dense lattice used, and this is where the bound on the approximation factor $\gamma = \gamma(p)$ in \cref{thm:hardness-l1} comes from.)

\medskip\noindent\textbf{Showing inapproximability of $\gamma$-$\SVP_p$ for all $p >1$ and all $\gamma$.}
The second main shortcoming of the approach in~\cite{BBEGSLMM21} is that it is not clear how to amplify the approximation factor $\gamma > 1$ for which they get $\W[1]$-hardness of $\gamma$-$\SVP_p$ (for $p > 1$), to an arbitrary constant. As in the case of codes, the natural thing to try for amplifying hardness is to take the \emph{tensor product} of the input $\SVP$ instance with itself. The idea of tensoring is, given an instance $(B, k)$ of $\SVP$ as input, to output the $\SVP$ instance $(B \otimes B, k^2)$, where $B \otimes B$ is the Kronecker product of the input basis matrix $B$ with itself. Unfortunately, unlike for codes, tensoring does not work in general for lattices. Indeed, although it always holds that $\lambda_1(\lat(B) \otimes \lat(B)) \leq \lambda_1(\lat(B))^2$, the converse is not always true or even ``close to true''; see, e.g.,~\cite[Lemma 2.3]{HR12}.

Although Haviv and Regev~\cite{HR12} showed that Khot's original $\SVP_2$ instances have properties that \emph{do} in fact allow them to tensor nicely,
this is \emph{not the case} for the $\SVP$ instances obtained in~\cite{BBEGSLMM21}.
Indeed, the (crucial!) subtlety is that the standard $\NP$-hardness proof for approximate $\CVP_p$ proceeds via a reduction from approximate \emph{Exact Set Cover}~\cite{journals/jcss/AroraBSS97,Kho05,HR12}, and the resulting $\CVP_p$ instances enjoy important additional properties that are then inherited by the $\SVP_p$ instances in~\cite{Kho05}.
Parameterized inapproximability of Exact Set Cover is known under the (randomized) Gap-ETH and PIH assumptions, and this is what allowed~\cite{BGKM18} to show parameterized hardness of $\gamma$-$\SVP_2$ for any constant $\gamma \geq 1$; see also~\cref{remark:svp2-tensoring}.
However, it is not currently known whether approximate Exact Set Cover is $\W[1]$-hard, and so~\cite{BBEGSLMM21} generate their $\CVP_p$ instances via a different reduction from (the dual version of) $\NCP_q$ with a suitably large prime $q$ instead.\footnote{The exact version of this problem is known to be $\W[1]$-hard, see~\cite[Section 13.6.3]{CyganBook15}.}
As a result, important properties no longer hold when~\cite{BBEGSLMM21} use these alternative $\CVP_p$ instances to create $\SVP_p$ instances via Khot's reduction.
Namely, it is no longer true that every lattice vector with at least one odd coordinate has large Hamming weight, a property that is needed to ensure that the $\SVP_p$ instance tensors nicely in~\cite{HR12}.

It is also sensible to wonder whether Khot's \emph{augmented tensor product}~\cite{Kho05}, which he introduced in his original work to overcome issues with tensoring, can nevertheless be used to boost the approximation factor of the $\SVP_p$ instances generated in~\cite{BBEGSLMM21}.
However, the augmented tensor product cannot be applied in the FPT setting unless the short lattice vectors in the base $\SVP$ instances also have have short coefficient vectors (i.e., coefficient vectors whose $\ell_p$ norm is independent of lattice dimension). The $\SVP$ instances in~\cite{BBEGSLMM21} do not
seem to have this property.

\medskip\noindent\textbf{Our solution.}
In order to construct $\W[1]$-hard $\SVP$ instances that tensor nicely and thereby prove \cref{thm:hardness-lp-tensor}, we give a reduction directly from approximate $\NCP_2$ to approximate $\SVP_p$ for any $p >1$.\footnote{We could also use $\CVP$ as an intermediate problem in the reduction as is done in~\cite{BBEGSLMM21}, but that does not obviously make the reduction simpler or more modular.}
Our reduction is a variant of the reductions in Khot~\cite{Kho05} and Haviv and Regev~\cite{HR12}, and again we instantiate the reduction with locally dense lattices constructed from binary BCH codes similar to those used by Khot~\cite{Kho05,BBEGSLMM21}.
We emphasize that although the proofs of~\cref{thm:hardness-l1,thm:hardness-lp-tensor} both use variants of Khot's reduction, the key to proving~\cref{thm:hardness-l1} was to instantiate Khot's reduction with different locally dense lattices and the key to~\cref{thm:hardness-lp-tensor} was to reduce from a different $\W[1]$-hard problem.
Moreover, ensuring that the characteristic of the underlying codes in the $\NCP$ instances that we reduce from matches that of the underlying BCH codes in the locally dense lattices that we use seems essential for our analysis. Indeed, our $\NCP$ instances and locally dense lattices both use codes over $\F_2$, whereas~\cite{BBEGSLMM21} reduced from $\NCP$ instances over $\F_q$ for larger prime $q$.

Our reduction allows us to construct $\gamma$-$\SVP_2$ instances with some constant $\gamma > 1$ that meet the sufficient conditions given in~\cite{HR12} to be amplified to $\gamma'$-$\SVP_2$ instances for arbitrarily large constant $\gamma'$. These conditions roughly say that the base lattices $\lat$ in the $\SVP$ instance must be such that all vectors $\vec{v} \in \lat \subseteq \Z^n$ satisfy at least one of the following: (1) $\vec{v}$ has Hamming weight at least $d$ for some distance bound $d$, (2) $\vec{v} \in 2\Z^n$ and $\vec{v}$ has Hamming weight at least $d/4$, or (3) $\vec{v} \in 2\Z^n$ and $\vec{v}$ has very high $\ell_2$ norm. The minimum distances of lattices in which all vectors satisfy either conditions (1) or (2) behave nicely under tensoring, but condition (3) makes the analysis subtle. However, we are essentially able to rely on the analysis in~\cite{HR12}. Moreover, modifying the analysis in~\cite{HR12} a bit additionally allows us to extend our result to $\ell_p$ norms for $p >1$.
(The omission of $p = 1$ is yet again for the same reason as in~\cite{Kho05,BBEGSLMM21}; it is because of the binary BCH-code-based locally dense lattices that we use.)

\subsection{Additional related work}
\label{sec:related-work}

Interest in the complexity of computational problems on codes and lattices more broadly goes back several decades.
We survey the most closely related work here.

\medskip\noindent\textbf{Complexity of $\NCP$ and $\MDP$.}
Berlekamp, McEliece, and van Tilborg~\cite{BMvT78} showed that certain problems related to linear codes, such as the exact version of $\NCP$, are $\NP$-hard.
They also conjectured that the exact version of $\MDP$ is $\NP$-hard.
This conjecture remained open until groundbreaking work of Vardy~\cite{Var97}, who showed that exact $\MDP$ is indeed $\NP$-hard.
Not long after, Dumer, Micciancio, and Sudan~\cite{journals/tit/DumerMS03} showed that approximate $\MDP$ is $\NP$-hard under randomized reductions.
Follow-up work by Cheng and Wan~\cite{CW12}, Austrin and Khot~\cite{AK14}, and Micciancio~\cite{Mic14} showed that approximate $\MDP$ is $\NP$-hard under deterministic reductions.
The unparameterized fine-grained hardness of $\NCP$ and $\MDP$ was recently studied by Stephens-Davidowitz and Vaikuntanathan~\cite{SV19}. 

On the parameterized front, Downey, Fellows, Vardy, and Whittle~\cite{DFVW99} showed, among other things, that the exact version of $\NCP$ is $\W[1]$-hard, and infamously conjectured that $\MDP$ is $\W[1]$-hard.
As discussed above, the status of this conjecture did not budge until the seminal work~\cite{BBEGSLMM21}, where it was shown that $\gamma$-$\NCP_q$ is $\W[1]$-hard for every $\gamma\geq 1$ and prime power $q$, and that $\gamma$-$\MDP_2$ is $\W[1]$-hard for every $\gamma\geq 1$.
Finally, by establishing the parameterized fine-grained hardness of Exact Set Cover and invoking results from~\cite{BGKM18,BBEGSLMM21}, Manurangsi~\cite{Man20} showed that there are no algorithms running in time $n^{o(k)}$ for deciding $\gamma$-$\NCP_q$ (for all constant $\gamma\geq 1$) and $\gamma$-$\MDP_2$ (for some constant $\gamma>1$) assuming Gap-ETH.

\medskip\noindent\textbf{Complexity of $\CVP$ and $\SVP$.}
The study of the complexity of lattice problems was initiated by van Emde Boas~\cite{vanEmdeBoas81}, who showed that $\CVP_2$ was $\NP$-hard. He also showed that $\SVP_{\infty}$ is $\NP$-hard and conjectured that $\SVP_2$ was $\NP$-hard.
This result remained the state-of-the-art until Ajtai~\cite{DBLP:conf/stoc/Ajtai98} extended it to the $\ell_2$ norm under randomized reductions, and a deep line of work soon followed showing progressively stronger hardness of approximation results for $\SVP_p$ in different $\ell_p$ norms~\cite{conf/coco/CaiN98,journals/siamcomp/Micciancio00,journals/tcs/Dinur02,Kho05,HR12,Mic12}.
A recent line of work has also focused on the (unparameterized) fine-grained hardness of approximate $\CVP$ and $\SVP$~\cite{conf/focs/BennettGS17,conf/stoc/AggarwalS18,conf/soda/AggarwalBGS21,conf/innovations/BennettPT22}.

In terms of parameterized hardness, Downey, Fellows, Vardy, and Whittle~\cite{DFVW99} showed that exact $\CVP$ is $\W[1]$-hard, and asked whether $\SVP$ is $\W[1]$-hard.
As was the case for $\MDP$, this question was only settled in~\cite{BBEGSLMM21}, where it was shown that $\gamma$-$\CVP_p$ is $\W[1]$-hard for all $p\geq 1$ and $\gamma\geq 1$, and that $\gamma$-$\SVP_p$ is $\W[1]$-hard for $p>1$ with some $\gamma = \gamma(p) >1$.
From a fine-grained perspective, it was shown by Manurangsi~\cite{Man20} that, assuming Gap-ETH, there are no algorithms running in time $n^{o(k^p)}$, where $n$ is the rank of the input lattice, for deciding $\gamma$-$\CVP_p$ with any $\gamma\geq 1$ for $p\geq 1$ and deciding $\gamma$-$\SVP_p$ with some $\gamma>1$ for all $p>1$.

\subsection{Open problems}
\label{sec:open-problems}

We highlight two interesting directions for future research:
\begin{itemize}
    \item The reductions that we use to prove all of our main theorems are randomized and have two-sided error due to our randomized constructions of locally dense codes and lattices and due to our use of sparsification. It would be a groundbreaking contribution to find ways to derandomize these reductions and obtain deterministic parameterized hardness results for $\MDP$ and $\SVP$.
    We note that when it comes to showing $\NP$-hardness (instead of $\W[1]$-hardness), we know deterministic reductions from $\NCP$ to $\MDP$~\cite{CW12,AK14,Mic14} and randomized reductions with one-sided error from $\CVP$ to $\SVP$~\cite{Mic12}. Additionally, we note that showing deterministic ($\NP$-)hardness of $\SVP$ in the non-parameterized setting is a major open question.
    
    \item We have shown that $\gamma$-$\SVP_p$ is $\W[1]$-hard for any fixed $p>1$ and $\gamma\geq 1$.
    When $p=1$, we showed that $\gamma$-$\SVP_p$ is $\W[1]$-hard when $\gamma\in[1,2)$.
    We leave it as a fascinating open problem to extend our $\W[1]$-hardness result for all $\gamma\geq 1$ to $p=1$ as well.
    This is an important missing piece of our understanding of the parameterized hardness of approximate $\SVP$ in $\ell_p$ norms.
\end{itemize}

\fullornot{
\subsection{Acknowledgements}
We thank Arnab Bhattacharyya, Ishay Haviv, Bingkai Lin, Pasin Manurangsi, Xuandi Ren, and Noah Stephens-Davidowitz for helpful comments and answers to our questions. In particular, we would like to thank Ishay Haviv for sketching how to extend the tensoring-based hardness amplification for $\SVP$ in~\cite{HR12} to general $\ell_p$ norms, and would like to thank Pasin Manurangsi for clarifying the status of $\W[1]$-hardness of approximation results for $\SVP$ in the $\ell_2$ norm.

H.\ Bennett's research was partially supported by NSF Award No.\ CCF-2312297.
M.\ Cheraghchi's research was partially supported by the National Science Foundation under Grants No.\ CCF-2006455, CCF-2107345,
and a CAREER Award CCF-2236931.
V.\ Guruswami's research was supported in part by NSF grants CCF-2228287 and CCF-2210823, a Simons Investigator award, and a UC Noyce Initiative award.
J.\ Ribeiro's research was supported by NOVA LINCS (UIDB/04516/2020) with the financial support of FCT - Fundação para a Ciência e a Tecnologia
and by the NSF grants CCF-1814603 and CCF-2107347 and the following grants of Vipul Goyal: the NSF award 1916939, DARPA SIEVE program, a gift from Ripple, a DoE
NETL award, a JP Morgan Faculty Fellowship, a PNC center for financial services innovation award, and a
Cylab seed funding award.
}{}
\section{Preliminaries}
\label{sec:prelims}

Throughout we use boldface, lower-case letters like $\vec{v}, \vec{x}, \vec{s}, \vec{t}$ to denote column vectors.

\subsection{Probability theory}

We denote random variables by uppercase letters such as $X$, $Y$, and $Z$.
Throughout this work we consider only discrete random variables supported on finite sets.
Given a random variable $X$, we denote its expected value by $\E[X]$ and its variance by $\Var[X]$.
We write the indicator random variable for an event $E$ as $\mathbf{1}_{\{E\}}$.

We will make use of the following standard corollary of Chebyshev's inequality.
For completeness, we provide a short proof.
\begin{lemma}\label{lem:chebyshev}
Let $X_1,\dots,X_N$ be pairwise independent random variables over $\{0,1\}$ such that $\Pr[X_i=1]=p>0$ for $i=1,\dots,N$. 
Then, it holds that
\begin{equation*}
    \Pr\left[\forall i\in[N], X_i=0\right]\leq \frac{1}{pN}.
\end{equation*}
\end{lemma}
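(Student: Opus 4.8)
The plan is to apply Chebyshev's inequality to the sum $S = \sum_{i=1}^N X_i$. First I would record the first two moments of $S$: by linearity of expectation, $\E[S] = pN$, and by pairwise independence of the $X_i$ the variance is additive, so $\Var[S] = \sum_{i=1}^N \Var[X_i] = N p(1-p)$. The only fact about each individual variable I need is that $\Var[X_i] = p(1-p) \le p$ since $X_i$ is $\{0,1\}$-valued with mean $p$.

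Next I would observe that the event $\{\forall i \in [N],\ X_i = 0\}$ is exactly the event $\{S = 0\}$, and since $S \ge 0$ this is contained in the event $\{|S - \E[S]| \ge \E[S]\}$ (indeed $S = 0$ forces $|S - pN| = pN$). Chebyshev's inequality then gives
\begin{equation*}
\Pr[S = 0] \le \Pr\bigl[|S - \E[S]| \ge \E[S]\bigr] \le \frac{\Var[S]}{\E[S]^2} \le \frac{Np}{(Np)^2} = \frac{1}{pN},
\end{equation*}
which is the claimed bound. (Here we use $p > 0$ and $N \ge 1$ so that $\E[S] = pN > 0$ and division is legitimate.)

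There is essentially no obstacle here — the only things to be careful about are that pairwise independence (rather than full independence) genuinely suffices for the variance computation, which is the whole point of invoking this corollary rather than a Chernoff-type bound, and that the slack $\Var[X_i] = p(1-p) \le p$ is exactly what produces the clean $1/(pN)$ form. I would present the argument in just a few lines along the lines above.
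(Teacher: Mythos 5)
Your proof is correct and follows essentially the same approach as the paper's: both apply Chebyshev's inequality to $S=\sum_i X_i$, use pairwise independence to compute $\Var[S]=Np(1-p)\le Np$, and bound $\Pr[S=0]\le\Var[S]/(pN)^2\le 1/(pN)$. The only difference is that you spell out a few of the intermediate observations more explicitly.
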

\begin{proof}
    Let $X=\sum_{i=1}^N X_i$ and note that $\E[X]=pN$.
    We have
    \begin{equation*}
        \Pr\left[X=0\right] \leq \Pr\left[\left|X-\E[X]\right|\geq pN\right]
        \leq \frac{\Var[X]}{(pN)^2}
        =\frac{(1-p)pN}{(pN)^2}
        \leq \frac{1}{pN},
    \end{equation*}
    where the second inequality follows from Chebyshev's inequality and the equality holds due to the pairwise independence of the $X_i$'s.
\end{proof}

\subsection{Parameterized promise problems and FPT reductions}

We recall basic definitions  related to parameterized promise (decision) problems and Fixed-Parameter Tractable (FPT) reductions between such problems.
We refer the reader to~\cite{DF99} for an excellent discussion on parameterized algorithms and reductions.

\begin{definition}[Parameterized language]

A set $\cS\subseteq\Sigma^*\times \N$ is said to be a \emph{parameterized language} (with respect to the rightmost coordinate).
    
\end{definition}

\begin{definition}[Parameterized promise problem]
    The tuple of parameterized languages $\Pi=(\Pi_{\textnormal{YES}},\Pi_{\textnormal{NO}})$ is said to be a \emph{parameterized promise problem} if $\{x:(x,k)\in\Pi_{\textnormal{YES}}\}\cap \{x:(x,k)\in\Pi_{\textnormal{NO}}\}=\emptyset$ for every parameter choice $k\in\N$.
\end{definition}

\begin{definition}[Randomized FPT reductions with two-sided error]
    We say that a randomized algorithm is a \emph{randomized FPT reduction with two-sided error} from the parameterized promise problem $\Pi$ to the parameterized promise problem $\Pi'$ if the following properties hold:
    \begin{itemize}
        \item On input $(x,k)$, the algorithm runs in time at most $T(k)\cdot |x|^c$ for some computable function $T(\cdot)$ and some absolute constant $c>0$ and outputs a tuple $(x',k')$;
        
        \item It holds that $k'\leq g(k)$ for some computable function $g(\cdot)$;
        
        \item If $(x,k)\in\Pi_{\textnormal{YES}}$, it holds that $\Pr\left[(x',k')\in \Pi'_{\textnormal{YES}}\right]\geq 2/3$, where the probability is taken over the randomness of the algorithm;
        
        \item If $(x,k)\in\Pi_{\textnormal{NO}}$, it holds that $\Pr\left[(x',k')\in \Pi'_{\textnormal{NO}}\right]\geq 2/3$, where the probability is taken over the randomness of the algorithm.
    \end{itemize}
\end{definition}

Note that if there is a randomized FPT reduction with two-sided error from $\Pi$ to $\Pi'$, it follows that there is a randomized FPT algorithm (i.e., an algorithm running in time $T(k)\cdot|x|^c$ for some computable function $T(\cdot)$ on input an instance $(x,k)$) for deciding $\Pi$ with two-sided error whenever there is such an algorithm for deciding $\Pi'$.
The success probability of any such algorithm can be amplified in a standard manner.

In this work we focus on the parameterized complexity class $\W[1]$.
It is well-known that the parameterized Clique problem, in which we are given as input a graph $G$ and a positive integer $k$ (with $k$ being the parameter of interest) and must decide whether $G$ contains a clique of size $k$, is $\W[1]$-complete. 
That is, parameterized Clique is in $\W[1]$ and it is $\W[1]$-hard, i.e., there is an FPT reduction from every problem in $\W[1]$ to it (see, e.g.,~\cite[Theorem 13.18]{CyganBook15}).
Therefore, one may \emph{define} $\W[1]$ to be the class of all parameterized problems with FPT reductions to Clique.
We refrain from discussing $\W[1]$ in more detail; for an extensive discussion, see~\cite[Chapters 9 to 11]{DF99}.

It is widely believed that $\W[1]$ problems cannot be decided by FPT algorithms, even if randomness with two-sided error is allowed.
We say that a parameterized promise problem $\Pi'$ is \emph{$\W[1]$-hard under randomized reductions} if there is a randomized FPT reduction with two-sided error from a $\W[1]$-hard problem $\Pi$ to $\Pi'$.
The existence of such a reduction shows that $\Pi'$ is likely intractable from a parameterized perspective.

\subsection{Coding problems}\label{sec:coding-problems}
Let $\C(G) := \set{G\vec{x} : \vec{x} \in \F_q^n}$ denote the code generated by the generator matrix $G \in \F_q^{m \times n}$ (note that here $\C(G)$ is the $\F_q$-span of the \emph{columns} of $G$).
Alternatively, we may see $\C(G)$ as the kernel of the parity-check matrix $H\in\F_q^{(m-n)\times m}$ which spans the dual subspace of $\C(G)$, i.e., $\C(G)=\{\vec{y}\in\F_q^m:H\vec{y}=\vec{0}\}$.
We call sets of the form $\vec{u}+\C=\{\vec{u}+\vec{c}:\vec{c}\in\C\}$ with $\vec{u}\in\F_q^m$ the cosets of $\C$. 
We write $\|\vec{x}\|_0=|\{i\in[m]:\vec{x}_i\neq 0\}$ for the Hamming weight of a vector $\vec{x}\in\F_q^m$ and call $\|\vec{x}-\vec{y}\|_0$ the Hamming distance between $\vec{x}$ and $\vec{y}$.
For a code $\C \subseteq \F_q^m$, let $\lambda(\C)=\min_{\vec{c}\in\C\setminus\{\vec{0}\}}\|\vec{c}\|_0$ be the Hamming minimum distance of $\C$, and let $\dist(\vec{y}, \C) := \min_{\vec{c} \in \C} \norm{\vec{y} - \vec{c}}_0$ denote the Hamming distance between a vector $\vec{y} \in \F_q^m$ and $\C$.  Let $\B_{q, m}(r) =\{\vec{x}\in\F_q^m: \|\vec{x}\|_0\leq r\}$ denote the Hamming ball of radius $r$ in $\F_q^m$.

We define two fundamental promise problems from coding theory.

\begin{definition}[Nearest Codeword Problem]\label{def:ncp}
The \emph{$\gamma$-approximate Nearest Codeword Problem over $\F_q$ ($\gamma$-$\NCP_q$)} is the decisional promise problem defined as follows.
On input a generator matrix $G \in \F_q^{m \times n}$, target $\vec{t} \in \F_q^m$, and distance parameter $k \in \Z^+$, the goal is to decide between the following two cases when one is guaranteed to hold:
\begin{itemize}
    \item (YES) $\dist(\C(G), \vec{t}) \leq k$,
    \item (NO) $\dist(\C(G), \vec{t}) > \gamma k$.
\end{itemize}
The parameter of interest is $k$.
\end{definition}

\begin{remark}
A scaling argument shows that the NO case in \cref{def:ncp} is equivalent to
\begin{itemize}
    \item (NO) $\dist(\C(G), \alpha\vec{t}) > \gamma k$ for any $\alpha\in\F_q\setminus\{0\}$.
\end{itemize}
\end{remark}

The following results establish the $\W[1]$-hardness and parameterized fine-grained hardness of NCP.
\begin{theorem}[\protect{\cite[Theorem 5.1, adapted]{BBEGSLMM21}}]\label{thm:hardNCP}
For any prime power $q\geq 2$ and real number $\gamma\geq 1$ it holds that $\gamma$-$\NCP_q$ is $\W[1]$-hard.
\end{theorem}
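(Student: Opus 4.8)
The plan is to prove the statement by an FPT reduction from a $\W[1]$-hard \emph{gapped} graph problem, along the lines of~\cite{BBEGSLMM21} (which builds on Lin's work~\cite{journals/jacm/Lin18}). For the exact case $\gamma=1$, $\W[1]$-hardness of $\NCP_q$ is classical~\cite{DFVW99}, so the real content is producing an arbitrarily large constant gap while keeping the distance bound of the output instance a function of $k$ only. I would start from a gapped version of Multicolored Clique, phrased as a ``label cover''-style consistency problem: the vertex set is partitioned into $k$ color classes $V_1,\dots,V_k$, a candidate solution picks one vertex per class, and the task is to distinguish the case that the picked vertices form a $k$-clique from the case that every candidate violates a constant fraction of the $\binom{k}{2}$ pairwise adjacency constraints. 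Such a gapped problem is $\W[1]$-hard either as a black box (via Lin-style threshold-graph composition, or from randomized Gap-ETH/PIH as in~\cite{BGKM18}), or else the gap can be manufactured inside the code construction itself; either way I would treat the gapped graph problem as the starting point.

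Next I would encode such an instance as a $\gamma$-$\NCP_q$ instance $(G,\vec{t},k')$ in the sense of~\cref{def:ncp}. Work over $\F_q$ with a block of coordinates per color class and per pair of classes. The columns of $G$ span the $\F_q$-linear space of ``formal assignments'': a length-$|V_i|$ selector sub-block per class, arranged so that choosing a single vertex has a fixed small Hamming cost, and for each pair $\{i,j\}$ a sub-block recording the induced pair of choices, onto which we append a small error-correcting gadget code together with consistency coordinates tying it to the two selector blocks. The target $\vec{t}$ encodes the pattern ``one vertex per class, and every induced pair is an edge'', so that $\dist(\C(G),\vec{t})$ equals a fixed base cost (from the selectors/consistency part) plus a quantity proportional to the number of violated adjacency constraints, each violation being charged a positive amount controlled by the gadget distance. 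A genuine $k$-clique then achieves the base distance $k'=f(k)$, whereas in the gap case every assignment leaves a constant fraction of the pairwise constraints violated, forcing distance $>\gamma k'$ once the gadget distance is taken large enough; to reach an \emph{arbitrary} constant $\gamma$ one amplifies the gap at the graph level (or, equivalently, composes/iterates the gadget).

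Finally, I would verify this is a valid randomized FPT reduction: $G$ has dimensions polynomial in the size of the input graph, and the output distance parameter $k'=f(k)$ depends on $k$ alone, so both the running-time and parameter-growth requirements hold; randomness enters only in the construction of the gadget codes and, if used, in the gap-amplification step. The main obstacle is precisely the gap: unlike in the $\NP$-hardness setting, where the PCP theorem supplies a gap for free, in the parameterized world one must either import Lin-style combinatorial gap machinery or build the gap from scratch via the gadget, and in either case one has to ensure that \emph{every} composition or amplification step multiplies the parameter only by a function of $k$ (never by anything depending on the instance size). This is what makes the choice of gadget-code parameters and the accompanying bookkeeping delicate, especially since one wants the statement uniformly for all prime powers $q$, including small fields such as $\F_2$.
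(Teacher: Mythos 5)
Your proposal has a genuine gap at the very first step. You propose to reduce from a gapped version of Multicolored Clique --- distinguishing ``a $k$-clique exists'' from ``every choice of one vertex per class violates a constant fraction of the $\binom{k}{2}$ pair constraints'' --- and you assert that this gapped problem is $\W[1]$-hard ``either as a black box (via Lin-style threshold-graph composition, or from randomized Gap-ETH/PIH).'' But $\W[1]$-hardness of exactly this gapped problem \emph{is} the Parameterized Inapproximability Hypothesis (PIH), which the paper explicitly describes (in the footnote to \cref{remark:svp2-tensoring}) as an open conjecture strictly stronger than $\W[1]\neq\FPT$. Lin-style threshold-graph composition does not establish PIH: it produces a gap for different source problems (variants of one-sided $k$-Biclique), not for Multicolored Densest Subgraph, and if it did, PIH would not be open. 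So your reduction, as written, proves $\W[1]$-hardness of $\gamma$-$\NCP_q$ only \emph{under PIH or Gap-ETH}, not under $\W[1]\neq\FPT$ --- which quietly strengthens the hypothesis and is exactly the distinction the paper is at pains to highlight in \cref{remark:svp2-tensoring} when correcting a related misstatement for $\SVP$. The hedge ``or else the gap can be manufactured inside the code construction itself'' is the right direction, but you neither describe how, nor acknowledge that this (and not gapped Multicolored Clique) is the load-bearing step.

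It is also worth pointing out that the paper does not actually reprove this statement: it imports~\cite[Theorem 5.1]{BBEGSLMM21} (stated there for the equivalent MLD formulation and for prime $q$), supplies the routine translation between parity-check and generator-matrix views of the problem, and observes that the argument in their Section 5.2 --- including their Definition 5.3 --- goes through verbatim over prime-power fields. The cited argument in~\cite{BBEGSLMM21} starts from a non-gapped $\W[1]$-hard graph problem and creates the $\gamma$-gap on the coding side, building on Lin's techniques, precisely to avoid assuming a gapped graph hypothesis. That gap-creation step is the heart of the theorem, and it is the piece missing from your sketch; everything else you describe (selector blocks per color class, pair-consistency gadgets, $k'=f(k)$ bookkeeping) is standard and would indeed recover the classical exact $\W[1]$-hardness of~\cite{DFVW99}, but not the constant-gap statement being claimed.
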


\begin{theorem}[\protect{\cite[Corollary 5, adapted]{Man20}}]\label{thm:FGhardnessNCP}
For any fixed prime power $q$ and $\gamma\geq 1$ and any function $T$, assuming randomized Gap-ETH, there is no randomized algorithm running in time $T(k) n^{o(k)}$ which decides $\gamma$-$\NCP_q$ with probability at least $2/3$, where $n$ is the dimension of the input code.
\end{theorem}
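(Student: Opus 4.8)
The statement is, essentially verbatim, Manurangsi's \cite[Corollary 5]{Man20}, and the plan is to obtain it by composing two black boxes, recording only the straightforward adaptations needed to match the notation and quantifiers here. The first black box is the Gap-ETH-based fine-grained lower bound for the \emph{source} problem: applying the probabilistically checkable proof machinery of \cite{BGKM18} on top of randomized Gap-ETH yields a constant-gap version of the colored Clique problem (equivalently, Densest-$k$-Subgraph) for which, assuming randomized Gap-ETH, no randomized algorithm decides the gap problem in time $T(\kappa)\cdot N^{o(\kappa)}$ with probability $\ge 2/3$, where $N$ is the number of vertices and $\kappa$ the solution size. The second black box is the reduction of \cite{BBEGSLMM21} underlying \cref{thm:hardNCP}: for every fixed prime power $q$ and every constant $\gamma\ge 1$, it maps an instance of the gap source problem with $N$ vertices and parameter $\kappa$ to an instance of $\gamma$-$\NCP_q$ of dimension $n=\poly(N)$ (the polynomial depending only on $q$) and distance bound $k=O(\kappa)$, the hidden constants depending only on $q$ and $\gamma$; crucially, this is a \emph{linear} FPT reduction, meaning $k$ grows only linearly in $\kappa$.

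Given these ingredients, the theorem follows by composition. Suppose, for a fixed prime power $q$, a constant $\gamma\ge 1$, and some function $T$, there were a randomized algorithm $\cA$ deciding $\gamma$-$\NCP_q$ in time $T(k)\cdot n^{o(k)}$ with probability $\ge 2/3$. Running the \cite{BBEGSLMM21} reduction and then $\cA$ (and amplifying to, say, probability $\ge 2/3$ with $O(1)$ independent repetitions) decides the source gap problem on $N$-vertex instances with parameter $\kappa$ in time $\poly(N)+T(O(\kappa))\cdot(\poly(N))^{o(\kappa)}=T'(\kappa)\cdot N^{o(\kappa)}$ for a suitable function $T'$, where we used $k=O(\kappa)$, $n=\poly(N)$, and $(\poly(N))^{o(\kappa)}=N^{o(\kappa)}$. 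This contradicts the first black box, proving the claim; this is exactly the composition carried out in \cite{Man20}.

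The only genuine point to check — everything else being routine bookkeeping — is the \emph{linearity} of the parameter blow-up in the \cite{BBEGSLMM21} reduction: if it sent $\kappa\mapsto k=\Theta(\kappa^2)$ rather than $k=\Theta(\kappa)$, we would only recover an $N^{o(\sqrt{k})}$-type lower bound. The reduction does enjoy this linearity (it is the same feature exploited by our own reductions in \cref{sec:NCPtoMDP,sec:CVPtoSVP}), and it is what \cite{Man20} verifies. One should also confirm that the reduction leaves the field fixed: since $q$ is fixed at the outset and the gap source problem can be taken over the appropriate alphabet, the output code lies in $\F_q^n$ for the same $q$, so the ``fixed prime power $q$'' quantifier is respected, and the analogous check handles the range of $\gamma$. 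With these observations, the statement is a restatement of \cite[Corollary 5]{Man20} in the present paper's notation.
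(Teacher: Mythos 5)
The theorem you are proving is stated in the paper as an imported result, cited verbatim as \cite[Corollary 5, adapted]{Man20}; the paper itself gives no proof of it, only the brief note (following \cref{thm:hardNCP}) explaining why the adaptation from prime $q$ to prime power $q$ is valid. So there is no ``paper's own proof'' to compare against. That said, your reconstruction of the underlying argument has the right shape: a fine-grained Gap-ETH lower bound for a source problem, composed with a reduction to $\gamma$-$\NCP_q$ whose parameter blow-up is \emph{linear}, and you correctly identify the linearity of $\kappa\mapsto k$ as the genuinely load-bearing point (this is exactly the feature the present paper highlights in \cref{sec:intro} when it notes that its own reductions preserve $k'=O(k)$ precisely so that \cref{thm:FGhardnessNCP,thm:FGhardnessCVP} can be leveraged).

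Two details are worth flagging. First, your choice of intermediate problem does not match the paper's own description of the route: \cref{sec:related-work} says Manurangsi proceeds ``by establishing the parameterized fine-grained hardness of Exact Set Cover and invoking results from~\cite{BGKM18,BBEGSLMM21}'', which is a different chain than the colored-Clique/Densest-$k$-Subgraph route you describe, and indeed the W[1]-hardness reduction to MLD/NCP in \cite{BBEGSLMM21} (Lin-style threshold graph composition from $k$-Clique) is not the same reduction that \cite{BGKM18} runs on Set Cover. Your paragraph asserts as established that the \cite{BBEGSLMM21} reduction is parameter-linear in $\kappa$; that is exactly the kind of claim that would need to be verified against the actual reduction, not assumed, and you yourself note that a $\kappa\mapsto\Theta(\kappa^2)$ blow-up would spoil the conclusion. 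Second, ``colored Clique (equivalently, Densest-$k$-Subgraph)'' is imprecise: these are related but not equivalent parameterized problems, and the hypothesis PIH is phrased in terms of Multicolored Densest Subgraph specifically. Since the theorem is being cited rather than reproved, none of this affects the paper, but if you were actually to carry out the composition you would need to pin down the source problem and the parameter dependence of the NCP reduction rather than treat them as interchangeable black boxes.
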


We remark that~\cite{BBEGSLMM21} states \cref{thm:hardNCP} as the $\W[1]$-hardness of the ``$\gamma$-MLD$_q$'' problem (where ``MLD'' stands for ``Maximum Likelihood Decoding'' and the parameter of interest is again the input distance $k$), which is equivalent to the $\gamma$-$\NCP_q$ problem.
More precisely, the input to the $\gamma$-MLD$_q$ problem consists of a parity-check matrix $H\in\F_q^{h\times m}$, a target $\vec{t}\in\F_q^m$, and a distance bound $k$, and we must decide whether there exists a vector $\vec{e}$ with $\|\vec{e}\|_0\leq k$ such that $H\vec{e} = H\vec{t}$ (i.e., $\vec{t}$ and $\vec{e}$ have the same syndrome), or whether all such vectors $\vec{e}$ have Hamming weight larger than $\gamma k$.
This is equivalent to $\gamma$-$\NCP_q$ because we can efficiently compute the generator matrix $G$ of the code with parity-check matrix $H$ and vice-versa, and because $H\vec{e} = H\vec{t}$ if and only if $\vec{t} = \vec{c} + \vec{e}$ for some codeword $\vec{c}\in\C(G)$.
Moreover,~\cite{BBEGSLMM21} only stated the result for prime $q$.
However, direct inspection of~\cite[Section 5.2]{BBEGSLMM21} shows that their proof also yields the more general version stated in \cref{thm:hardNCP}.
In particular,~\cite[Definition 5.3]{BBEGSLMM21}, including the two observations there, generalizes to arbitrary finite fields.

\begin{definition}[Minimum Distance Problem]
The \emph{$\gamma$-approximate Minimum Distance Problem over $\F_q$ ($\gamma$-$\MDP_q$)} is the decisional promise problem defined as follows. On input a generator matrix $G \in \F_q^{m \times n}$ and distance parameter $k \in \Z^+$, the goal is to decide between the following two cases when one is guaranteed to hold:
\begin{itemize}
    \item (YES) $\lambda(\C(G)) \leq k$,
    \item (NO) $\lambda(\C(G)) > \gamma k$.
\end{itemize}
The parameter of interest is $k$.
\end{definition}

\subsubsection{Tensoring codes}
\label{sec:tensoring-codes}

The tensor product of linear codes is an important operation for building new codes with interesting properties by combining two linear codes.
In particular, tensoring can be used to boost the approximation factor in $\W[1]$-hardness results for NCP and MDP from some constant $\gamma>1$ to an arbitrary constant.

Given two linear codes $\C(G_1)$ and $\C(G_2)$ with $G_i\in\F_q^{m_i\times n_i}$ and minimum distance $d_i$ for $i=1,2$, we define the associated tensor product code as
\[
\C(G_1) \otimes \C(G_2) := \C(G_1\otimes G_2) \ \text{,}
\]
where  $G_1\otimes G_2\in\F_q^{m_1 m_2\times n_1 n_2}$ is the Kronecker product of $G_1$ and $G_2$.
Furthermore, we have
\begin{equation} \label{eq:tensor-code-min-dist}
    \lambda(\C(G_1) \otimes \C(G_2)) = d_1\cdot d_2 \ \text{.}
\end{equation}
See, e.g.,~\cite[Section V.B]{journals/tit/DumerMS03} for a proof.

Suppose that we know that $\gamma$-$\MDP_q$ is $\W[1]$-hard (under randomized reductions) for \emph{some} $\gamma>1$.
Then, using \cref{eq:tensor-code-min-dist}, we can immediately conclude that for any integer $c\geq 1$, $\gamma^c$-$\MDP_q$ is $\W[1]$-hard (under randomized reductions) by considering the tensored $\MDP$ instances $(B^{\otimes c},k^c)$, 
where $B^{\otimes c}$ denotes the $c$-fold Kronecker product of $B$ with itself.
In particular, constructing tensored $\MDP$ instances in this way gives an FPT self-reduction from $\gamma$-$\MDP_q$ to $\gamma^c$-$\MDP_q$.

\subsection{Lattice problems}\label{sec:lattice-problems}

Let $\lat(B)=\{B\vec{x}:\vec{x}\in\Z^n\}$ denote the lattice generated by the matrix $B\in\R^{m\times n}$.
We call $n$ the rank of $\lat(B)$ and write $\det(\lat(B)) = \sqrt{\det(B^T B)}$ for the determinant of $\lat(B)$, where $B^T$ denotes the transpose of $B$.
For $p \in [1, \infty)$, we write $\|\vec{x}\|_p=\left(\sum_{i=1}^m \abs{x_i}^p\right)^{1/p}$ for the $\ell_p$ norm of a vector $x\in\R^m$.
We use $\lambda_1^{(p)}(\lat)$ to denote the $\ell_p$ norm of a shortest nonzero vector in $\lat$ and set $\dist_p(\lat,\vec{t}):=\min_{\vec{v}\in\lat}\|\vec{v}-\vec{t}\|_p$.
We write $\B^{(p)}_m(r)=\{\vec{x}\in\R^m:\|\vec{x}\|_p\leq r\}$ for the closed, centered $\ell_p$ ball of radius $r$ in $\R^m$.

We define two fundamental promise problems related to lattices.

\begin{definition}[Closest Vector Problem]
The \emph{$\gamma$-approximate Closest Vector Problem with respect to the $p$-norm ($\gamma$-$\CVP_p$)} is the decisional promise problem defined as follows.
On input a generator matrix $B\in\Z^{m\times n}$, a target $\vec{t}\in\Z^m$, and a distance parameter $k\in\Z^+$, the goal is to decide between the following two cases when one is guaranteed to hold:
\begin{itemize}
    \item (YES) $\dist_p(\lat(B),\vec{t})\leq k$,
    
    \item (NO) $\dist_p(\lat(B),\alpha\vec{t})>\gamma k$ for any $\alpha\in\Z\setminus\{0\}$.
\end{itemize}
The parameter of interest is $k$.
\end{definition}

The definition above is a slight (but widely used) variant of the original Closest Vector Problem, since in the NO case we also require all multiples $\alpha\vec{t}$ with $\alpha\in\Z\setminus\{0\}$ to be far from $\lat(B)$.
The following results about the $\W[1]$-hardness and parameterized fine-grained hardness of this variant of CVP are known to hold.
\begin{theorem}[\protect{\cite[Theorem 7.2]{BBEGSLMM21}}]\label{thm:W1hardnessCVP}
For any real numbers $\gamma,p\geq 1$ it holds that $\gamma$-$\CVP_p$ is $\W[1]$-hard.
\end{theorem}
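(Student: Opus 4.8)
\emph{Proof proposal.} A natural route is to give an FPT reduction from $\gamma'$-$\NCP_q$ to $\gamma$-$\CVP_p$, where $q$ is a sufficiently large prime and $\gamma'$ a sufficiently large constant, both depending only on $\gamma$ and $p$; since $\gamma'$-$\NCP_q$ is $\W[1]$-hard by \cref{thm:hardNCP}, this suffices. Given an $\NCP_q$ instance $(G,\vec t,k)$ with $G\in\F_q^{m\times n}$, write $\bar G$ and $\bar{\vec t}$ for the coordinatewise lifts of $G,\vec t$ to $\{0,\dots,q-1\}$ and output the $\CVP_p$ instance
\[
B = \begin{pmatrix} \bar G & q I_m \\ \mathbf 0 & \mathbf 0 \end{pmatrix}\in\Z^{(m+W)\times(n+m)},\qquad \vec u=\begin{pmatrix}\bar{\vec t}\\ \mathbf 1_W\end{pmatrix}\in\Z^{m+W},\qquad k'=\lfloor q/2\rfloor^p\,k+W,
\]
where $\mathbf 1_W$ is the all-ones vector, $W$ is a positive integer of order $k$ (with the hidden constant depending only on $\gamma,p$), and I work with the $p$-th power of the $\ell_p$ norm throughout as in~\cite{BBEGSLMM21} (translating to the norm itself costs only a $p$-th root and a harmless rescaling of the gap). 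The last $W$ ``control'' coordinates are identically zero on $\lat(B)$.

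For correctness, observe that every lattice vector has the form $(\bar G\vec x+q\vec z;\,\mathbf 0)$, so its $\ell_p^p$ distance to $\vec u$ equals $W$ (from the control block) plus a top-block contribution in which coordinate $i$ contributes $0$ when $(G\vec x-\vec t)_i=0$ over $\F_q$ (choose $\vec z$ to hit the lift exactly) and at least $1$ otherwise (the difference is a nonzero residue mod $q$). In the YES case, plugging in the witness $\vec x$ and rounding each of the $\le k$ nonzero coordinates gives distance at most $\lfloor q/2\rfloor^p k+W=k'$. In the NO case, for the target itself the top block has more than $\dist(\C(G),\vec t)>\gamma' k$ nonzero coordinates, giving distance exceeding $\gamma' k+W$. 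Choosing $W=\Theta(k)$ so that $\gamma' k+W>\gamma(\lfloor q/2\rfloor^p k+W)$ — possible once $\gamma'$ exceeds a constant depending only on $\gamma,p,q$ — yields the required gap for this multiple of the target, and $k'=O(k)$ keeps the reduction FPT.

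The hard part, and the reason a plain ``reduce mod $q$'' embedding is not enough, is the strong NO requirement of $\CVP_p$: one must rule out \emph{every} nonzero integer multiple $\alpha\vec u$, and for $q\mid\alpha$ the top block of $\alpha\vec u$ vanishes modulo $q$, so without the control block $\alpha\vec u$ would lie \emph{in} the lattice (distance $0$). The control block is exactly what repairs this: any lattice vector disagrees with $\alpha\vec u$ on the control coordinates by $|\alpha|^p W$, so $\dist_p(\lat(B),\alpha\vec u)^p\ge q^p W>\gamma k'$ whenever $q\mid\alpha$, provided $q$ was chosen with $q^p>\gamma$ and $W$ as above; while for $q\nmid\alpha$ the integer $\alpha$ is a unit modulo the prime $q$, so by the scaling remark following \cref{def:ncp} the top block still carries more than $\gamma' k$ nonzero coordinates and the previous bound goes through unchanged. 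This is precisely why $q$ must be a large \emph{prime} and not an arbitrary prime power — we need both the clean unit-versus-multiple-of-$q$ dichotomy over $\Z$ and the inequality $q^p>\gamma$. No separate gap-amplification step is needed, since \cref{thm:hardNCP} already supplies arbitrarily large starting gaps; the only genuine bookkeeping is to balance $W$, $\gamma'$, and $q$ so that all integer multiples of the target are handled simultaneously while $k'$ stays linear in $k$.
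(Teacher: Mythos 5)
The paper does not prove this theorem; it imports it verbatim as \cite[Theorem 7.2]{BBEGSLMM21}. The only information the paper gives about how that theorem is proved is the remark in \cref{sec:svp-overview} that~\cite{BBEGSLMM21} obtains it ``via a different reduction from (the dual version of) $\NCP_q$ with a suitably large prime $q$.'' Your proposal follows precisely that route: an FPT reduction from $\gamma'$-$\NCP_q$ for a large prime $q$ (using \cref{thm:hardNCP}) to $\gamma$-$\CVP_p$ via a Construction-A-style lifting, with the control block $\mathbf{1}_W$ added to rule out the multiples $\alpha$ with $q\mid\alpha$ that would otherwise collapse the distance to zero. The core analysis --- YES: each of the $\leq k$ residue disagreements costs at most $\lfloor q/2\rfloor^p$ after an optimal choice of $\vec z$; NO, $q\nmid\alpha$: $\alpha$ is a unit mod $q$, so by the scaling remark after \cref{def:ncp} the top block carries more than $\gamma' k$ nonzero integer coordinates; NO, $q\mid\alpha$: the control block alone contributes at least $q^p W$ --- is sound, and $k'=O(k)$ keeps the reduction fixed-parameter. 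To the level of detail I can verify here, this is the same approach as the cited source.

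Two small but non-cosmetic points should be tightened. First, the norm-vs-$p$th-power bookkeeping: you declare you are working with the $p$th power of the $\ell_p$ norm ``as in \cite{BBEGSLMM21},'' but the statement you are asked to prove (\cref{thm:W1hardnessCVP}) is phrased in the paper's norm parameterization from \cref{sec:lattice-problems}, and the paper explicitly warns (footnote in \cref{sec:intro}) that the two conventions differ by a power of $p$. Your displayed inequalities, e.g.\ $\gamma' k + W > \gamma(\lfloor q/2\rfloor^p k + W)$ and $q^p W > \gamma k'$, only deliver a factor-$\gamma$ separation in the $p$th power, hence a factor-$\gamma^{1/p}$ separation in $\ell_p$; to land on $\gamma$-$\CVP_p$ as defined here you must carry $\gamma^p$ through those inequalities, which in turn requires $q > \gamma$ rather than $q^p > \gamma$. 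This is absorbed by your freedom to pick $q$, $W$, $\gamma'$, but it should be said. Second, $\lfloor q/2\rfloor^p$ is non-integral for non-integer $p$, so $k'$ must be rounded; this is benign (rounding up helps the YES case, and your NO bounds have slack to spare), but the paper itself flags exactly this integrality issue in its own $\SVP$ reductions, so a sentence acknowledging it would match the paper's level of care. Neither point is a genuine gap.
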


\begin{theorem}[\protect{\cite[Corollary 6, adapted]{Man20}}]\label{thm:FGhardnessCVP}
For any fixed $p,\gamma\geq 1$ and any function $T$, assuming randomized Gap-ETH, there is no randomized algorithm running in time $T(k) n^{o(k^p)}$ which decides $\gamma$-$\CVP_p$ with probability at least $2/3$, where $n$ is the rank of the input lattice.
\end{theorem}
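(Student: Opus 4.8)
The plan is to deduce \cref{thm:FGhardnessCVP} from \cite[Corollary 6]{Man20}, which proves essentially the same statement but under the \emph{$p$-th power} parameterization: in \cite{Man20}, $\gamma$-$\CVP_p$ is parameterized by a bound $\kappa$ on $\norm{\vec{x}}_p^p$ rather than by a bound $k$ on $\norm{\vec{x}}_p$, and the conclusion there is that, assuming randomized Gap-ETH, for every fixed $p \ge 1$ and constant gap $g \ge 1$ there is no randomized $T(\kappa)\cdot n^{o(\kappa)}$-time algorithm distinguishing the case $\dist_p(\lat(B),\vec{t})^p \le \kappa$ from the case that every nonzero integer multiple $\alpha\vec{t}$ of the target has $\dist_p(\lat(B),\alpha\vec{t})^p > g\kappa$. (This power-of-$p$ bookkeeping is exactly the discrepancy flagged in the footnote above; its $p=1$ version is used identically to pass between \cref{thm:FGhardnessNCP} and the NCP results of \cite{Man20}.)

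First I would give the elementary translation between the two parameterizations. Given a \cite{Man20}-style instance $(B,\vec{t},\kappa)$ with gap constant $g$, output $(B,\vec{t},k)$ with $k := \lceil\kappa^{1/p}\rceil$. The YES condition $\dist_p(\lat(B),\vec{t})^p \le \kappa$ gives $\dist_p(\lat(B),\vec{t}) \le \kappa^{1/p} \le k$, so YES maps to YES for every $\kappa$. For the NO case, using $k^p = \lceil\kappa^{1/p}\rceil^p \le (2\kappa^{1/p})^p = 2^p\kappa$, if we fix the gap constant in \cite[Corollary 6]{Man20} to be $g := (2\gamma)^p$ then $\gamma^p k^p \le (2\gamma)^p \kappa = g\kappa < \dist_p(\lat(B),\alpha\vec{t})^p$ for every nonzero integer $\alpha$, i.e.\ $\dist_p(\lat(B),\alpha\vec{t}) > \gamma k$; so NO maps to NO. Consequently a (hypothetical) randomized $T(k)\cdot n^{o(k^p)}$-time algorithm for our $\gamma$-$\CVP_p$ decides \cite{Man20}'s instances in time $T(\lceil\kappa^{1/p}\rceil)\cdot n^{o(\kappa)} = T'(\kappa)\cdot n^{o(\kappa)}$ (since $k^p = \Theta(\kappa)$), contradicting \cite[Corollary 6]{Man20} under randomized Gap-ETH, and the theorem follows.

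For completeness I would also sketch where \cite[Corollary 6]{Man20} itself comes from, as that is where the real content lies: Manurangsi establishes the parameterized fine-grained hardness of gap Exact Set Cover under randomized Gap-ETH (no randomized $T(k)\cdot u^{o(k)}$-time algorithm for an instance over a universe of size $u$ seeking $k$ sets) and then invokes the reductions of \cite{BGKM18,BBEGSLMM21} --- variants of the standard Exact-Set-Cover-to-$\CVP_p$ reduction behind the $\NP$-hardness proofs of \cite{journals/jcss/AroraBSS97,Kho05,HR12} --- producing an \emph{integer} lattice of rank $\poly(u)$ whose error-vector $\ell_p^p$-bound is $\Theta(k)$; since the parameter grows by only a constant factor, the $u^{o(k)}$ lower bound becomes an $n^{o(\kappa)}$ lower bound for $\CVP_p$.

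The part that needs care --- though it is bookkeeping rather than a new idea, and is already implicit in \cite{Man20,BBEGSLMM21} --- is checking that the Exact-Set-Cover-to-$\CVP_p$ reduction keeps the lattice integral and moves the parameter by only a multiplicative constant, so that the number-of-sets parameter $k$ maps to $\Theta(k)$ and hence the $\ell_p^p$-parameter $\kappa$ to $\Theta(\kappa)$, and that the $p$-th-power reparameterization above is applied consistently across the YES case and the multiplied-target NO case (and that the exponent in $n^{o(k^p)}$ versus $n^{o(\kappa)}$ is tracked correctly). No genuinely new ingredient is required beyond what \cite{Man20} already provides.
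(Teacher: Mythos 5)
Your proposal is correct and matches what the paper does: the paper does not reprove this theorem but cites it as ``\cite[Corollary 6, adapted]{Man20},'' and the only content of the adaptation is precisely the $\|\cdot\|_p^p$-versus-$\|\cdot\|_p$ reparameterization that the paper flags in its footnote and that you carry out explicitly. Your translation $k = \lceil\kappa^{1/p}\rceil$ with gap constant $g = (2\gamma)^p$, together with the observation that $k^p = \Theta(\kappa)$ preserves the $n^{o(\cdot)}$ lower bound, is exactly the bookkeeping the paper leaves implicit.
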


\begin{definition}[Shortest Vector Problem]
The \emph{$\gamma$-approximate Shortest Vector Problem with respect to the $\ell_p$-norm ($\gamma$-SVP$_p$)} is the decisional promise problem defined as follows.
On input a generator matrix $B\in\Z^{m\times n}$ and a distance parameter $k\in\Z^+$, the goal is to decide between the following two cases when one is guaranteed to hold:
\begin{itemize}
    \item (YES) $\lambda_1^{(p)}(\lat(B))\leq k$,
    
    \item (NO) $\lambda_1^{(p)}(\lat(B))>\gamma k$.
\end{itemize}
The parameter of interest is $k$.
\end{definition}

\subsubsection{Tensoring lattices}
\label{sec:tensoring-lattices}
Analogously to the coding setting, we can also consider the tensor product of lattices.
Given two lattices $\lat(B_1)$ and $\lat(B_2)$ with basis matrices $B_1\in\Z^{m_1\times n_1}$ and $B_2\in\Z^{m_2\times n_2}$, we define the associated tensor product lattice as
\begin{equation*}
    \lat(B_1)\otimes \lat(B_2) := \lat(B_1\otimes B_2) \ \text{,}
\end{equation*}
where  $B_1\otimes B_2\in\Z^{m_1 m_2\times n_1 n_2}$ is the Kronecker product of $B_1$ and $B_2$.
The tensor product lattice is independent of the bases we choose for the two underlying lattices.

Unlike for codes, it is not true that repeated tensoring of lattices allows us to generically boost the approximation factor in known hardness results for $\CVP$ and $\SVP$.
Indeed, while it always holds that
\begin{equation*}
    \lambda_1^{(p)}(\lat(B_1)\otimes\lat(B_2))\leq \lambda_1^{(p)}(\lat(B_1))\cdot \lambda_1^{(p)}(\lat(B_2)) \ \text{,}
\end{equation*}
it may happen that the left-hand side of this inequality is significantly smaller than the right-hand side.
For an example, see~\cite[Lemma 2.3]{HR12}.
Therefore, additional effort is required to prove special structural properties of our $\CVP$ and $\SVP$ instances to ensure that tensoring them allows us to boost the approximation factor in our hardness results.

\subsection{Locally dense codes and lattices}

Our randomized FPT reductions from NCP to MDP and from CVP to SVP use families of \emph{locally dense} codes and lattices with appropriate parameters.
Precise definitions of such objects follow below.

\begin{definition}[Locally dense code]
Fix a real number $\alpha\in(0,1)$, positive integers $d, N, m, n$, and a prime power $q$.
A \emph{$(q,\alpha,d, N, m, n)$-locally dense code} is specified by a generator matrix $A \in \F_q^{m \times n}$ and a target vector $\vec{s} \in \F_q^m$ with the following properties:
\begin{itemize}
    \item $\lambda(\C(A)) > d$.
    \item $\card{(\C(A) - \vec{s}) \cap \B_{q, m}(\alpha d)} \geq N$.
\end{itemize}
\end{definition}
That is, the code $\C(A)$ has block length $m$, dimension $n$, (design) minimum distance $d$, is over $\F_q$, and a ``bad list decoding configuration" with at least $N$ codewords within Hamming distance $\alpha d<d$ of $\vec{s}$.

\begin{definition}[Locally dense lattice]
    Fix real numbers $\alpha\in(0,1)$ and $p\geq 1$ and positive integers $d, N, m, n$.
A \emph{$(p,\alpha,d, N, m, n)$-locally dense lattice} is specified by a basis $A \in \Z^{m \times n}$ and a target vector $\vec{s} \in \Z^m$ with the following properties:
\begin{itemize}
    \item $\lambda^{(p)}_1(\lat(A)) > d$.
    \item $\card{(\lat(A) - \vec{s}) \cap \B^{(p)}_{m}(\alpha d)} \geq N$.
\end{itemize}
\end{definition}

\section{The FPT $\NCP_q$ to $\MDP_q$ reduction}\label{sec:NCPtoMDP}

We next describe and analyze a randomized FPT reduction from approximate $\NCP_q$ to approximate $\MDP_q$ which works over any finite field. 
Our reduction is obtained by adapting Khot's reduction~\cite{Kho05,BBEGSLMM21} from approximate CVP to approximate SVP to the coding setting and combining it with locally dense codes constructed with the help of BCH codes over general finite fields.
Combined with \cref{thm:hardNCP}, our reduction yields \cref{thm:hardness-mdp}, which we restate here.

\hardnessmdp*

\subsection{A reduction with advice}\label{sec:reduction}

For the sake of exposition, we begin by describing our FPT reduction from NCP to MDP in a modular fashion assuming that we are given an appropriate locally dense code as advice.
Later on in \cref{sec:ldc} we give an FPT randomized algorithm to construct locally dense codes with the desired parameters and replace the advice with this construction to yield the desired FPT reduction from approximate NCP to approximate MDP with two-sided error.
We establish the following result.

\begin{theorem}\label{thm:redNCPtoMDP}
    Fix a prime power $q\geq 2$ and real numbers $\gamma,\gamma'\geq 1$ and $\alpha\in(0,1)$ additionally satisfying
    \begin{equation*}
        \gamma'\leq \frac{\gamma}{1+\alpha\gamma}.
    \end{equation*}
    Then, there is a randomized algorithm which, for $m$ large enough, on input a $\gamma$-$\NCP_q$ instance $(G,\vec{t},k)$ with $G\in\F_q^{m\times n}$, $\vec{t}\in\F_q^m$, and $k\in\Z^+$ and a $(q,\alpha,d=\gamma k,N\geq 100q^{10}\cdot (qm)^d,m',n')$-locally dense code $(A,\vec{s})$ outputs in time $\poly(m,m')$ an instance $(\Gfinal,k')$ of $\gamma'$-$\MDP_q$ with $k'<\gamma k$ satisfying the following properties with probability at least $0.99$:
    \begin{itemize}
        \item If $(G,\vec{t},k)$ is a YES instance of $\gamma$-$\NCP_q$, then $(\Gfinal, k')$ is a YES instance of $\gamma'$-$\MDP_q$;
        
        \item If $(G,\vec{t},k)$ is a NO instance of $\gamma$-$\NCP_q$, then $(\Gfinal, k')$ is a NO instance of $\gamma'$-$\MDP_q$. 
    \end{itemize}
\end{theorem}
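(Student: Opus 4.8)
The plan is to follow the Khot-style blueprint outlined in the technical overview, instantiating it with the locally dense code given as advice. Concretely, given the $\gamma$-$\NCP_q$ instance $(G,\vec t,k)$ with $G\in\F_q^{m\times n}$ and the $(q,\alpha,d=\gamma k,N,m',n')$-locally dense code $(A,\vec s)$, I would first form the intermediate code $\Cint$ generated by
\begin{equation*}
\Gint=\begin{pmatrix} G & 0 & -\vec t\\ 0 & A & -\vec s\end{pmatrix},
\end{equation*}
set $k'=\alpha d+k=\alpha\gamma k+k$ (so that $k'/(\gamma k)\le (1+\alpha\gamma)/\gamma\le 1/\gamma'$, giving $\gamma'k'\le\gamma k$ and $k'<\gamma k$ since $\alpha<1$), and then sparsify: intersect $\Cint$ with a uniformly random ``parity-check'' code $\Crand=\{\vec y:R\vec y=\vec 0\}$ for $R\in\F_q^{r\times(n+n'+1)}$ (acting on coordinates indexed by the generator columns, i.e. effectively we pick a random subcode of the coefficient space) with $r$ chosen so that $q^{r}\approx$ (number of annoying vectors)$\cdot$(small slack) but $q^{r}\ll N$. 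The output instance is $(\Gfinal,k')$ where $\Gfinal$ is a generator matrix of $\Cfinal=\Cint\cap\Crand$, computable in $\poly(m,m')$ time by standard linear algebra.

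The analysis splits into the two cases. \textbf{Completeness (YES case).} If there is $\vec x$ with $\|G\vec x-\vec t\|_0\le k$, then for each of the $\ge N$ vectors $\vec y$ with $\|A\vec y-\vec s\|_0\le\alpha d$, the codeword $\Gint(\vec x,\vec y,1)^{\mathsf T}=(G\vec x-\vec t,\;A\vec y-\vec s)$ has Hamming weight $\le k+\alpha d=k'$. So $\Cint$ has at least $N$ ``good'' codewords of weight $\le k'$; moreover these are genuinely distinct as vectors because the $A\vec y-\vec s$ blocks are distinct (the $A\vec y$ are distinct since $\lambda(\C(A))>d>0$ forces $A$ injective). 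I then argue via a second-moment / pairwise-independence argument (invoking \cref{lem:chebyshev} with the events ``the $i$-th good vector survives sparsification'', which each occur with probability $q^{-r}$ and are pairwise independent for a suitable random $R$) that with probability $\ge 1-\tfrac{q^{r}}{N}$ at least one good vector lies in $\Cfinal$, so $\lambda(\Cfinal)\le k'$ and $(\Gfinal,k')$ is a YES instance of $\gamma'$-$\MDP_q$. \textbf{Soundness (NO case).} Suppose $\dist(\C(G),\vec t)>\gamma k$. Any codeword of $\Cint$ of weight $\le\gamma k$ has the form $(G\vec x+\beta\vec t,\;A\vec y+\beta\vec s)$; if $\beta\ne0$ then rescaling by $\beta^{-1}$ the first block gives a vector at distance $\le\gamma k$ from $\vec t$ (using the scaling remark after \cref{def:ncp}), contradiction — unless the second block $A\vec y+\beta\vec s$ alone has weight $>\gamma k$; but $\|A\vec y+\beta\vec s\|_0 = \|A(\beta^{-1}\vec y)+\vec s\|_0\cdot$, wait — more carefully, if $\beta\neq 0$ then either the first block has weight $>\gamma k$ (done) or the first block has weight $\le \gamma k$, and then $\beta^{-1}(G\vec x+\beta\vec t)=G(\beta^{-1}\vec x)+\vec t$ has weight $\le\gamma k$, contradicting the NO hypothesis. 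Hence $\beta=0$, and then the weight-$\le\gamma k$ codeword is $(G\vec x,\;A\vec y)$ with $A\vec y$ of weight $\le\gamma k<d<\lambda(\C(A))$, forcing $\vec y=\vec 0$. So every ``annoying'' vector of weight $\le\gamma k$ in $\Cint$ is of the form $\Gint(\vec x,\vec 0,0)^{\mathsf T}$, and there are at most $q^n\le(qm)^d$ of them. A union bound over these annoying vectors: each survives sparsification with probability $q^{-r}$, so with $q^{r}\ge 100\,q^{n}$ (say) the probability some nonzero annoying vector survives is $\le q^n/q^r\le 1/100$; hence w.h.p. $\lambda(\Cfinal)>\gamma k\ge\gamma'k'$ and $(\Gfinal,k')$ is a NO instance. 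Choosing $r$ with $100q^{n}\le q^{r}\le N/100$ (possible since $N\ge 100q^{10}(qm)^d\ge 100q^{10}q^{n}$, wait I should be careful that $q^n\le (qm)^d$; this holds because $n\le m$ and... actually $q^n\le q^m$ and $(qm)^d\ge (qm)^{\gamma k}$ need not dominate $q^m$ — here is where the precise statement matters) makes both failure probabilities $\le 1/100$, and a final union bound gives success probability $\ge 0.99$.

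The main obstacle I anticipate is the bookkeeping around the sparsification dimension $r$ and verifying the inequality chain that makes it simultaneously large enough (to kill all $\le q^n$ annoying vectors in the NO case) and small enough (to keep a good vector in the YES case, i.e. $q^r\ll N$), \emph{together with} checking that the number of annoying vectors is genuinely bounded by $(qm)^d$ rather than by the larger $q^n$ — this is presumably why the theorem hypothesis puts $N\ge 100q^{10}(qm)^d$ with that particular exponent and constant, and the real content is confirming that the annoying vectors of weight $\le\gamma k=d$ that actually survive as nonzero codewords of $\Cfinal$ number at most $(qm)^d$ (counting by support of size $\le d$ and nonzero symbols: $\binom{m}{d}(q-1)^d\le(qm)^d$), not $q^n$. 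A secondary technical point is arranging the random sparsifying code so that the relevant survival events are genuinely pairwise independent so that \cref{lem:chebyshev} applies; the standard trick is to take $R$ uniform and note $\Pr[R\vec a=\vec 0]=q^{-r}$ and $\Pr[R\vec a=R\vec b=\vec 0]=q^{-2r}$ for distinct nonzero $\vec a,\vec b$ that are not scalar multiples — and one must handle scalar multiples separately (a good vector and its scalar multiples all survive or fail together, but since weights scale by the nonzero scalar's... no, Hamming weight is scaling-invariant, so scalar multiples of a good vector are also good and one should either count orbits or just note this only helps completeness). I would set up the random code to act on the $(n+n'+1)$-dimensional coefficient space but pin the last coordinate (the one multiplying $(-\vec t,-\vec s)$) — actually it is cleaner to follow Khot and sparsify only over a sublattice/subcode fixing $\beta$; I would mirror whatever \cref{sec:reduction} does and defer the $r$-choice details there.
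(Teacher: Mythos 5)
Your proposal is essentially correct and takes the same route as the paper: form $\Cint = \C(\Gint)$ with the two-block generator matrix plus the $(-\vec t,-\vec s)$ column, set $k' = k + \alpha d$, count ``good'' codewords ($\geq N$) in the YES case and ``annoying'' codewords ($\leq (qm)^d$, by counting weight-$\leq d$ vectors in $\F_q^m$) in the NO case, and sparsify so that all annoying vectors are killed with high probability while at least one good vector survives by the second-moment argument of \cref{lem:chebyshev}.

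One detail is worth flagging, since you waver on it. You propose sparsifying in the \emph{coefficient} space (a random $R\in\F_q^{r\times(n+n'+1)}$ and $\Cfinal=\Gint\cdot\ker R$), and then worry whether the count of annoying coefficient vectors $q^n$ might exceed $(qm)^d$ and hence exceed $N$; you eventually self-correct and observe one should count annoying \emph{codewords} (bounded by $(qm)^d$ independently of $n$). The paper sidesteps this entirely by sparsifying in the \emph{ambient codeword} space: it samples $H\in\F_q^{h\times(m+m')}$ uniformly, sets $\Crand=\ker H$ with $h=\lceil 7 + d(1+\log_q m)\rceil$, and takes $\Cfinal = \Cint\cap\Crand$. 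The union bound in the NO case is then directly over annoying codewords ($\leq(qm)^d$ of them, each surviving with probability $q^{-h}$), and the YES-case pairwise-independence argument runs over a set of $\geq|\mathcal G|/q$ pairwise linearly independent good codewords (one per projective line). Your coefficient-space version would also work once you establish that $\Gint$ has full column rank --- which the paper notes is automatic since $\vec s\notin\C(A)$ --- but working in the ambient space avoids having to worry about preimage collisions under $\Gint$ and keeps the annoying-vector count tied to the weight bound $d$ rather than to the NCP dimension $n$.
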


We prove \cref{thm:redNCPtoMDP} by analyzing the following algorithm.
On input a $\gamma$-$\NCP_q$ instance $(G, \vec{t}, k)$ with $G \in \F_q^{m \times n}$ and $\vec{t} \in \F_q^m$,
we set $\C_{\textrm{int}}$ to be the code with generator matrix
\[
\Gint := \begin{pmatrix}
G & 0_{m\times n'} & -\vec{t} \\
0_{m'\times n} & A & -\vec{s}
\end{pmatrix} \in \F_q^{(m+m')\times (n+n'+1)} \text{,}
\]
where $(A,\vec{s})$ is the locally dense code described in the statement of \cref{thm:redNCPtoMDP}.
Note that $\Gint$ has full column rank (over $\F_q$) whenever $G$ and $A$ have full column rank, since we always have $\vec{s}\not\in\C(A)$ (observe that there exists at least one codeword of $\C(A)$ within distance $\alpha d <d$ of $\vec{s}$).

We will take the intersection of $\C_{\textrm{int}}$ with an appropriate random code $\C_{\textrm{rand}} \subseteq \F_q^{m+m'}$ of codimension at most $h=\lceil 7+d(1+\log_q m)\rceil$.
More precisely, we sample $\Crand$ by first sampling the entries of a parity-check matrix $H\in\F_q^{h\times (m+m')}$ independently and uniformly at random from $\F_q$ and setting $\Crand=\ker(H)$.
Then, we compute a generator matrix $\Gfinal$ of $\Cfinal=\C_{\textrm{int}} \cap \C_{\textrm{rand}}$ and
$k' := k + \alpha d$, and output $(\Gfinal, k')$ as the MDP instance.
Note that $k'\leq d/\gamma'$ by the constraints imposed on $\gamma$, $\gamma'$, and $\alpha$.

\subsection{Proof of \cref{thm:redNCPtoMDP}}

In order to prove \cref{thm:redNCPtoMDP}, we begin by establishing some useful properties of the intermediate code $\C_{\textrm{int}}$ constructed by the algorithm from \cref{sec:reduction}.

\begin{lemma}\label{lem:Cintproperties}
Fix a prime power $q\geq 2$ and real numbers $\gamma,\gamma'\geq 1$ and $\alpha\in(0,1)$ satisfying
\begin{equation*}
    \gamma'\leq \frac{\gamma}{1+\alpha \gamma}.
\end{equation*}
Given a $\gamma$-$\NCP_q$ instance $(G,\vec{t},k)$ with $G \in \F_q^{m \times n}$ and $\vec{t} \in \F_q^m$ and a $(q,\alpha,d=\gamma k,N\geq 100q^{10}\cdot (qm)^d,m',n')$-locally dense code $(A,\vec{s})$,
the algorithm from \cref{sec:reduction} constructs $\Cint=\C(\Gint)$ in time $\poly(m,m')$ satisfying the following properties:
    \begin{itemize}
        
        \item If $(G,\vec{t},k)$ is a YES instance of $\gamma$-$\NCP_q$, then there are at least $N$ nonzero vectors in $\Cint$ of Hamming weight at most $k'$. We call such vectors \emph{good};
        
        \item If $(G,\vec{t},k)$ is a NO instance of $\gamma$-$\NCP_q$, then there are at most $(qm)^d$ nonzero vectors in $\Cint$ of Hamming weight at most $d=\gamma k\geq \gamma' k'$. We call such vectors \emph{annoying}.
    \end{itemize}        
\end{lemma}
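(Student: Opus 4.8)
The plan is to verify each of the two bullets directly from the structure of $\Gint$. For the YES case, suppose $(G,\vec{t},k)$ is a YES instance, so there exists $\vec{x}\in\F_q^n$ with $\norm{G\vec{x}-\vec{t}}_0\leq k$. By the locally dense code property, there are at least $N$ vectors $\vec{y}\in\F_q^{n'}$ with $A\vec{y}-\vec{s}\in\B_{q,m'}(\alpha d)$, i.e. $\norm{A\vec{y}-\vec{s}}_0\leq \alpha d$. For each such $\vec{y}$, I would consider the codeword $\Gint\cdot(\vec{x},\vec{y},1)^T = (G\vec{x}-\vec{t},\, A\vec{y}-\vec{s})\in\Cint$, which has Hamming weight at most $k+\alpha d = k'$. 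These codewords are pairwise distinct because distinct $\vec{y}$ give distinct lower blocks $A\vec{y}$ (as $A$ has full column rank), so we obtain at least $N$ distinct nonzero vectors in $\Cint$ of weight at most $k'$ — note nonzero because the last coordinate of the coefficient vector is $1$ and $\Gint$ has full column rank, or more concretely because $A\vec{y}-\vec{s}\neq\vec{0}$ since $\vec{s}\notin\C(A)$.

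For the NO case, I would show that every nonzero codeword of $\Cint$ of Hamming weight at most $d=\gamma k$ has the form $\Gint\cdot(\vec{x},\vec{0},0)^T = (G\vec{x},\vec{0})$ for some $\vec{x}\in\F_q^n$. Take an arbitrary codeword $\vec{c}=\Gint\cdot(\vec{x},\vec{y},\beta)^T = (G\vec{x}-\beta\vec{t},\, A\vec{y}-\beta\vec{s})$ with $\norm{\vec{c}}_0\leq d$. First, if $\beta\neq 0$, then by scaling (using the remark after \cref{def:ncp} that the NO case is equivalent to $\dist(\C(G),\alpha\vec{t})>\gamma k$ for all $\alpha\neq 0$) we have $\norm{G\vec{x}-\beta\vec{t}}_0 = \norm{G(\beta^{-1}\vec{x})-\vec{t}}_0\cdot$ — more carefully, $G\vec{x}-\beta\vec{t} = \beta(G(\beta^{-1}\vec{x})-\vec{t})$, so its weight equals $\dist$-type quantity which is $>\gamma k = d$, contradicting $\norm{\vec{c}}_0\leq d$. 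Hence $\beta=0$. Then the lower block is $A\vec{y}$, and since $\lambda(\C(A))>d$, if $\vec{y}\neq\vec{0}$ we would have $\norm{A\vec{y}}_0 > d$, again contradicting $\norm{\vec{c}}_0\leq d$. So $\vec{y}=\vec{0}$ and $\vec{c}=(G\vec{x},\vec{0})$. The number of such codewords is at most $|\C(G)|=q^n$; but we want the bound $(qm)^d$, so instead I would count more cleverly: such a nonzero $\vec{c}$ has $\norm{G\vec{x}}_0\leq d$, and a vector in $\F_q^{m+m'}$ supported on the first $m$ coordinates with Hamming weight at most $d$ is determined by choosing at most $d$ of its $m$ coordinate positions and a nonzero value in $\F_q$ for each, giving at most $\sum_{j=1}^{d}\binom{m}{j}(q-1)^j \leq (qm)^d$ possibilities. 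This yields the claimed bound of at most $(qm)^d$ annoying vectors.

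The routine items to confirm are: the running time bound $\poly(m,m')$ for constructing $\Gint$ (immediate, it is just matrix assembly); the inequality $\gamma' k' \leq d$, which is exactly the hypothesis $\gamma'\leq \gamma/(1+\alpha\gamma)$ rearranged using $k'=k+\alpha d = k(1+\alpha\gamma)$ and $d=\gamma k$, so that ``weight at most $d$'' does dominate ``weight at most $\gamma' k'$'' as the lemma states; and full column rank of $\Gint$ (follows from full rank of $G$, $A$ and $\vec{s}\notin\C(A)$, which was already noted in \cref{sec:reduction}). I do not expect a genuine obstacle here — the lemma is essentially a bookkeeping/structural statement — but the one point requiring slight care is the scaling argument in the NO case: one must invoke the strengthened NO condition (all nonzero multiples of $\vec{t}$ are far from $\C(G)$) rather than just $\dist(\C(G),\vec{t})>\gamma k$, and one must track that $d=\gamma k$ so that the weight threshold lines up exactly with the NO gap. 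The counting bound $(qm)^d$ rather than $q^n$ is what keeps the later sparsification step FPT, so it is worth stating that way even though $q^n$ would also be ``true.''
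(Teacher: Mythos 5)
Your proof is correct and takes essentially the same route as the paper's: the same split into $\beta\neq 0$ versus $\beta=0$ in the NO case, the same use of the remark after the NCP definition to handle nonzero multiples of $\vec{t}$, and the same ball-counting bound $|\B_{q,m}(d)|\leq(qm)^d$. You are a bit more explicit than the paper in the YES case about why the $N$ codewords are distinct (full column rank of $A$) and nonzero ($\vec{s}\notin\C(A)$), which the paper leaves implicit via the earlier observation that $\Gint$ has full column rank; this is a small but welcome addition, not a different argument.
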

\begin{proof}
    The claim regarding the time required to construct $\Cint$ is directly verifiable.
    We proceed to argue the two items of the lemma statement.
    
    First, suppose that $(G,\vec{t},k)$ is a YES instance of $\gamma$-$\NCP_q$.
    This means that there is a vector $\vec{x}\in\F_q^n$ such that $\|G\vec{x}-\vec{t}\|_0\leq k$.
    Moreover, we know that there are at least $N\geq 100q^{10}\cdot (qm)^d$ vectors $\vec{y}\in \F_q^{n'}$ such that
    \begin{equation*}
        \|A\vec{y}-\vec{s}\|_0\leq \alpha d.
    \end{equation*}
    For each such $\vec{y}$, consider the associated vector $\vec{z}_{\vec{y}}=(\vec{x},\vec{y},1)$ and note that
    \begin{align*}
        \|\Gint \vec{z}_{\vec{y}}\|_0 = \|G\vec{x}-\vec{t}\|_0 + \|A\vec{y}-\vec{s}\|_0 \leq  k + \alpha d  = k'.
    \end{align*}
    Therefore, there are at least $N\geq 100q^{10}\cdot (qm)^d$ good vectors in $\Cint$, as desired.
    
    On the other hand, suppose that $(G,\vec{t},k)$ is a NO instance of $\gamma$-$\NCP_q$.
    This means that for every $\vec{x}\in\F_q^n$ and $\beta\in\F_q\setminus\{0\}$ it holds that $\|G\vec{x}-\beta\vec{t}\|_0>\gamma k=d$.
    Consider an arbitrary vector $\vec{z}=(\vec{x},\vec{y},-\beta)\in\F_q^{n+n'+1}$.
    We claim that if $\Gint\vec{z}$ is annoying it must be the case that $\vec{y}=\vec{0}$ and $\beta=0$.
    To see this, first note that if $\beta\neq 0$ then
    \begin{equation*}
        \|\Gint\vec{z}\|_0\geq \|G\vec{x}-\beta \vec{t}\|_0>d
    \end{equation*}
    since $(G,\vec{t},k)$ is a NO instance of $\gamma$-$\NCP_q$ and $d=\gamma k$.
    Therefore, we may assume that $\beta=0$.
    Under this assumption, it holds that
    \begin{equation*}
        \|\Gint\vec{z}\|_0\geq \|A\vec{y}\|_0>d
    \end{equation*}
    if $\vec{y}\neq \vec{0}$, which yields the claim.
    This allows us to conclude that all vectors $\vec{z}$ such that $\Gint\vec{z}$ is annoying are of the form $\vec{z}=(\vec{x},\vec{0},0)$ for some $\vec{x}\in\F_q^n$.
    As a result, the number of annoying vectors is at most
    \begin{align*}
        |\C(G)\cap\B_{q,m}(d)|&\leq |\B_{q,m}(d)|
        \leq (qm)^{d},
    \end{align*}
    as desired.
\end{proof}

We are now ready to prove \cref{thm:redNCPtoMDP} with the help of \cref{lem:Cintproperties}.
\begin{proof}[Proof of \cref{thm:redNCPtoMDP}]
    The claims regarding the time required to construct $\Cfinal$ and the bound on $k'$ are directly verifiable.
    We proceed to argue the two items of the theorem statement.
    
    Recall that we construct $\Cfinal$ by intersecting $\Cint$ with an appropriate random code $\Crand$ of codimension at most $h=\lceil 7+d(1+\log_q m)\rceil$.
    More precisely, $\Crand$ is obtained by sampling the entries of a parity-check matrix $H\in\F_q^{h\times (m+m')}$ independently and uniformly at random from $\F_q$ and setting $\Crand=\ker(H)$.
    Observe that for any given $\vec{v}\in\F_q^{m+m'}\setminus\{\vec{0}\}$ we have
    \begin{equation}\label{eq:probincode}
        \Pr_H[H\vec{v}=\vec{0}]=q^{-h}.
    \end{equation}
    Moreover, the random variables $\mathbf{1}_{\{H\vec{v}=\vec{0}\}}$ and $\mathbf{1}_{\{H\vec{w}=\vec{0}\}}$ are pairwise independent whenever $\vec{v}$ and $\vec{w}$ are linearly independent.
Let $Z_{\vec{v}}=\mathbf{1}_{\{H\vec{v}=\vec{0}\}}$ and write $Z_\cS=\sum_{\vec{v}\in \cS}Z_{\vec{v}}$ for any set $\cS$.

    Suppose that $(G,\vec{t},k)$ is a YES instance of $\gamma$-$\NCP_q$.
    By \cref{lem:Cintproperties}, this means that there are at least $100q^{10}\cdot (qm)^{d}$ good nonzero vectors in $\Cint$ of Hamming weight at most $k'$.
    Let $\cG$ denote the set of such good vectors.
    We claim that
    \begin{equation*}
        \Pr[Z_\cG=0]\leq 0.01,
    \end{equation*}
    i.e., at least one good vector survives with probability at least $0.99$ over the sampling of $\Crand$.
Note that there exists a subset $\mathcal{G}'\subseteq\mathcal{G}$ of size
\begin{equation}\label{eq:LBgprime}
    |\mathcal{G}'|\geq |\mathcal{G}|/q\geq 100q^9\cdot (qm)^d
\end{equation}
such that all vectors in $\mathcal{G}'$ are pairwise linearly independent.
This set $\cG'$ can be obtained by keeping only one element of $\mathcal{G}$ per line in $\F_q^{m+m'}$.
Note that the variables $\{Z_{\vec{v}}\}_{\vec{v}\in \mathcal{G}'}$ are pairwise independent Bernoulli random variables with success probability $q^{-h}$, and so
\cref{lem:chebyshev} guarantees that
\begin{align*}
    \Pr[Z_{\mathcal{G}}=0]\leq \Pr[Z_{\mathcal{G}'}= 0]\leq \frac{q^h}{|\mathcal{G}'|}
    \leq \frac{q^{h+1}}{|\mathcal{G}|}
    \leq 0.01,
\end{align*}
by our choice of $h$ and the lower bound on $|\mathcal{G}'|$ from \cref{eq:LBgprime}.
Therefore, with probability at least $0.99$ there is a codeword $\vec{v}\in\Cfinal\setminus\{\vec{0}\}$ such that $\|\vec{v}\|_0\leq k'$, and so $(\Gfinal,k')$ is a YES instance of $\gamma'$-$\MDP_q$.

Now, suppose that $(G,\vec{t},k)$ is a NO instance of $\gamma$-$\NCP_q$.
In this case, \cref{lem:Cintproperties} ensures that there are at most $(qm)^d$ nonzero vectors in $\Cint$ with Hamming weight at most $d$.
Let $\cA$ denote the set of such annoying vectors.
Note that
\begin{align*}
    \Pr[Z_{\mathcal{A}}\geq 1]\leq \frac{(qm)^d}{q^h}
    \leq 0.01,
\end{align*}
where the first inequality follows from \cref{eq:probincode} and a union bound over all $|\mathcal{A}|\leq (qm)^d$ annoying vectors, and the second inequality follows from the choice of $h$ above.
Therefore, all annoying vectors are removed from $\Cfinal$ with probability at least $0.99$.
This means that $\Cfinal$ has minimum distance larger than $d$, and so $(\Gfinal,k')$ is a NO instance of $\gamma'$-$\MDP_q$.
\end{proof}

\subsection{Finalizing the reduction}\label{sec:ldc}

In this section we provide a randomized construction of locally dense codes based on BCH codes~\cite{Hoc59,BC60} which can be combined with \cref{thm:redNCPtoMDP} to yield the desired FPT reduction with two-sided error and without advice.
More precisely, we prove the following theorem.
\begin{theorem}\label{thm:ldc}
    Fix a prime power $q\geq 2$ and set
    $\gamma=4q$.
    There exists a randomized algorithm which when given as input positive integers $m$ and $k\leq m$ runs in time $\poly(m)$ and outputs with probability at least $0.99$ a $(q,\alpha,d, N, m', n')$-locally dense code $(A, \vec{s})$, where
    \begin{align*}
        &m', n' \leq \poly(m),\\
        &d = \gamma k=4qk,\\
        &\alpha=1-\frac{1}{2q},\\
        &N=\frac{(qm)^{2d}}{100}\geq 100q^{10}\cdot (qm)^d,
    \end{align*}
    provided that $m$ is sufficiently large compared to $q$.
\end{theorem}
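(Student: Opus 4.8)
The plan is to realize the locally dense code as a $q$-ary primitive narrow-sense BCH code of designed distance $d+1$ on a suitably chosen block length, with the center extracted from a coset-counting argument. Fix an integer $r$ to be chosen below, set $m'=q^r-1$, and let $\C(A)\subseteq\F_q^{m'}$ be the primitive narrow-sense BCH code over $\F_q$ of designed distance $d+1=4qk+1$, with generator matrix $A$. Its generator polynomial (and hence $A$) can be computed in time $\poly(q^r)$ — all the relevant minimal polynomials over $\F_q$ live in $\F_{q^r}$ — which is $\poly(m)$ provided $r=O(\log m)$, as we shall ensure. The BCH bound gives $\lambda(\C(A))\geq d+1>d$, and by \cref{thm:bch} the codimension satisfies $h=m'-n'\leq d(1-1/q)\log_q(m'+1)$; in particular $n'\leq m'\leq\poly(m)$.

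Since $\C(A)$ has exactly $q^h\leq(m'+1)^{d(1-1/q)}$ cosets, an averaging argument shows that some coset $\vec{u}+\C(A)$ contains at least
\[
\frac{|\B_{q,m'}(\alpha d)|}{q^h}\;\geq\;\frac{\binom{m'}{\lfloor\alpha d\rfloor}(q-1)^{\lfloor\alpha d\rfloor}}{(m'+1)^{d(1-1/q)}}\;\geq\;\frac{(m')^{\,\alpha d-1-d(1-1/q)}}{(2d)^{d}}
\]
vectors of Hamming weight at most $\alpha d$; putting $\vec{s}=-\vec{u}$ then makes $|(\C(A)-\vec{s})\cap\B_{q,m'}(\alpha d)|$ at least this bound. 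The decisive point is that the choice $\alpha=1-1/(2q)$ yields $\alpha d-d(1-1/q)=d/(2q)$, so the exponent of $m'$ above equals $d/(2q)-1=2k-1\geq1$, which is positive and bounded away from zero uniformly in $k$. Therefore, taking $r$ to be the least integer with $q^r-1\geq(Cqm)^{Cq}$ for a suitable absolute constant $C$ — which keeps $m'=q^r-1\leq\poly(m)$ and $r=O(\log m)$ — makes the displayed quantity at least $N=(qm)^{2d}/100$; this is a direct computation from $d=4qk$ and $k\leq m$, the tightest case being small $k$. Thus a pair $(A,\vec{s})$ with all the claimed parameters \emph{exists}, and one checks separately that $N\geq100q^{10}(qm)^{d}$, since this is equivalent to $(qm)^{d}\geq10^{4}q^{10}$ and $d=4qk\geq4q\geq8$ once $m$ is large compared to $q$.

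The remaining step — and the one I expect to be hardest — is to make the choice of $\vec{s}$ \emph{algorithmic}: the averaging argument is non-constructive, and we cannot search over the $q^h$ cosets, since $q^h$ is superpolynomial in $m$ (e.g.\ when $k=\Theta(m)$). Worse, since $\alpha>1/2$ forces $2\alpha d>d$, the code $\C(A)$ has many codewords of weight at most $2\alpha d$, so a \emph{uniformly random} coset is typically far from the densest one and a naive second-moment argument over a random $\vec{s}$ does not work. The plan is instead to exploit the structure of BCH codes to produce an explicitly constructible (or only mildly randomized) center: take $\vec{s}$ to be a suitable sparse vector and lower-bound the number of codewords within Hamming distance $\alpha d$ of it, using the transitivity on coordinates of the automorphism group of primitive BCH codes together with quantitative bounds on the weight enumerator of $q$-ary BCH codes in the range $[d+1,2\alpha d]$ — the $q$-ary analogue of the weight-distribution bookkeeping behind Khot's binary locally dense lattices~\cite{Kho05} (see also~\cite{BBEGSLMM21}). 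With such a center in hand the whole construction runs in time $\poly(q^r)=\poly(m)$, yielding the locally dense code with the stated parameters.
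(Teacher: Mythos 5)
Your first two paragraphs correctly set up the framework — the $q$-ary BCH code with designed distance $d+1$, the codimension bound from \cref{thm:bch}, and the decisive parameter calculation $\alpha d - d(1-1/q) = d/(2q) = 2k$ that makes the averaging argument win. That much matches the paper.

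The gap is in your third paragraph, where you declare the choice of $\vec{s}$ to be a serious obstruction and go looking for an explicit construction via automorphism groups and $q$-ary BCH weight enumerators. That machinery is not needed, and what you are proposing in its place (quantitative bounds on the weight distribution of $q$-ary BCH codes in the interval $[d+1, 2\alpha d]$) is itself a hard open-ended problem that the argument should not depend on. The issue is that you are implicitly imagining $\vec{s}$ drawn uniformly from all of $\F_q^{m'}$, for which your worry about hitting a sparse coset is real. The fix — and this is exactly Khot's trick from~\cite[Lemma 4.3]{Kho05}, which the paper adapts verbatim to the code setting — is to sample $\vec{s}$ uniformly from $B_{m',\alpha d}$, the Hamming sphere of radius $\alpha d$ in $\{0,1\}^{m'}$. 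This biases the induced distribution on cosets exactly in proportion to how many weight-$\alpha d$ vectors each coset contains. Calling a coset $V$ \emph{bad} if $|V\cap B_{m',\alpha d}| < |B_{m',\alpha d}|/(100\,q^{m'-n'})$, one then has
\[
\Pr[\vec{s}\text{ lands in a bad coset}]
=\sum_{V\text{ bad}}\frac{|V\cap B_{m',\alpha d}|}{|B_{m',\alpha d}|}
<\sum_{V\text{ bad}}\frac{1}{100\,q^{m'-n'}}
\leq\frac{1}{100},
\]
since there are at most $q^{m'-n'}$ cosets. No second-moment argument, no weight-distribution estimates, no automorphism group. The remaining calculation that a good coset already has at least $N=(qm)^{2d}/100$ weight-$\leq\alpha d$ vectors is essentially the estimate you carried out. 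So your proposal has the right skeleton but is incomplete at precisely the step you flagged as hardest; the paper closes it with a one-line union bound over cosets rather than the much heavier route you sketch.
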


Combining \cref{thm:hardNCP,thm:redNCPtoMDP,thm:ldc} shows that $\gamma'$-$\MDP_q$ is $\W[1]$-hard under randomized reductions with two-sided error and $\gamma'=\frac{4q}{4q-1}>1$.
Then, as discussed in \cref{sec:tensoring-codes}, coupling this result with a tensoring argument immediately shows that $\gamma$-$\MDP_q$ is $\W[1]$-hard for an arbitrary constant $\gamma\geq 1$,  leading to \cref{thm:hardness-mdp}. 
Similarly, since $k'=O(k)$ in \cref{thm:redNCPtoMDP}, combining \cref{thm:FGhardnessNCP,thm:redNCPtoMDP,thm:ldc} yields \cref{thm:finegrainedMDP}.

\subsubsection{BCH codes over $\F_q$}

The following theorem states the main properties of (narrow-sense, primitive) BCH codes with design minimum distance over an arbitrary finite field (see~\cite{Gur10} for a discussion of BCH codes and related objects).
Although versions of this theorem are well-known, we present a proof in \cref{sec:bchproof} for completeness.
\begin{theorem}[$q$-ary BCH codes]\label{thm:bch}
        Fix a prime power $q$.
        Then, given integers $m'=q^r-1$ and $1\leq d\leq m'$ for some integer $r$, it is possible to construct in time $\poly(m')$ a generator matrix $\GBCH\in\F_q^{m'\times n'}$ such that $\BCH=\C(\GBCH)\subseteq\F_q^{m'}$ has minimum distance at least $d$ and codimension
    \begin{align*}
        m'-n'&\leq \lceil (d-1)(1-1/q)\rceil \log_q(m'+1).
    \end{align*}
\end{theorem}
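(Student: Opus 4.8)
The plan is to define $\BCH$ via its parity-check structure over the extension field $\F_{q^r}$, bound its minimum distance by the BCH bound, and then bound its codimension by counting the number of distinct $q$-ary cyclotomic cosets that must be ``zeroed out''. Concretely, let $\beta$ be a primitive element of $\F_{q^r}$, so that $\F_{q^r}^* = \langle \beta \rangle$ has order $m' = q^r - 1$. Identify $\F_q^{m'}$ with the ring $R = \F_q[x]/(x^{m'}-1)$ by sending $(c_0,\dots,c_{m'-1})$ to $c(x) = \sum_i c_i x^i$. Define $\BCH$ to be the cyclic code consisting of all $c(x) \in R$ such that $c(\beta^j) = 0$ for every $j \in \{1, 2, \dots, d-1\}$. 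First I would record that this is an $\F_q$-linear subspace of $R$ (it is the kernel of the $\F_q$-linear evaluation-at-$\beta^j$ maps) and that it is cyclic. The minimum distance bound $\lambda(\BCH) \ge d$ is then exactly the classical BCH bound: any nonzero codeword vanishing at the $d-1$ consecutive powers $\beta^1,\dots,\beta^{d-1}$ must have Hamming weight at least $d$, which follows from a Vandermonde argument (a would-be codeword of weight $\le d-1$ supported on coordinates $i_1,\dots,i_{d-1}$ gives a square Vandermonde-type system in the nonzero values $c_{i_\ell}$ with nodes $\beta^{i_\ell}$, which is invertible, forcing all $c_{i_\ell}=0$). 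I would either cite this or include the short Vandermonde computation.

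The heart of the argument — and the step I expect to be the main obstacle — is the codimension bound. The codimension of $\BCH$ equals the degree of its generator polynomial $g(x) = \mathrm{lcm}\{ m_{\beta^j}(x) : 1 \le j \le d-1\}$, where $m_{\beta^j}$ is the minimal polynomial of $\beta^j$ over $\F_q$. Since $\deg m_{\beta^j} = |C_j|$ where $C_j = \{ j q^t \bmod m' : t \ge 0\}$ is the $q$-ary cyclotomic coset of $j$, the codimension equals $\sum_{C} |C|$ where $C$ ranges over the distinct cyclotomic cosets meeting $\{1,\dots,d-1\}$, i.e. it equals $\left| \bigcup_{j=1}^{d-1} C_j \right|$. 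Each coset has size dividing $r$, hence at most $r = \log_q(m'+1)$. The crude bound $\sum_C |C| \le (d-1)\,r$ is off by the factor $(1-1/q)$ from what we want, so the real work is to exploit that the cosets overlap enough to save that factor. The key observation is: for each element $a \in \{1,\dots,d-1\}$ that is divisible by $q$, the element $a/q$ also lies in $\{1,\dots,d-1\}$ (since $a/q < a \le d-1$ and $a/q \ge 1$) and lies in the \emph{same} cyclotomic coset as $a$ (because $a \equiv (a/q)\cdot q \pmod{m'}$, using $\gcd(q,m')=1$ so multiplication by $q$ permutes residues and $q\cdot(a/q) = a$ genuinely as integers here). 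Thus among the integers $1,\dots,d-1$, those divisible by $q$ — there are $\lfloor (d-1)/q \rfloor$ of them — are each ``redundant'': removing them does not shrink $\bigcup_j C_j$. Hence $\left|\bigcup_{j=1}^{d-1} C_j\right| = \left|\bigcup_{j \in S} C_j\right|$ where $S = \{1 \le j \le d-1 : q \nmid j\}$ has size $(d-1) - \lfloor (d-1)/q\rfloor = \lceil (d-1)(1-1/q)\rceil$. Bounding each $|C_j| \le r = \log_q(m'+1)$ and summing over $j \in S$ gives the desired codimension bound $m' - n' \le \lceil (d-1)(1-1/q)\rceil \log_q(m'+1)$.

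Finally, I would address the algorithmic claim: a generator matrix is produced in $\poly(m')$ time by computing $g(x)$ — enumerate $j$ from $1$ to $d-1$, compute each cyclotomic coset $C_j$ by repeatedly multiplying by $q$ mod $m'$, collect the distinct cosets, form $g(x) = \prod_{C} \big(\prod_{i \in C}(x - \beta^i)\big)$ by arithmetic in $\F_{q^r}$ (all of $\F_{q^r}$ being representable in $\poly(m')$ space since $r = O(\log m')$), verifying the product lands in $\F_q[x]$, and then the $m' \times (m'-\deg g)$ matrix whose columns are $x^i g(x) \bmod (x^{m'}-1)$ for $i = 0,\dots, m' - 1 - \deg g$ is the required $\GBCH \in \F_q^{m' \times n'}$ with $n' = m' - \deg g \ge m' - \lceil(d-1)(1-1/q)\rceil\log_q(m'+1)$. (One can take $n'$ to be exactly this quantity, or pad; the statement only needs the codimension upper bound.) One subtlety worth a sentence: the theorem asks for minimum distance ``at least $d$'' and codimension at most the stated bound, so there is no tension between the BCH bound possibly being loose and the codimension count — we simply get a code that is at least as good as claimed on both counts. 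I would put the full details in \cref{sec:bchproof} as promised.
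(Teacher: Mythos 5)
Your proof is correct and rests on exactly the same core observation as the paper's: the Frobenius map $\gamma \mapsto \gamma^q$ makes the constraint at $\alpha^a$ redundant for each of the $\lfloor(d-1)/q\rfloor$ multiples $a$ of $q$ in $\{1,\dots,d-1\}$ (it is absorbed by the one at $\alpha^{a/q}$), and $(d-1)-\lfloor(d-1)/q\rfloor = \lceil(d-1)(1-1/q)\rceil$, with each surviving index contributing at most $r = \log_q(m'+1)$ to the codimension. The only differences are cosmetic: you phrase the codimension count via cyclotomic cosets and the degree of the generator polynomial while the paper counts $\F_q$-linear parity-check constraints, and for the distance bound you spell out the Vandermonde argument where the paper cites the subcode-of-Reed--Solomon fact from~\cite{Gur10}.
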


\subsubsection{Locally dense codes from BCH codes}

We now show how to use $q$-ary BCH codes (\cref{thm:bch}) with appropriate parameters to construct locally dense codes satisfying \cref{thm:ldc}.
This construction is similar in spirit to the construction of locally dense lattices by Khot~\cite{Kho05}.

\begin{proof}[Proof of \cref{thm:ldc}]
Suppose that we are given as input $q,k,m$.
Let $d=4q k$. 
Choose $m'$ to be the smallest number of the form $q^r-1$ larger than or equal to $(d q m)^{4q}$, and set $\alpha = 1-\frac{1}{2q}$.
Let $\GBCH\in\F_q^{m'\times n'}$ be the generator matrix of the $\BCH$ code with minimum distance at least $d+1$ and codimension
\begin{equation}\label{eq:codimbch}
    m'-n'\leq \lceil d(1-1/q)\rceil \log_q(m'+1) =  d(1-1/q) \log_q(m'+1)
\end{equation}
guaranteed by \cref{thm:bch}, where the last equality holds by our choice of $d$.
We sample our locally dense code $(A,\vec{s})$ as follows:
\begin{enumerate}
    \item Set $A=\GBCH$ with $\GBCH$ as defined above;
    
    \item Sample $\vec{s}$ uniformly at random from the set of vectors in $\{0,1\}^{m'}$ of Hamming weight exactly $\alpha d$, which we denote by
    $B_{m',\alpha d}$.
\end{enumerate}

By \cref{thm:bch}, this procedure runs in time $\poly(m')=\poly(m)$.
We now show that this procedure outputs with probability at least $0.99$ an $(q, \alpha,d, N, m', n')$-locally dense code $(A, \vec{s})$, where
    \begin{align*}
        &m', n' \leq \poly(m),\\
        &d =4qk,\\
        &\alpha=1-\frac{1}{2q},\\
        &N=\frac{(qm)^{2d}}{100},
    \end{align*}
which yields \cref{thm:ldc}.
It follows directly from the length, codimension, and minimum distance of $\BCH=\C(\GBCH)=\C(A)$ that $m', n' \leq \poly(m)$ and $d = \gamma k=4qk$.
It remains to show that with probability at least $0.99$ over the sampling of $\vec{s}$ as above it holds that
\begin{equation}\label{eq:goods}
    |(\BCH-\vec{s})\cap \B_{q,m'}(\alpha d)|\geq \frac{\binom{m'}{\alpha d}}{100(m'+1)^{d(1-1/q)}} \geq \frac{(qm)^{2d}}{100} = N.
\end{equation}
We follow the reasoning of~\cite[Lemma 4.3]{Kho05} to prove the leftmost inequality.
Call a coset\footnote{A \emph{coset} of a linear code $\C\subseteq\F_q^n$ is a set of the form $\vec{v}+\C$ for some vector $\vec{v}\in\F_q^n$.} $V$ of $\BCH=\C(A)$ \emph{good} if
\begin{equation*}
    |V\cap B_{m',\alpha d}|\geq \frac{|B_{m',\alpha d}|}{100\cdot q^{m'-n'}}
\end{equation*}
and call $V$ \emph{bad} otherwise.
Note that if $V$ is a good coset it follows that
\begin{align*}
   |V\cap \B_{q,m'}(\alpha d)| \geq  |V\cap B_{m',\alpha d}|\geq \frac{|B_{m',\alpha d}|}{100\cdot q^{m'-n'}}
    \geq \frac{\binom{m'}{\alpha d}}{100(m'+1)^{d(1-1/q)}},
\end{align*}
where the third inequality holds because of \cref{eq:codimbch}.
As a result, the leftmost inequality of \cref{eq:goods} holds, and so it is enough to show that $\vec{s}$ lands in a good coset with probability at least $0.99$.

Since $\vec{s}$ is sampled uniformly at random from $B_{m',\alpha d}$, we obtain a representative of coset $V$ with probability
\begin{equation*}
    \frac{|V\cap B_{m',\alpha d}|}{|B_{m',\alpha d}|}.
\end{equation*}
Therefore, we have that
\begin{align*}
    \Pr[\text{$\vec{s}$ lands in a bad coset}] &= \sum_{V: \text{$V$ is a bad coset}}\frac{|V\cap B_{m',\alpha d}|}{|B_{m',\alpha d}|}\\
    &< \sum_{V: \text{$V$ is a bad coset}} \frac{1}{100 q^{m'-n'}}\\
    &\leq \frac{1}{100}.
\end{align*}
The first inequality follows from the definition of a bad coset.
The second inequality holds because there are at most $q^{m'-n'}$ bad cosets.

It remains to prove the rightmost inequality of \cref{eq:goods}.
Recalling that $m'\geq (d q m)^{4q}$, we have that
\begin{align*}
    \frac{\binom{m'}{\alpha d}}{100(m'+1)^{d(1-1/q)}}&\geq \frac{\binom{m'}{\alpha d}}{100(2m')^{d(1-1/q)}}\\
    &\geq \frac{(m')^{\alpha d}}{100 d^d (2m')^{d(1-1/q)}}\\
    &\geq \frac{(m')^{\frac{d}{2q}}}{100 (2d)^d}\\
    &\geq \frac{(dqm)^{2d}}{100 (2d)^d}\\
    &\geq \frac{(qm)^{2d}}{100}.
\end{align*}
The first inequality uses the fact that $m'+1\leq 2m'$. 
The second inequality holds because $\binom{m'}{\alpha d}\geq \left(\frac{m'}{\alpha d}\right)^{\alpha d}\geq \frac{(m')^{\alpha d}}{d^d}$.
The third inequality follows from the choice of $\alpha$ and since $2^{d(1-1/q)} \leq 2^d$.
The fourth inequality holds since $m'\geq (dqm)^{4q}$.
The fifth inequality holds because $d^{2d}\geq (2d)^d$ for all $d\geq 2$.
\end{proof}

\section{The FPT $\CVP_p$ to $\SVP_p$ reduction}\label{sec:CVPtoSVP}

In this section we describe and analyze an FPT reduction from approximate $\CVP_p$ to approximate $\SVP_p$ which works for all $p\geq 1$.
Our reduction is obtained by combining Khot's reduction~\cite{Kho05,BBEGSLMM21} from approximate CVP to approximate SVP with locally dense lattices stemming from Reed-Solomon-based Construction-A lattices, as first studied by Bennett and Peikert~\cite{BP22}.
In conjunction with \cref{thm:W1hardnessCVP}, our reduction yields \cref{thm:hardness-l1}, which we restate here.
\hardnesslone*

\subsection{A reduction with advice}\label{sec:latreduction}

As in \cref{sec:reduction}, we begin by describing our FPT reduction from CVP to SVP in a modular fashion assuming that we are given an appropriate locally dense lattice as advice.
In \cref{sec:ldl}, we give an FPT randomized algorithm to construct locally dense lattices with the desired parameters and replace the advice with this construction to yield the desired FPT reduction from approximate NCP to approximate MDP with two-sided error.
More precisely, we have the following result.

\begin{theorem}\label{thm:redCVPtoSVP}
    Fix real numbers $p,\gamma,\gamma'\geq 1$ and $\alpha\in (0,1)$ additionally satisfying
    \begin{align*}
        \alpha \leq \frac{((\gamma/\gamma')^p-2)^{1/p}}{\gamma}.
    \end{align*}
    Then, there is a randomized algorithm which, for $m$ large enough, on input a $\gamma$-$\CVP_p$ instance $(B,\vec{t},k)$ with $B\in\Z^{m\times n}$, $\vec{t}\in\Z^m$, and $k\in\Z^+$ and a $(p,\alpha,d=\gamma k,N\geq 10^5\cdot(2m(1+\gamma k))^{(\gamma k)^p},m',n')$-locally dense lattice $(A,\vec{s})$ outputs in time 
    polynomial in $m$, $m'$, $k$, and the description bitlength of $B$ and $\vec{t}$ an instance $(\Bfinal,k')$ of $\gamma'$-$\SVP_p$ with $k'<\gamma k$ satisfying the following properties with probability at least $0.99$:
    \begin{itemize}
        \item If $(B,\vec{t},k)$ is a YES instance of $\gamma$-$\CVP_p$, then $(\Bfinal, k')$ is a YES instance of $\gamma'$-$\SVP_p$;
        
        \item If $(B,\vec{t},k)$ is a NO instance of $\gamma$-$\CVP_p$, then $(\Bfinal, k')$ is a NO instance of $\gamma'$-$\SVP_p$. 
    \end{itemize}
\end{theorem}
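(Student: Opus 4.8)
The plan is to prove \cref{thm:redCVPtoSVP} by transporting the proof of \cref{thm:redNCPtoMDP} (and of the structural \cref{lem:Cintproperties}) from the coding world to the lattice world, replacing the random parity-check code used to sparsify $\Cint$ by Khot-style lattice sparsification. Given the $\gamma$-$\CVP_p$ instance $(B,\vec t,k)$ with $B\in\Z^{m\times n}$ and the locally dense lattice $(A,\vec s)$ with $A\in\Z^{m'\times n'}$, I would form the intermediate lattice $\Lint=\lat(\Bint)$ with
\[
\Bint=\begin{pmatrix} B & 0 & -\vec t\\ 0 & A & -\vec s\\ 0 & 0 & 1\end{pmatrix}\in\Z^{(m+m'+1)\times(n+n'+1)},
\]
where the last ``tag'' coordinate serves two purposes: it makes $\Bint$ have full column rank over $\Q$ (so $\Bint$ is a genuine basis and lattice vectors are in bijection with integer coefficient vectors), and it contributes exactly $1$ to the $p$-th power of the $\ell_p$ norm of every ``good'' vector below, which is the source of the slightly stronger ``$-2$'' (rather than ``$-1$'') in the hypothesis on $\alpha$. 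I would then set $k'$ to be an appropriate positive integer with $(k^p+(\alpha\gamma k)^p+1)^{1/p}\le k'$ and $\gamma'k'\le\gamma k$ (the displayed constraint on $\alpha$ makes such an integer available), pick a prime $q$ exceeding both $\gamma k$ and $100$ times an a priori upper bound on the number of ``annoying'' vectors defined below, sample $\vec z\in\F_q^{n+n'+1}$ uniformly at random, set $\Lfinal:=\{\Bint\vec x:\vec x\in\Z^{n+n'+1},\ \langle\vec z,\vec x\rangle\equiv 0\bmod q\}$, compute a basis $\Bfinal$ of $\Lfinal$ from $\Bint$ and a Hermite-normal-form basis of the index-$q$ sublattice $\{\vec x:\langle\vec z,\vec x\rangle\equiv 0\bmod q\}$ of $\Z^{n+n'+1}$, and output $(\Bfinal,k')$. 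Since $q$ has bit-length $\poly(k)\cdot\log m$, all of this runs in time polynomial in $m$, $m'$, $k$, and the bit-lengths of $B$ and $\vec t$.

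Next I would establish the lattice analog of \cref{lem:Cintproperties}. If $(B,\vec t,k)$ is a YES instance, fix $\vec x^\star$ with $\norm{B\vec x^\star-\vec t}_p\le k$; then for each of the $\ge N$ vectors $\vec y$ with $\norm{A\vec y-\vec s}_p\le\alpha\gamma k$ the coefficient vector $\vec z_{\vec y}=(\vec x^\star,\vec y,1)$ yields a nonzero \emph{good} lattice vector $\Bint\vec z_{\vec y}$ with $\norm{\Bint\vec z_{\vec y}}_p^p=\norm{B\vec x^\star-\vec t}_p^p+\norm{A\vec y-\vec s}_p^p+1\le (k')^p$. The crucial point, which has no analog in the code setting, is that these good vectors have coefficient vectors that are pairwise linearly independent modulo $q$: because $q\lat(A)$ has minimum $\ell_p$ norm exceeding $q\gamma k>2\alpha\gamma k$, any coset of $q\lat(A)$ meets the ball $\B^{(p)}_{m'}(\alpha\gamma k)$ in at most one point, so at most one good $\vec y$ falls into any given residue class of $\Z^{n'}/q\Z^{n'}$; hence the $\vec z_{\vec y}$ are pairwise distinct mod $q$, and since each has last coordinate $1$, a collinearity $\vec z_{\vec y}\equiv\lambda\vec z_{\vec y'}$ would force $\lambda\equiv 1$ and thus $\vec z_{\vec y}\equiv\vec z_{\vec y'}$, a contradiction. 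If instead $(B,\vec t,k)$ is a NO instance, then exactly as in \cref{lem:Cintproperties} any lattice vector $\Bint(\vec x,\vec y,\beta)$ of $\ell_p$ norm at most $\gamma k$ must have $\beta=0$ (else $\norm{B\vec x-\beta\vec t}_p>\gamma k$) and then $\vec y=\vec 0$ (else $\norm{A\vec y}_p>\gamma k$), so the \emph{annoying} vectors are precisely the $(B\vec x,\vec 0,0)$ with $\norm{B\vec x}_p\le\gamma k$; their number is at most $\card{\lat(B)\cap\B^{(p)}_m(\gamma k)}$, which is at most $(2m(1+\gamma k))^{(\gamma k)^p}$ up to a constant factor absorbed into the hypothesis on $N$ and the choice of $q$, and each annoying vector has a coefficient vector that is nonzero mod $q$ because $q>\gamma k$ and no nonzero integer vector of $\ell_p$ norm at most $\gamma k$ is divisible by $q$.

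With these facts in hand, the probabilistic analysis is the verbatim analog of the proof of \cref{thm:redNCPtoMDP}. For any $\vec x$ that is nonzero mod $q$ we have $\Pr_{\vec z}[\langle\vec z,\vec x\rangle\equiv 0\bmod q]=1/q$, and the indicators $\mathbf 1_{\{\langle\vec z,\vec x\rangle\equiv 0\}}$ and $\mathbf 1_{\{\langle\vec z,\vec x'\rangle\equiv 0\}}$ are independent whenever $\vec x$ and $\vec x'$ are linearly independent mod $q$. In the YES case, applying \cref{lem:chebyshev} to the $\ge N$ pairwise-linearly-independent good coefficient vectors bounds the probability that no good vector survives into $\Lfinal$ by $q/N\le 0.01$, so $\mindistp(\Lfinal)\le k'$ and $(\Bfinal,k')$ is a YES instance of $\gamma'$-$\SVP_p$. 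In the NO case, a union bound over the annoying vectors bounds the probability that some annoying vector survives by $(\text{number of annoying vectors})/q\le 0.01$; since $\Lfinal\subseteq\Lint$ and the only remaining short vectors one must rule out---the $q$-divisible ones---have $\ell_p$ norm at least $q>\gamma k$, this gives $\mindistp(\Lfinal)>\gamma k\ge\gamma'k'$, so $(\Bfinal,k')$ is a NO instance. A union bound over the two bad events yields success probability at least $0.99$, and the bounds on the running time, on $k'$, and the fact that $k'<\gamma k$ are immediate from the construction.

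I expect the main obstacle to be precisely the arithmetic that is invisible in the code case of \cref{thm:redNCPtoMDP}: ensuring $\Bint$ is a bona fide basis (handled by the tag coordinate) so that the coefficient-level sparsification is legitimate; choosing the prime $q$ large enough that no $q$-divisible vector of $\Lint$ of norm at most $\gamma k$ exists (so that annoying and good coefficient vectors are nonzero mod $q$), while the locally dense structure still forces pairwise $\F_q$-independence of the good coefficient vectors; computing $\Lfinal$ as an integer basis of the correct rank in polynomial time with controlled bit-length via Hermite normal form; and carrying out the $\ell_p$ arithmetic that makes the two inequalities $(k^p+(\alpha\gamma k)^p+1)^{1/p}\le k'$ and $\gamma'k'\le\gamma k$ simultaneously satisfiable by an integer $k'$ under the stated hypothesis on $\alpha$. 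Once these points are dispatched, everything else is a faithful translation of \cref{lem:Cintproperties} and the proof of \cref{thm:redNCPtoMDP}.
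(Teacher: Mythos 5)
Your proposal is correct in substance, but it takes a genuinely different route to the sparsification step than the paper does, and the difference is worth noting. The paper picks a prime $\rho$ in a fixed interval around $(2m(1+\gamma k))^{(\gamma k)^p}$, samples a vector $\vec v\in\{0,\dots,\rho-1\}^{m+m'+1}$, and takes $\Lfinal=\{\vec w\in\Lint:\langle\vec v,\vec w\rangle\equiv 0\bmod\rho\}$; that is, the random constraint lives on the \emph{ambient} space $\Z^{m+m'+1}$ and applies directly to lattice vectors. In the YES case the paper then needs the good lattice vectors $\vec w$ (each with last coordinate $1$ and $\norm{\vec w}_\infty\le\norm{\vec w}_p\le k'<\rho/2$) to be pairwise linearly independent mod~$\rho$, which follows at once from the boundedness of the entries plus the shared $1$ in the last coordinate; no reasoning about $\lat(A)$ modulo the prime is needed. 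Your construction instead puts the random constraint on the \emph{coefficient} vectors $\vec x\in\Z^{n+n'+1}$, setting $\Lfinal=\Bint\cdot\{\vec x:\langle\vec z,\vec x\rangle\equiv 0\bmod q\}$. This buys a simpler sublattice computation (HNF of an index-$q$ sublattice of $\Z^{n+n'+1}$, then multiply by $\Bint$, versus intersecting $\Lint$ with a Construction-A lattice), but you pay for it with the extra argument that good coefficient vectors $(\vec x^\star,\vec y,1)$ are pairwise independent mod~$q$; your argument via $\lambda_1^{(p)}(q\lat(A))>2\alpha\gamma k$ is correct, as is the observation that $q>\gamma k$ ensures the annoying coefficient vectors $(\vec x,\vec 0,0)$ are nonzero mod~$q$ (since otherwise $B\vec x\in q\lat(B)$ would be too short). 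Two small things to tighten: you should also impose an \emph{upper} bound $q\le N/100$ on the prime so that the Chebyshev bound $q/N\le 0.01$ actually holds (there is room by Bertrand's postulate, since the desired window is $\bigl(100(2m(1+\gamma k))^{(\gamma k)^p},\,10^3(2m(1+\gamma k))^{(\gamma k)^p}\bigr]$); and your insistence on an \emph{integer} $k'$ in the interval $\bigl[(k^p+(\alpha\gamma k)^p+1)^{1/p},\,\gamma k/\gamma'\bigr]$ is not guaranteed for small $k$ — the paper sidesteps this by allowing $k'=\gamma k/\gamma'$ and noting in a footnote that one can always pick $\gamma$ to be an integer multiple of $\gamma'$.
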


We prove \cref{thm:redCVPtoSVP} via the following algorithm.
On input a $\gamma$-$\CVP_p$ instance $(B, \vec{t}, k)$ with $B \in \Z^{m \times n}$ and $\vec{t} \in \Z^m$,
we set the intermediate lattice $\lat_{\textrm{int}}$ to be the lattice generated by
\[
\Bint := \begin{pmatrix}
B & 0_{m\times n'} & -\vec{t} \\
0_{m'\times n} & A & -\vec{s}\\
\vec{0}_n & \vec{0}_{n'} & 1
\end{pmatrix} \ \text{,}
\]
where $(A,\vec{s})$ is the locally dense lattice described in \cref{thm:redCVPtoSVP}.
We add the bottom $(0,\dots,0,1)$ row to $\Bint$ to ensure that it has full column rank whenever $A$ and $B$ have full column rank as well.

Then, we add an appropriate random constraint to $\Lint$ in order to obtain the final $\SVP_p$ instance.
More precisely, let $\rho$ be a prime in the interval\footnote{Let $a=100(2m(1+\gamma k))^{(\gamma k)^p}$. By the prime number theorem, the density of primes in the interval $(a, 2a]$ is $1/\poly(\log a) =1/\poly(m,k)$.
Therefore, we can sample a prime $\rho$ with high probability in time $\poly(m,k)$ by repeatedly sampling an integer from this interval uniformly at random and then checking whether it is prime (which can be done in time $\poly(\log a)=\poly(m,k)$~\cite{AKS04}).} $(100(2m(1+\gamma k))^{(\gamma k)^p},200(2m(1+\gamma k))^{(\gamma k)^p}]$.
Sample a vector $\vec{v}\in \Z^{m+m'+1}$ by sampling each entry independently and uniformly at random from $\{0,\dots,\rho-1\}$.
We define $\Bfinal$ to be the (integral) basis of the sublattice $\Lfinal=\lat(\Bfinal)\subseteq \Lint$ defined as
\begin{equation*}
    \Lfinal =\{\vec{w}\in\Lint: \langle\vec{v},\vec{w}\rangle=0 \Mod \rho\}.
\end{equation*}
For fixed $p,\gamma\geq 1$, we can compute $\Bfinal$ given $\Bint$, $\vec{v}$, and $\rho$ as inputs in time polynomial in $m'$, $k$, $\log\rho$, and the description bitlength of $B$ and $\vec{t}$.
This can be done by first computing a basis $B'$ of the lattice $\lat_{\vec{v},\rho}=\{\vec{w}\in\Z^{m+m'+1}:\langle\vec{v},\vec{w}\rangle = 0 \Mod\rho\}$.
This step can be done in time $\poly(m,m',\log\rho)$ since $\lat_{\vec{v},\rho}$ is the Construction-A lattice $\lat_{\vec{v},\rho} = \C_{\vec{v},\rho} + \rho\Z^{m+m'+1}$ with $\C_{\vec{v},\rho}$ the $\rho$-ary linear code with parity-check matrix $H=\vec{v}^T$.
Then, we compute a basis of $\Lfinal = \Lint\cap \lat_{\vec{v},\rho}$ from $\Bint$ and $B'$ following the efficient procedure discussed in~\cite[Section 4]{Mic10}.
Overall, this requires time polynomial in $m'$, $\log\rho$, and the description bitlength of $\Bint$. In turn, this is polynomial in $m'$, $k$, and the description bitlength of $B$ and $\vec{t}$, as desired.

Finally, we compute\footnote{We can always set $\gamma$ to be an integer multiple of $\gamma'$ to ensure that $k'$ is an integer. For the sake of readability, we avoid taking floors and ceilings.} $k'=\frac{\gamma k}{\gamma'}$ and output $(\Bfinal,k')$ as the $\SVP_p$ instance.

\subsection{Proof of \cref{thm:redCVPtoSVP}}

In order to prove \cref{thm:redCVPtoSVP}, we begin by establishing some useful properties of the intermediate lattice $\lat_{\textrm{int}}$ constructed by the algorithm from \cref{sec:latreduction}.

\begin{lemma}\label{lem:Lintproperties}
    Fix real numbers $p,\gamma,\gamma'\geq 1$ and $\alpha\in (0,1)$ additionally satisfying
    \begin{align*}
        \alpha \leq \frac{((\gamma/\gamma')^p-2)^{1/p}}{\gamma}.
    \end{align*}
Given a $\gamma$-$\CVP_p$ instance $(B,\vec{t},k)$ with $B \in \Z^{m \times n}$ and $\vec{t} \in \Z^m\setminus \lat(B)$ and a $(p,\alpha,d=\gamma k,N\geq 10^5(2m(1+\gamma k))^{(\gamma k)^p},m',n')$-locally dense lattice,
the algorithm from \cref{sec:latreduction} constructs $\Lint=\lat(\Bint)$ in time polynomial in $m'$, $k$, $\log\rho$, and the description bitlength of $B$ and $\vec{t}$ satisfying the following properties with probability at least $0.99$:
    \begin{itemize}
        
        \item If $(B,\vec{t},k)$ is a YES instance of $\gamma$-$\CVP_p$, then there are at least $N$ vectors $\vec{z}$ in $\Lint$ such that $\|\vec{z}\|_p\leq k'= \frac{\gamma k}{\gamma'}$ and whose last coordinate equals $1$. 
        We call such vectors \emph{good};
        
        \item If $(B,\vec{t},k)$ is a NO instance of $\gamma$-$\CVP_p$, then there are at most $(2m(1+\gamma k))^{(\gamma k)^p}$ nonzero vectors $\vec{z}$ in $\Lint$ such that $\|\vec{z}\|_p\leq \gamma k$. We call such vectors \emph{annoying}.
    \end{itemize}        
\end{lemma}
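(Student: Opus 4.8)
The plan is to establish the two items separately; both are deterministic consequences of the given locally dense lattice advice, so the probability $0.99$ in the statement only reflects successfully sampling the prime $\rho$ that appears in the running-time bound (this prime is irrelevant to the structural claims about $\Lint$). Throughout, I would use that $\Bint$ has full column rank thanks to the appended bottom row $(\vec{0}_n,\vec{0}_{n'},1)$, so that every $\vec{z}\in\Lint$ has a \emph{unique} coefficient vector $(\vec{x},\vec{y},c)\in\Z^{n+n'+1}$, with $\vec{z}=(B\vec{x}-c\vec{t},\,A\vec{y}-c\vec{s},\,c)$.

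\medskip\noindent\textbf{The YES case.} Suppose $\dist_p(\lat(B),\vec{t})\le k$, witnessed by $\vec{x}\in\Z^n$ with $\|B\vec{x}-\vec{t}\|_p\le k$. By the locally dense lattice property there are at least $N$ distinct points $A\vec{y}-\vec{s}\in(\lat(A)-\vec{s})\cap\B^{(p)}_{m'}(\alpha d)$. For each such $\vec{y}$ set $\vec{z}_{\vec{y}}:=\Bint(\vec{x},\vec{y},1)^{\top}=(B\vec{x}-\vec{t},\,A\vec{y}-\vec{s},\,1)$, whose last coordinate is $1$. Then
\[
\|\vec{z}_{\vec{y}}\|_p^p=\|B\vec{x}-\vec{t}\|_p^p+\|A\vec{y}-\vec{s}\|_p^p+1\le k^p+(\alpha\gamma k)^p+1\le 2k^p+(\alpha\gamma k)^p,
\]
using $k\ge1$ and $d=\gamma k$. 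The hypothesis $\alpha\le((\gamma/\gamma')^p-2)^{1/p}/\gamma$ rearranges to $(\alpha\gamma)^p+2\le(\gamma/\gamma')^p$; multiplying through by $k^p$ gives $\|\vec{z}_{\vec{y}}\|_p^p\le(\gamma k/\gamma')^p=(k')^p$, i.e. $\|\vec{z}_{\vec{y}}\|_p\le k'$. Distinct points $A\vec{y}-\vec{s}$ yield distinct $\vec{z}_{\vec{y}}$ (the middle block already differs), so $\Lint$ has at least $N$ good vectors.

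\medskip\noindent\textbf{The NO case.} Let $\vec{z}=(B\vec{x}-c\vec{t},\,A\vec{y}-c\vec{s},\,c)$ be annoying, so $\|\vec{z}\|_p\le\gamma k$. If $c\ne0$, the strengthened NO promise of $\gamma$-$\CVP_p$ applied with the integer multiple $\alpha'=c$ gives $\|B\vec{x}-c\vec{t}\|_p\ge\dist_p(\lat(B),c\vec{t})>\gamma k$, contradicting $\|\vec{z}\|_p\le\gamma k$; hence $c=0$. With $c=0$ the middle block lies in $\lat(A)$ and has $\ell_p$ norm at most $\gamma k=d<\mindistp(\lat(A))$, forcing $A\vec{y}=\vec{0}$ and hence $\vec{y}=\vec{0}$ (as $A$ is a basis). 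So every annoying vector equals $(B\vec{x},\vec{0},0)$ with $B\vec{x}\in\lat(B)\subseteq\Z^m$ and $\|B\vec{x}\|_p\le\gamma k$, and distinct annoying vectors give distinct points $B\vec{x}$. It remains to bound $|\Z^m\cap\B^{(p)}_m(\gamma k)|$: any such integer vector has at most $(\gamma k)^p$ nonzero coordinates (each contributes at least $1$ to the $p$-th power of its norm), each of absolute value at most $\gamma k$; encoding it as a padded length-$(\gamma k)^p$ sequence of ``moves'' (each either idle or choosing one of at most $2m\gamma k$ admissible (coordinate, value) pairs) injects this set into a set of size at most $(1+2m\gamma k)^{(\gamma k)^p}\le(2m(1+\gamma k))^{(\gamma k)^p}$, as required.

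\medskip The argument is otherwise routine; the two points needing care are (i) in the YES case, carrying the ``$+1$'' from the last coordinate through and absorbing it via $k\ge1$ \emph{before} invoking the constraint on $\alpha$, with all comparisons done at the level of $p$-th powers; and (ii) in the NO case, relying on the \emph{strengthened} NO promise that every nonzero integer multiple $c\vec{t}$ of the target is $\gamma k$-far from $\lat(B)$ — without it, an annoying vector with $c\ne0$ cannot be excluded. The integer-point count in the NO case is the only estimative step, and any sufficiently crude over-count of the $\ell_p$ ball matches the claimed bound.
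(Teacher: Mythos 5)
Your proposal is correct and follows essentially the same route as the paper's proof: the YES case multiplies by a good coefficient vector $(\vec{x},\vec{y},1)^\top$ and absorbs the ``$+1$'' from the appended row via $k\geq 1$ before applying the constraint on $\alpha$ in $p$-th-power form, and the NO case uses the strengthened promise to force the last coordinate to vanish, then uses $\lambda_1^{(p)}(\lat(A))>d$ to force the middle block to vanish, and finally counts integer points in a small $\ell_p$ ball. Your ``moves''-encoding bound on the integer-ball size is a harmless cosmetic variant of the paper's $\binom{m}{(\gamma k)^p}(1+2\gamma k)^{(\gamma k)^p}$ bound, both yielding $(2m(1+\gamma k))^{(\gamma k)^p}$.
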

\begin{proof}
The claim about the running time is directly verifiable.
We proceed to argue that the two items hold.

First, suppose that $(B,\vec{t},k)$ is a YES instance of $\gamma$-$\CVP_p$.
This means that there is a vector $\vec{x}\in\Z^n$ such that $\|B\vec{x}-\vec{t}\|_p\leq k$.
Moreover, we know that there are at least $N\geq 10^5\cdot(2m(1+\gamma k))^{(\gamma k)^p}$ vectors $\vec{y}\in\Z^{n'}$ such that
\begin{equation*}
    \|A\vec{y}-\vec{s}\|_p\leq \alpha d.
\end{equation*}
For each such $\vec{y}$, consider the associated vector $\vec{z}_{\vec{y}}=(\vec{x},\vec{y},1)$ and note that
\begin{align*}
    \|\Bint \vec{z}_{\vec{y}}\|_p^p = \|B\vec{x}-\vec{t}\|_p^p + \|A\vec{y}-\vec{s}\|_p^p+1
    \leq k^p + (\alpha d)^p + 1
    \leq \left(\frac{\gamma k}{\gamma'}\right)^p = (k')^p,
\end{align*}
where the last inequality follows by the constraints on $\gamma$, $\gamma'$, $\alpha$ in the lemma statement and the fact that $d=\gamma k$.
Since the last coordinate of $\Bint \vec{z}_{\vec{y}}$ is always $1$, 
we conclude that there are at least $N$ good vectors.

On the other hand, suppose that $(B,\vec{t},k)$ is a NO instance of $\gamma$-$\CVP_p$.
This means that for every vector $\vec{x}\in\Z^n$ and $\alpha\in\Z\setminus\{0\}$ it holds that $\|B\vec{x}-\alpha\vec{t}\|_p>\gamma k$.
Consider any vector $\vec{z}=(\vec{x},\vec{u},\alpha)$.
Note that every annoying vector $\vec{z}$ must have $\alpha=0$ by this property.
Furthermore, because $\alpha=0$, it must also be the case that $\vec{u}=0$, since $\|A\vec{u}\|_p>\gamma k$ for all $\vec{u}\in\Z^{n'}\setminus\{\vec{0}\}$.
Therefore, the number of annoying vectors is upper bounded by
\begin{equation*}
    |\lat(B)\cap\B^{(p)}_m(\gamma k)| \leq \binom{m}{(\gamma k)^p} (1+2\gamma k)^{(\gamma k)^p}\leq (2m(1+\gamma k))^{(\gamma k)^p}.\qedhere
\end{equation*}
\end{proof}

We are now ready to prove \cref{thm:redCVPtoSVP} with the help of \cref{lem:Lintproperties}.
\begin{proof}[Proof of \cref{thm:redCVPtoSVP}]
    The claims regarding the time required to construct $\Lfinal$ and the bound on $k'$ are directly verifiable.
    We proceed to argue the two items of the theorem statement.

Recall that $\Lfinal$ is defined as the random sublattice
\begin{equation*}
    \Lfinal =\{\vec{w}\in\Lint: \langle\vec{v}, \vec{w}\rangle=0 \Mod \rho\},
\end{equation*}
where $\rho$ is a prime in the interval $(100(2m(1+\gamma k))^{(\gamma k)^p}, 200(2m(1+\gamma k))^{(\gamma k)^p}]$ and the entries of $\vec{v}\in \Z^{m+m'+1}$ are sampled independently and uniformly at random from $\{0,\dots,\rho-1\}$.

Suppose that $(B,\vec{t},k)$ is a YES instance of $\gamma$-$\CVP_p$.
Our goal is to show that, with probability at least $0.99$ over the randomness of the algorithm, there is $\vec{w'}\in\Lfinal$ such that $\|\vec{w'}\|_p\leq k'=\frac{\gamma k}{\gamma'}$.
In this case, \cref{lem:Lintproperties} ensures that there are at least $N=10^5\cdot(2m(1+\gamma k))^{(\gamma k)^p}\geq 100\rho$ good vectors $\vec{w}\in\Lint$ such that $\|\vec{w}\|_p\leq k'$ and whose last coordinate equals $1$.
Then, it follows that any two distinct good vectors $\vec{w}^{(1)}, \vec{w}^{(2)}\in\Lint$ are linearly independent modulo $\rho$, because their last coordinates equal $1$ and all their entries are bounded in absolute value by $k' < \rho/2$.
As a result, the random variables $\mathbf{1}_{\{\langle\vec{v}, \vec{w}^{(1)}\rangle=0\Mod \rho\}}$ and $\mathbf{1}_{\{\langle\vec{v}, \vec{w}^{(2)}\rangle=0\Mod \rho\}}$ are independent.
Since all these (at least $N$) pairwise independent random variables follow a Bernoulli distribution with success probability exactly $1/\rho$, \cref{lem:chebyshev} implies that the probability over the sampling of $\vec{v}$ that there exists at least one good vector $\vec{w}\in\Lint$ such that $\langle\vec{v},\vec{w}\rangle=0\Mod \rho$, and hence $\vec{w}\in\Lfinal$, is at least
\begin{equation*}
    1 - \rho/N \geq 0.99.
\end{equation*}
It follows that $(\Bfinal,k')$ is a YES instance of $\gamma'$-$\SVP_p$ with probability at least $0.99$.

Now, suppose that $(B,\vec{t},k)$ is a NO instance of $\gamma$-$\CVP_p$.
\cref{lem:Lintproperties} guarantees that there are at most $N'=(2m(1+\gamma k))^{(\gamma k)^p}$ annoying nonzero vectors $\vec{w}\in\Lint$ such that $\|\vec{w}\|_p\leq \gamma k$.
For any given nonzero integer vector $\vec{w}$, the probability (over the sampling of $\vec{v}$) that $\langle\vec{v}, \vec{w}\rangle=0\Mod \rho$ is exactly $1/\rho$.
Consequently, a union bound over the at most $N'$ annoying vectors in $\Lint$ shows that the probability that there exists a vector $\vec{w'}\in\Lfinal$ such that $\|\vec{w'}\|_p\leq \gamma k$ is at most
\begin{equation*}
    N'/\rho \leq 0.01.
\end{equation*}
It follows that $(\Bfinal,k')$ is a NO instance of $\gamma'$-$\SVP_p$ with probability at least $0.99$.
\end{proof}

\subsection{Finalizing the reduction}\label{sec:ldl}

In this section we provide a randomized construction of locally dense lattices based on Construction~A lattices stemming from Reed-Solomon codes, which can be combined with \cref{thm:redCVPtoSVP} to yield the desired randomized FPT reduction with two-sided error and without advice.
More precisely, we prove the following theorem.
\begin{theorem}\label{thm:ldl}
    Fix real numbers $p\geq 1$ and $\gamma'\in[1,2^{1/p})$.
    Let $\eps=(\gamma')^{-p}-1/2>0$ and set\footnote{We made no effort to optimize constants.}
    \begin{equation*}
        \gamma=\left\lceil\max\left(12/\eps,\frac{1}{(1+\eps/2)^{1/p}-1}\right)\right\rceil.
    \end{equation*}
    There exists a randomized algorithm which when given as input positive integers $m$ and $k$ runs in time $\poly(m,k)$ and outputs with probability at least $0.99$ a $(p,\alpha,d, N, m', n')$-locally dense lattice $(A, \vec{s})$, where
\begin{align*}
    &\alpha=\left(\frac{1}{(\gamma')^p}-\frac{2}{\gamma^p}\right)^{1/p}, \\
    &d= \gamma k,\\
    &N= (2m(1+\gamma k))^{3(\gamma k)^p}\geq 10^5\cdot (2m(1+\gamma k))^{(\gamma k)^p},\\
    &n',m'=\poly(m,k),
\end{align*}
provided that $m$ is sufficiently large compared to $p$.
\end{theorem}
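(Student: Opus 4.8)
The plan is to construct the locally dense lattice via a Construction-A lattice built from a Reed--Solomon code, following Bennett and Peikert~\cite{BP22}, and then to choose a random shift $\vec{s}$ lying in a ``good'' coset of the base lattice, exactly as in the proof of \cref{thm:ldc} but with Hamming distance replaced by $\ell_p$ distance. Concretely, I would pick a prime power $\rho = \poly(m,k)$ large enough (of size roughly $(2m(1+\gamma k))^{c(\gamma k)^p}$ for a suitable constant $c$ depending only on $p$), take a Reed--Solomon code $\C_{\RS} \subseteq \F_\rho^{m'}$ with block length $m'=\poly(m,k)$ and minimum distance $D' > d$ (which requires dimension roughly $m'-D'$, hence rate bounded away from $1$ by only a vanishing amount), and set $A$ to be a basis of the Construction-A lattice $\lat(A) = \C_{\RS} + \rho\Z^{m'}$, viewing codeword coordinates as integers in $\{0,1,\dots,\rho-1\}$. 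The key fact from~\cite{BP22} is that such a lattice has $\mindistp(\lat(A)) > d$ (the minimum distance is controlled by the code's minimum distance, since any nonzero lattice vector either lies in $\rho\Z^{m'}$, hence has $\ell_p$ norm at least $\rho \gg d$, or reduces to a nonzero codeword, hence has at least $D' > d^p$-many nonzero coordinates each contributing at least $1$ to the $p$-th power of the norm); crucially, $d$ depends only on $\gamma$ and $k$ and \emph{not} on $m'$.

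Next I would carry out the coset-counting argument for the density condition. Following~\cite[Lemma 4.3]{Kho05} and the proof of \cref{thm:ldc}, sample $\vec{s}$ uniformly from the set $S$ of integer vectors in $\{0,1,\dots,\lfloor\alpha d\rfloor\}^{m'}$ (or an appropriate weight-constrained subset) with $\|\vec{s}\|_p \le \alpha d$; call a coset $V$ of $\lat(A)$ \emph{good} if $|V \cap S| \ge |S|/(100 \cdot \mathrm{covol}(\lat(A)))$ where $\mathrm{covol}(\lat(A)) = \rho^{m'-n'}$ is the index of $\lat(A)$ in $\Z^{m'}$, and \emph{bad} otherwise. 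Since the bad cosets collectively capture at most a $1/100$ fraction of $S$, the shift $\vec{s}$ lands in a good coset with probability at least $0.99$; when it does, $(\lat(A)-\vec{s})$ contains at least $|S|/(100\rho^{m'-n'})$ vectors of $\ell_p$ norm at most $\alpha d$. It then remains to verify the purely arithmetic inequality $|S|/(100\rho^{m'-n'}) \ge N = (2m(1+\gamma k))^{3(\gamma k)^p}$. This follows by choosing $m'$ and $\rho$ large enough relative to $m$, $k$: the numerator $|S|$ is roughly $(\alpha d/m')^{-\text{something}}$... more precisely $|S| \ge \binom{m'}{(\alpha d)^p}$-type quantities that grow like $(m')^{\Theta((\alpha d)^p)}$, while the codimension factor $\rho^{m'-n'} = \rho^{D'}$ is $\poly(\rho)^{\Theta(d^p)}$, and since we get to make $\rho$ polynomially small in $m'$ while $m'$ can be taken as large a polynomial in $m$ as we like, the net exponent of $m'$ in the ratio is positive, giving the desired bound. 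The choices of $\gamma$ and $\eps$ in the statement are precisely what make $\alpha = ((\gamma')^{-p} - 2\gamma^{-p})^{1/p}$ well-defined and positive (one needs $\gamma^p > 2(\gamma')^p$, i.e.\ $\gamma$ large, which the formula $\gamma \ge \lceil 12/\eps \rceil$ ensures), and also what makes $d = \gamma k$ and $\alpha$ satisfy the hypothesis $\alpha \le ((\gamma/\gamma')^p - 2)^{1/p}/\gamma$ of \cref{thm:redCVPtoSVP}.

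I expect the main obstacle to be the last step: getting the Construction-A / Reed--Solomon parameters to simultaneously satisfy all constraints with $m', n' = \poly(m,k)$ and $d$ \emph{independent} of $m'$. The tension is that a large minimum-distance code forces large codimension (here codimension $\approx D' > d^p$), which shrinks the density ratio; one must balance this against the freedom to take $m'$ a large polynomial and $\rho$ a correspondingly controlled polynomial. The reason Reed--Solomon codes work (where, say, $q$-ary BCH codes would not, as in the $\MDP$ case) is that over a large alphabet $\F_\rho$ they meet the Singleton bound, so a code of minimum distance $D'$ has codimension exactly $D'-1$, which is the smallest possible — this is exactly the ``close to the sphere-packing/Singleton bound'' property the analysis needs, and it is available over $\F_\rho$ for $\rho$ large but not over small fixed fields. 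I would also need to double-check that Construction-A preserves the relevant $\ell_p$ geometry (that a shift by $\vec{s}$ of small $\ell_p$ norm genuinely produces many short lattice-minus-shift vectors, not merely many short coset representatives modulo $\rho\Z^{m'}$), but since $\alpha d \ll \rho$ this reduction-mod-$\rho$ issue does not arise: vectors of $\ell_p$ norm at most $\alpha d$ have all coordinates of absolute value less than $\rho$, so they are in bijection with their residues, and the counting goes through cleanly.
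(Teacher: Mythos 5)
Your high-level plan is right (Reed--Solomon Construction-A lattices \`a la Bennett--Peikert, a random binary shift, and a Khot-style coset-counting argument), but the density arithmetic does not close, and the reason is precisely the point you flagged as the "main obstacle" but then waved away.

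The problem is the interplay between your minimum-distance argument and your codimension. You lower-bound $\lambda_1^{(p)}(\lat(A))$ by the naive Hamming-weight count: any nonzero lattice vector not in $\rho\Z^{m'}$ reduces to a nonzero codeword with at least $D'$ nonzero coordinates, so $\|\vec{v}\|_p^p \geq D'$, and to get $\lambda_1^{(p)} > d$ you need $D' > d^p = (\gamma k)^p$. By the Singleton bound this forces codimension $\geq D' - 1 > d^p - 1$, and the index of the lattice is $\rho^{D'-1} \geq \rho^{d^p - 1}$. On the other side, the number of binary shifts you average over is $|S| = \binom{m'}{(\alpha d)^p} \leq (m')^{(\alpha d)^p}$, where $(\alpha d)^p = \alpha^p d^p$ and $\alpha^p = (\gamma')^{-p} - 2\gamma^{-p} < 1$ always. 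Since a Reed--Solomon code requires block length at most field size, you are forced to take $\rho \geq m'$; your suggestion that "$\rho$ can be made polynomially small in $m'$" is simply not available. With $\rho \geq m'$, your density ratio is at most
\[
\frac{\binom{m'}{(\alpha d)^p}}{100\,\rho^{D'-1}} \;\leq\; \frac{(m')^{\alpha^p d^p}}{100\,(m')^{d^p - 1}} \;=\; \frac{(m')^{1 - (1-\alpha^p)d^p}}{100},
\]
and since $\alpha^p < 1$, the exponent of $m'$ is negative once $d$ is at all large, so the ratio tends to $0$ as $m'$ grows. Making $m'$ a larger polynomial in $m$, as you propose, makes things strictly worse, not better.

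What the paper's proof does, and what is genuinely needed, is to not pay the full factor $d^p$ in the codimension. It invokes two specific facts from Bennett--Peikert (Lemmas~\ref{lem:distlatRS} and~\ref{lem:shiftlatRS}): the $\ell_p$ minimum distance of $\latRSparam$ is at least $(2\ell)^{1/p}$ where $\ell$ is the dimension of $\RSparam$, so one only needs $\ell \approx (\gamma k)^p / 2$ (half of what the naive Hamming-weight argument demands); and the density of binary weight-$w$ vectors in a random coset is at least $\Omega\bigl(\binom{q}{w}/q^\ell\bigr)$, with exactly this $q^\ell$ in the denominator, \emph{not} the naive averaging over the full $q^{q-\ell}$ cosets. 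The whole point of the choice $\gamma' < 2^{1/p}$, $\eps = (\gamma')^{-p} - 1/2 > 0$, and the large $\gamma$ is to arrange $w - \ell \gtrsim \eps(\gamma k)^p > 0$, which is possible only because of that factor-of-$2$ savings: one has $w = (\alpha d)^p \approx \alpha^p d^p$ with $\alpha^p$ slightly above $1/2$, and $\ell \approx d^p/2$, so $w - \ell > 0$; with your codimension $> d^p$, there is no room at all because $\alpha^p < 1$. This is the same qualitative phenomenon as the sphere-packing/unique-decoding-radius factor of $2$ that distinguishes binary BCH codes from $q$-ary ones in the $\MDP$ argument. Without Bennett--Peikert's sharper minimum-distance bound (\cref{lem:distlatRS}) and their coset-density lemma (\cref{lem:shiftlatRS}), the counting simply does not go through, and a straightforward "min distance $=$ Hamming weight" argument cannot replace it.
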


Combining \cref{thm:W1hardnessCVP,thm:redCVPtoSVP,thm:ldl} yields \cref{thm:hardness-l1}.
Similarly, observing that $k'=O(k)$ in \cref{thm:redCVPtoSVP}, combining \cref{thm:FGhardnessCVP,thm:redCVPtoSVP,thm:ldl} yields \cref{thm:finegrainedSVP}.

\subsubsection{Locally dense lattices from Reed-Solomon codes}

The locally dense lattices described in \cref{thm:ldl} are obtained via Construction A lattices based on Reed-Solomon codes, which were analyzed in~\cite{BP22}.

For a fixed prime $q$, we define the Reed-Solomon code with block length $q$ and dimension $\ell$, which we denote by $\RSparam$, as
\begin{equation*}
    \RSparam=\left\{(f(\zeta))_{\zeta\in\F_q}:f\in\F_q[x],\deg(f)<\ell\right\}.
\end{equation*}
Note that $\ell\leq q$.
It is well known that it is possible to construct a generator matrix for $\RSparam$ in time $\poly(q)$.

Given a Reed-Solomon code $\RSparam$, we define the associated Reed-Solomon (Construction A) lattice $\latRSparam$ as
\begin{equation*}
    \latRSparam=\{\vec{x}\in\Z^q: \vec{x}\Mod q\in\RSparam\}=\RSparam+q\Z^q.
\end{equation*}
Since we can construct a generator matrix of $\RSparam$ in time $\poly(q)$, we can also construct a basis of $\latRSparam$ in time $\poly(q)$.
Bennett and Peikert established two important properties of $\latRS$.

\begin{lemma}[\protect{\cite[Theorem 3.1, adapted]{BP22}}]\label{lem:distlatRS}
    Suppose that $\ell\leq q/2$.
    Then, it holds that
    \begin{equation*}
      \lambda_1^{(p)}(\latRSparam)\geq (2\ell)^{1/p}
    \end{equation*}
    for every $p\geq 1$.
\end{lemma}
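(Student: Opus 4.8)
The plan is to first reduce the bound for general $p\ge 1$ to the case $p=1$, and then to exploit a structural feature of $\RSparam$ that is available \emph{precisely because} $\ell\le q/2$: it is a self-orthogonal code.

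\medskip\noindent\textbf{Reduction to $p=1$.} Every $\vec{v}\in\latRSparam$ has integer coordinates, and $n^p\ge n$ for every nonnegative integer $n$ and every real $p\ge 1$; hence $\|\vec{v}\|_p^p=\sum_i|v_i|^p\ge\sum_i|v_i|=\|\vec{v}\|_1$. Consequently $\mindistp(\latRSparam)^p=\min_{\vec{v}\in\latRSparam\setminus\{\vec{0}\}}\|\vec{v}\|_p^p\ge\min_{\vec{v}\in\latRSparam\setminus\{\vec{0}\}}\|\vec{v}\|_1=\lambda_1^{(1)}(\latRSparam)$, so it suffices to prove $\lambda_1^{(1)}(\latRSparam)\ge 2\ell$.

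\medskip\noindent\textbf{Self-orthogonality and power-sum congruences.} For $f,g\in\F_q[x]$ with $\deg f,\deg g<\ell$ the product $fg$ has degree at most $2\ell-2\le q-2$, so using the identity $\sum_{\zeta\in\F_q}\zeta^j=0$ (valid for $0\le j\le q-2$) one gets $\sum_{\zeta\in\F_q}(fg)(\zeta)=0$; thus $\RSparam$ is self-orthogonal. Specializing $g\equiv 1$ (which lies in $\RSparam$) and $g=f$ shows that every codeword $\vec{c}=(f(\zeta))_\zeta$ satisfies $\sum_\zeta c_\zeta\equiv 0\pmod q$ and $\sum_\zeta c_\zeta^2\equiv 0\pmod q$; reducing coordinatewise mod $q$, every $\vec{x}\in\latRSparam$ satisfies $\sum_\zeta x_\zeta\equiv 0\pmod q$ and $\sum_\zeta x_\zeta^2\equiv 0\pmod q$. (More generally $\sum_\zeta x_\zeta^j\equiv 0\pmod q$ whenever $j(\ell-1)\le q-2$.)

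\medskip\noindent\textbf{The $\ell_1$ bound.} Let $\vec{x}\in\latRSparam\setminus\{\vec{0}\}$ and $\vec{c}=\vec{x}\bmod q$. If $\vec{c}=\vec{0}$ then $\vec{x}\in q\Z^q$ is nonzero in some coordinate, so $\|\vec{x}\|_1\ge q\ge 2\ell$. Otherwise $\vec{c}$ comes from a nonzero polynomial $f$ of degree $<\ell$, which has at most $\ell-1$ roots among the $q$ evaluation points; hence $\vec{c}$, and so $\vec{x}$, has at least $q-\ell+1$ nonzero coordinates, giving $\|\vec{x}\|_1\ge q-\ell+1$ (which already settles the range $\ell\le(q+1)/3$). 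For the remaining range, suppose toward a contradiction that $L:=\|\vec{x}\|_1\le 2\ell-1$. Since $2\ell-1<q$, the congruences above force $\sum_\zeta x_\zeta=0$ exactly (so $L$ is even, as $x_\zeta\equiv|x_\zeta|\bmod 2$) and force $\sum_\zeta x_\zeta^2$ to be a \emph{positive} multiple of $q$; also $\|\vec{x}\|_1<q$ means $\vec{x}$ vanishes wherever $\vec{c}$ does, so $\vec{x}$ is supported on exactly the $w\ge q-\ell+1$ nonzero coordinates of $\vec{c}$. Writing the nonzero entries as $\pm(1+e_\zeta)$ with $e_\zeta\ge 0$ and $\sum e_\zeta=L-w$, one has $\sum_\zeta x_\zeta^2=2L-w+\sum_\zeta e_\zeta^2$, a quantity confined to the narrow window $[\,2L-w,\ 2L-w+(L-w)^2\,]$ and required to be a multiple of $q$ that is at least $q$ and realizable by an integer configuration with zero coordinate sum. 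A finite case analysis over the few admissible pairs $(L,w)$ — using the divisibility by the prime $q$ of $\sum_\zeta x_\zeta^2$ (and, where needed, of the higher power sums and the structure of minimum-weight Reed--Solomon codewords) — rules out every such configuration, yielding the contradiction and hence $\|\vec{x}\|_1\ge 2\ell$.

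\medskip\noindent\textbf{Main obstacle.} The hard part is this last case analysis in the regime where $\ell$ is close to $q/2$: there the Singleton/sphere-packing bound gives only $\|\vec{x}\|_1\ge q-\ell+1\approx q/2$, well short of the target $2\ell\approx q$, so the self-orthogonality input modulo $q$ must be used tightly, and one has to control carefully how the requirement $\sum_\zeta x_\zeta^2\in q\Z_{>0}$ obstructs vectors of small $\ell_1$-norm.
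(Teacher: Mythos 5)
Your reduction to $p=1$ is correct (for integer vectors $\|\vec{v}\|_p^p\ge\|\vec{v}\|_1$), and the self-orthogonality observations are valid: since $\ell\le q/2$ implies $2(\ell-1)\le q-2$, indeed $\sum_\zeta x_\zeta\equiv 0$ and $\sum_\zeta x_\zeta^2\equiv 0\pmod q$ for every $\vec{x}\in\latRSparam$. The case $\vec{c}=\vec{0}$ and the range $\ell\le(q+1)/3$ (via $\|\vec{x}\|_1\ge q-\ell+1\ge 2\ell$) are also correctly handled. The problem is the final step, which you yourself flag as ``the hard part'': the claimed ``finite case analysis over the few admissible pairs $(L,w)$'' in a ``narrow window'' is not substantiated and does not obviously hold. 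When $\ell$ is close to $q/2$, the number of pairs $(L,w)$ with $q-\ell+1\le w\le L\le 2\ell-1$ grows like $\Theta(q^2)$, and the interval $[\,2L-w,\ 2L-w+(L-w)^2\,]$ has width $(L-w)^2=\Theta(q^2)$, so it can contain $\Theta(q)$ multiples of $q$ — it is neither narrow nor finite in any uniform sense. Moreover, the two power-sum congruences alone do not pin things down: for example with $q=17$, $\ell=8$, the value multiset $\{0^7,1^2,5^1,16^7\}$ has Hamming weight $10=q-\ell+1$, Lee weight $14<2\ell=16$, and satisfies both $\sum c_\zeta=7\cdot 17$ and $\sum c_\zeta^2=107\cdot 17$; ruling it out requires showing no degree-$<\ell$ polynomial realizes it, which is a nontrivial algebraic argument you have not supplied. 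Higher power sums are also unavailable in exactly the hard regime: $\sum x_\zeta^3\equiv 0$ needs $3(\ell-1)\le q-2$, i.e.\ $\ell\lesssim q/3$, which is precisely the range already handled by the weight bound.

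So while the ingredients you identify (reduction to $\ell_1$, self-orthogonality, weight $\ge q-\ell+1$) plausibly appear in Bennett--Peikert's proof, the step that closes the gap for $(q+1)/3<\ell\le q/2$ is missing here, and the sketch of how it should go does not survive scrutiny. I'd recommend either reconstructing the actual argument from~\cite{BP22} (which the paper explicitly cites and adapts, and which is cleaner than what you propose) or finding another route; a partial alternative worth noting is that for $p\ge 2$ the result follows immediately from self-orthogonality alone via $\|\vec{x}\|_p^p\ge\|\vec{x}\|_2^2\ge q\ge 2\ell$, so only $1\le p<2$ remains open in your write-up.
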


\begin{lemma}[\protect{\cite[Lemma 3.3, adapted]{BP22}}]\label{lem:shiftlatRS}
    Fix $\delta>0$ and denote the set of vectors in $\{0,1\}^q$ of Hamming weight $w$ by $B_{q,w}$.
    Sample $\vec{s}$ uniformly at random from $B_{q,w}$.
    Then,
    \begin{equation*}
      \Pr\left[\left|(\latRSparam-\vec{s})\cap B_{q,w}\right|\geq \delta\cdot \binom{q}{w}/q^\ell\right]\geq 1-\delta.
    \end{equation*}
\end{lemma}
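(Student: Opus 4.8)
The plan is to first convert the lattice statement into a purely combinatorial statement about how the weight-$w$ Boolean vectors distribute among the cosets of the Reed--Solomon code $\RSparam$ inside $\F_q^q$, and then to run an averaging argument over the random shift $\vec s$ of exactly the flavour used in the proof of \cref{thm:ldc} (which itself follows \cite[Lemma 4.3]{Kho05}). Concretely, for $\vec b,\vec s\in\{0,1\}^q$ the entrywise sum $\vec b+\vec s$ has all entries in $\{0,1,2\}$, so reducing it modulo $q$ produces precisely the $\F_q$-sum $\bar{\vec b}+\bar{\vec s}$ of the canonical $\F_q$-images of $\vec b$ and $\vec s$ (no carry occurs, and if $q=2$ the relation $2\equiv 0$ yields the same conclusion). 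Hence $\vec b+\vec s\in\latRSparam \iff \bar{\vec b}+\bar{\vec s}\in\RSparam$, and therefore
\[
\bigl|(\latRSparam-\vec s)\cap B_{q,w}\bigr| \;=\; \bigl|\,\overline{B_{q,w}}\cap(\RSparam-\bar{\vec s})\,\bigr|,
\]
where $\overline{B_{q,w}}\subseteq\F_q^q$ is the (injective) image of $B_{q,w}$ and $\RSparam-\bar{\vec s}$ is the coset of $\RSparam$ containing $-\bar{\vec s}$; the right-hand side counts the weight-$w$ Boolean vectors lying in one prescribed coset of $\RSparam$.

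Next I would fix the partition of $\F_q^q$ into cosets of $\RSparam$ and, for a coset $V$, set $a_V=|\overline{B_{q,w}}\cap V|$, so $\sum_V a_V=\binom{q}{w}$; call $V$ \emph{bad} if $a_V<\delta\binom{q}{w}/q^\ell$. Two ingredients finish the argument. First, since $\bar{\vec s}$ is uniform over $\overline{B_{q,w}}$ and $\RSparam$ is closed under negation, the random coset $C:=\RSparam-\bar{\vec s}$ hits each $V$ with probability proportional to the occupancy of the cosets by $\overline{B_{q,w}}$ (the sign being harmless via the bijection $V\mapsto -V$, or removable altogether since a locally dense lattice with shift $\vec s$ is one with shift $-\vec s$ after negating). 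Second, $\overline{B_{q,w}}$ meets at most $q^\ell$ distinct cosets of $\RSparam$: the coset of a weight-$w$ Boolean vector $\bar{\vec b}$ is recorded by its syndrome under the Vandermonde-type parity-check matrix of $\RSparam$, and Newton's identities (valid because $q$ is prime and exceeds $w$) show that this syndrome is determined by a bounded number of elementary symmetric functions of $\mathrm{supp}(\vec b)$, capping the number of attainable syndromes, hence of relevant cosets, by $q^\ell$. Granting these, a union bound over bad cosets gives $\Pr[C\text{ bad}]\le(\text{number of relevant cosets})\cdot\bigl(\delta\binom{q}{w}/q^\ell\bigr)/\binom{q}{w}\le\delta$, which is exactly the claimed bound.

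The step I expect to be the genuine obstacle is the second ingredient — bringing the count of relevant cosets down to $q^\ell$ rather than the trivial bound $q^{q-\ell}$ (the total number of cosets of $\RSparam$) — since this is where the MDS structure of Reed--Solomon codes is essential and is the heart of what is adapted from \cite[Lemma 3.3]{BP22}; its precise quantitative form, and the regime of $w,\ell,q$ in which $\delta\binom{q}{w}/q^\ell$ is the intended useful bound, must be matched to what \cref{thm:redCVPtoSVP} demands. A secondary bookkeeping point is the sign appearing in $\RSparam-\bar{\vec s}$, handled as indicated above. Once both are settled, \cref{lem:shiftlatRS} feeds into \cref{thm:ldl} by taking $A$ a basis of $\latRSparam$ for a suitably chosen prime $q$ and dimension $\ell$ (so \cref{lem:distlatRS} gives $\lambda_1^{(p)}(\latRSparam)>d$) and $\vec s\sim\mathrm{Unif}(B_{q,w})$ with $w=\lfloor(\alpha d)^p\rfloor$ so that $B_{q,w}\subseteq\B^{(p)}_q(\alpha d)$.
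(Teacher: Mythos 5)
You should first note that the paper does not prove this lemma at all---it is imported from \cite{BP22}---so the right comparison is with the cited proof and with the paper's own analogous BCH arguments (the good/bad coset averaging in the proof of \cref{thm:ldc} and in \cref{lem:propBCHlat}). Measured against those, your proposal has a genuine gap at exactly the step you flag as the ``obstacle,'' and the obstacle is not something Newton's identities can remove. The factor $q^\ell$ in the lemma is the number of cosets of the \emph{lattice} $\latRSparam$ in $\Z^q$: in \cite{BP22} the lattice is cut out by $\ell$ Reed--Solomon parity checks modulo $q$ (equivalently, Construction~A applied to the code of \emph{codimension} $\ell$ dual to $\RSparam$), so $\Z^q/\latRSparam$ has exactly $q^\ell$ cosets, and the proof is then precisely the trivial averaging used for BCH codes in \cref{thm:ldc}: a coset is bad if it contains fewer than $\delta\binom{q}{w}/q^\ell$ points of $B_{q,w}$, there are at most $q^\ell$ cosets in total, and a uniform $\vec{s}\in B_{q,w}$ lands in a bad one with probability less than $q^\ell\cdot\bigl(\delta\binom{q}{w}/q^\ell\bigr)/\binom{q}{w}=\delta$. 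No MDS structure, no bound on ``attainable syndromes,'' and no Newton identities enter this lemma at all; that structure is what powers \cref{lem:distlatRS} (the $(2\ell)^{1/p}$ bound on $\lambda_1^{(p)}$), not the density count. The ``adapted'' statement in the paper, read together with its displayed definition of $\latRSparam$ as $\RSparam+q\Z^q$ with $\dim\RSparam=\ell$, obscures this, and that is what led you astray.

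Concretely, your second ingredient is false in the regime where the lemma is used. If you work with cosets of the $\ell$-dimensional code $\RSparam$ inside $\F_q^q$ (of which there are $q^{q-\ell}$, not $q^\ell$), the syndrome of a weight-$w$ binary vector consists of the power sums $p_j(S)$ for $j=0,\dots,q-\ell-1$ of its support $S$; Newton's identities bound the number of attainable syndromes only by $q^{w}$, which is weaker than $q^\ell$ since \cref{thm:ldl} takes $w>\ell$. Worse, in the regime $w\ll q$ and $\ell\ll q$ used here the same reasoning shows the syndrome determines $e_1(S),\dots,e_w(S)$ and hence $S$ itself, so each such coset contains \emph{at most one} weight-$w$ binary vector---no averaging over those cosets can ever produce $\delta\binom{q}{w}/q^\ell\gg 1$ points in a single coset, so the statement you set out to prove is unprovable under that reading. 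Finally, the sign issue is not ``harmless via the bijection $V\mapsto -V$'': that bijection does not equate the number of weight-$w$ binary vectors in $V$ and in $-V$. The clean fix is to average over the coset containing $\vec{s}$ itself, which controls $\bigl|(\latRSparam+\vec{s})\cap B_{q,w}\bigr|$; this suffices for \cref{thm:ldl} after negating the shift (or the close vectors, which preserves $\ell_p$ norms), but it is a step that must be argued, not asserted.
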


We now move to prove \cref{thm:ldl}.
\begin{proof}[Proof of \cref{thm:ldl}]
Fix real numbers $p\geq 1$ and $\gamma'\in[1,2^{1/p})$, and set $\eps=(\gamma')^{-p}-1/2$ and $\gamma=\left\lceil\max\left(12/\eps,\frac{1}{(1+\eps/2)^{1/p}-1}\right)\right\rceil$.
Suppose that we are given $k$ and $m$ as inputs.
Choose integers
\begin{align*}
    &\ell = \left\lceil \frac{(1+\gamma k)^p}{2}\right\rceil,\\
    & w=\left\lfloor((\gamma/\gamma')^p-2)k^p\right\rfloor,
\end{align*}
and $q$ to be the smallest prime larger than
\begin{equation*}
    (300w(\gamma k)^p\cdot (2m(1+\gamma k)))^{9/\eps}=\poly(m,k).
\end{equation*}
Note that $q=\poly(m,k)$ by Bertrand's postulate and that we can naively verify whether $q$ is prime in time $\poly(\log m,\log k)$~\cite{AKS04}.
We consider the following candidate construction of a locally dense lattice $(A,\vec{s})$, which runs in overall time $\poly(q)=\poly(m,k)$:
\begin{enumerate}
    \item Set $A$ to be a basis of $\latRSparam$;
    
    \item Sample $\vec{s}$ uniformly at random from $B_{q,w}$. (Recall that $B_{q,w}$ denotes the set of vectors in $\{0,1\}^q$ of Hamming weight $w$.)
\end{enumerate}
We now argue that $(A,\vec{s})$ is a $(p,\alpha,d,N,m',n')$-locally dense lattice with the properties described in \cref{thm:ldl} with probability at least $0.99$ over the sampling of $(A,\vec{s})$.
First, note that $m'=q$ and $n'=\ell$, and so it can be directly verified that $n',m'=\poly(m,k)$.
To bound $d$, note that $\ell\leq q/2$.
Therefore, \cref{lem:distlatRS} and the choice of $\ell$ above guarantee that
\begin{equation*}
    \lambda^{(p)}_1(\lat(A))\geq (2\ell)^{1/p}>\gamma k,
\end{equation*}
and so we may take $d= \gamma k$.
It remains to show that
\begin{equation*}
    |(\lat(A)-\vec{s})\cap \B^{(p)}_q(\alpha d)|\geq N
\end{equation*}
with probability at least $0.99$ over the sampling of $\vec{s}$, where
\begin{align*}
    &\alpha=\frac{((\gamma/\gamma')^p-2)^{1/p}}{\gamma},\\
    &N= (2m(1+\gamma k))^{3(\gamma k)^p}.
\end{align*}
Since $w\leq (\alpha d)^p$ and $B_{q,w}\subseteq \B^{(p)}_q(\alpha d)$, it suffices to show that
\begin{equation*}
    |(\lat(A)-\vec{s})\cap B_{q,w}|\geq N
\end{equation*}
with probability at least $0.99$ over the sampling of $\vec{s}$.
Invoking \cref{lem:shiftlatRS} with $\delta=0.01$ shows that
\begin{equation*}
    \Pr\left[|(\lat(A)-\vec{s})\cap B_{q,w}|\geq \frac{\binom{q}{w}}{100 q^\ell}\right]\geq 0.99.
\end{equation*}
We claim that $\frac{\binom{q}{w}}{100 q^\ell}\geq N$ with the choices of $\ell$, $w$, and $q$ above, which concludes the proof of \cref{thm:ldl}.
To see this, first note that
\begin{align}
    w-\ell &= \lfloor ((\gamma/\gamma')^p-2)k^p \rfloor - \left\lceil \frac{(1+\gamma k)^p}{2}\right\rceil \nonumber \\
    &\geq  ((\gamma/\gamma')^p-2)k^p - \frac{(1+\gamma k)^p}{2} - 2\nonumber\\
    &\geq ((1/2+\eps)\gamma^p-2)k^p - (1/2+\eps/2)(\gamma k)^p - 2 \label{eq:boring1}\\
    &= \frac{\eps}{2}(\gamma k)^p - 2k^p - 2\nonumber\\
    &\geq \frac{\eps}{2}(\gamma k)^p - 4k^p \label{eq:newboring}\\
    &\geq \frac{\eps}{3}(\gamma k)^p,\label{eq:boring2}
\end{align}
where \cref{eq:boring1} uses the observation that $(1+\gamma k)^p \leq (1+\eps/2)(\gamma k)^p$,
which follows from the fact that $\gamma\geq\frac{1}{(1+\eps/2)^{1/p}-1}$, \cref{eq:newboring} holds because $k\geq 1$, and
\cref{eq:boring2} uses the fact that $\gamma\geq 12/\eps$.

Then, we have that
\begin{align}
    \frac{\binom{q}{w}}{100 q^\ell}&\geq \frac{q^{w-\ell}}{100 w^w}\label{eq:goodvec1}\\
    &\geq \frac{q^{\frac{\eps}{3}(\gamma k)^p}}{100 w^w}\label{eq:newgoodvec}\\
    &\geq \frac{(300w(2m(1+\gamma k)))^{3(\gamma k)^p}}{100 w^w}\label{eq:goodvec2}\\
    &\geq (2m(1+\gamma k))^{3(\gamma k)^p}\label{eq:goodvec3}\\
    &=N.\nonumber
\end{align}
\cref{eq:goodvec1} holds because $\binom{q}{w}\geq \left(q/w\right)^w$.
\cref{eq:newgoodvec} follows from the lower bound on $w-\ell$ from \cref{eq:boring2}.
\cref{eq:goodvec2} holds since $q > (300w(\gamma k)^p\cdot (2m(1+\gamma k)))^{9/\eps}$.
Finally, \cref{eq:goodvec3} holds because $w<3(\gamma k)^p$.
\end{proof}

\section{$\W[1]$-hardness of $\SVP_p$ for any approximation factor}\label{sec:NCPtoSVPtensor}

In this section we analyze an FPT reduction from approximate $\NCP_2$ to approximate $\SVP_p$ which when combined with results of Haviv and Regev~\cite{HR12} leads to \cref{thm:hardness-lp-tensor}, which we restate here.

\hardnesslp*

\subsection{The Haviv-Regev conditions for tensoring of $\SVP$ instances}

We will use the following generalization of a result of Haviv and Regev~\cite{HR12} which establishes conditions under which an SVP instance behaves well under tensoring.
\begin{restatable}{lemma}{havivregev}\label{lem:havivregev}
    Fix an integer $c\geq 1$ and real numbers $p,\gamma\geq 1$.
Suppose that $(B,k)$ with $B\in\Z^{m\times n}$ and $k\in\Z^+$ is an instance of $\gamma$-$\SVP_p$ with the additional property that if $(B,k)$ is a NO instance of $\gamma$-$\SVP_p$, then every nonzero vector $\vec{w}\in\lat(B)$ satisfies at least one of the following conditions, where $d=\gamma k$:
\begin{itemize}
    \item $\|\vec{w}\|_0> d^p$;
    
    \item $\vec{w}\in 2\Z^m$ and $\|\vec{w}\|_0> d^p/2^p$;
    
    \item $\vec{w}\in 2\Z^m$ and $\|\vec{w}\|_p>d^{c+3p/2}$.
\end{itemize}
Then, $(B^{\otimes c},k^c)$ is a YES (resp.\ NO) instance of $\gamma^c$-$\SVP_p$ if $(B,k)$ is a YES (resp.\ NO) instance of $\gamma$-$\SVP_p$, where $B^{\otimes c}$ denotes the $c$-fold tensor product of $B$ with itself.
\end{restatable}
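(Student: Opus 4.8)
I would handle the two directions of the biconditional separately. The YES direction is immediate: if $(B,k)$ is a YES instance of $\gamma$-$\SVP_p$ then some nonzero $\vec{u}\in\lat(B)$ has $\|\vec{u}\|_p\le k$, and since $\|\vec{a}\otimes\vec{b}\|_p^p=\sum_{i,j}|a_ib_j|^p=\|\vec{a}\|_p^p\,\|\vec{b}\|_p^p$, the vector $\vec{u}^{\otimes c}\in\lat(B^{\otimes c})$ is nonzero with $\|\vec{u}^{\otimes c}\|_p=\|\vec{u}\|_p^{\,c}\le k^c$, so $(B^{\otimes c},k^c)$ is a YES instance of $\gamma^c$-$\SVP_p$. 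The work is all in the NO direction, where I would adapt the tensoring argument of Haviv and Regev~\cite{HR12} from the Euclidean norm to general $\ell_p$, systematically replacing $\|\cdot\|_2^2$ by $\|\cdot\|_p^p$ and using the two basic facts $\|\vec{a}\otimes\vec{b}\|_p^p=\|\vec{a}\|_p^p\|\vec{b}\|_p^p$ and $\|\vec{v}\|_p^p\ge 2^p\|\vec{v}\|_0$ for $\vec{v}\in 2\Z^m$ (a nonzero coordinate of an even integer vector has absolute value at least $2$).

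Writing $d=\gamma k$, in the NO case I must show that every nonzero $\vec{z}\in\lat(B^{\otimes c})$ has $\|\vec{z}\|_p>d^c$. I would prove this by induction on the tensor power $j=1,\dots,c$, using a strengthened hypothesis $\mathrm{IH}_j$ that is the level-$j$ analogue of the three stated conditions: every nonzero $\vec{z}\in\lat(B^{\otimes j})$ is either \emph{Hamming-heavy} (roughly $\|\vec{z}\|_0$ exceeds $d^{jp}$), or lies in $2\Z^{m^j}$ and is \emph{half-heavy} (Hamming weight above a $j$-dependent threshold that still makes $\|\vec{z}\|_p^p\ge 2^p\|\vec{z}\|_0$ exceed $d^{jp}$), or lies in $2\Z^{m^j}$ and is \emph{norm-huge}, with $\|\vec{z}\|_p$ exceeding the \emph{fixed} threshold $d^{c+3p/2}$. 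The base case $j=1$ is exactly the hypothesis of the lemma, and each branch of $\mathrm{IH}_c$ forces $\|\vec{z}\|_p>d^c$, which finishes the NO case.

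For the inductive step $j-1\mapsto j$ I would view a nonzero $\vec{z}\in\lat(B^{\otimes j})=\lat(B)\otimes\lat(B^{\otimes(j-1)})$ as a matrix $Z$ whose columns lie in $\lat(B)$ and whose rows lie in $\lat(B^{\otimes(j-1)})$, apply $\mathrm{IH}_1$ to the columns and $\mathrm{IH}_{j-1}$ to the rows, and case-split on the number $r$ of nonzero rows of $Z$. When $r$ is large, summing $\|\cdot\|_p^p$ (equivalently, Hamming weights, using evenness where needed) over the nonzero rows already produces the heavy or half-heavy conclusion for $Z$. When $r$ is small, every nonzero column has small Hamming weight and is therefore, by $\mathrm{IH}_1$, forced to be even, so $Z\in 2\Z^{m^j}$; one then further distinguishes whether some row or column lies in the norm-huge regime — in which case $Z$ inherits it — or whether all nonzero rows and columns are merely half-heavy, in which case a double count of the nonzero entries of $Z$ combined with $\|Z\|_p^p\ge 2^p\|Z\|_0$ delivers the half-heavy conclusion for $Z$.

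The step I expect to be the main obstacle is the loss bookkeeping inside this case analysis: each pass loses a bounded multiplicative factor (from the $1/2^p$ slack in the half-heavy condition, and from even coordinates being only guaranteed to equal $2$ in absolute value), and a careless accounting over the $c$ levels would erode the target $d^c$ by a $2^{\Theta(c)}$ factor. This is precisely why the statement carries the third, ``escape-hatch'' condition with the deliberately generous exponent $d^{c+3p/2}$: whenever a row or column has too small a Hamming weight to drive the counting argument, the hypotheses force it to have enormous $\ell_p$ norm, and that regime is absorbing under tensoring, so it swallows all the accumulated slack. Verifying that $c+3p/2$ is comfortably large enough for the induction to close for every $j\le c$ and every $p\ge 1$, and that the half-heavy branch never degrades below its level-$j$ threshold, is the calculation I would carry out in detail, mirroring the structure of~\cite{HR12}. (The exclusion of $p=1$ in \cref{thm:hardness-lp-tensor} is not caused by this lemma; it comes from the binary-BCH-based locally dense lattices used to construct the base $\SVP_p$ instance that this lemma is applied to.)
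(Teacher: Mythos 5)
The YES direction is correct and matches the paper.

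For the NO direction, your high-level strategy (induction on the tensor power, a three-way case split, an ``escape-hatch'' that absorbs slack) echoes the paper's, but your choice of induction hypothesis creates a gap that the escape-hatch does not repair. You propose to carry a level-$j$ analogue $\mathrm{IH}_j$ of the three \emph{Hamming-weight} conditions through the induction, with half-heavy threshold $d^{jp}/2^p$. In the ``all half-heavy'' sub-case of the step $j-1 \to j$ (where $Z$ is even and every nonzero row/column is half-heavy rather than norm-huge), your double count gives $\|\vec{z}\|_0 > r\cdot d^{(j-1)p}/2^p$ and $r > d^p/2^p$, hence $\|\vec{z}\|_0 > d^{jp}/4^p$, falling a full factor $2^p$ short of the level-$j$ threshold $d^{jp}/2^p$. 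The norm-huge escape-hatch only helps when some row or column actually \emph{is} norm-huge, which this sub-case explicitly excludes, so the slack accumulates and the induction does not close.

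The paper avoids this by carrying the much weaker hypothesis $\lambda_1^{(p)}(\lat(B^{\otimes j})) > d^j$ and, crucially, by using the \emph{rank decomposition} $\vec{v} = B_1'(B_2')^T$ together with a structure lemma (\cref{lem:propsublattice}) applied to the sublattice $\lat(B_1')\subseteq\lat(B)$. In case 2 of that lemma, $B_1'$ has only even entries, so each nonzero row of $\vec{v}$ is exactly $2\times$(a nonzero vector of $\lat(B_2')\subseteq\lat_2$); this gives $\|\text{row}\|_p \geq 2\lambda_1^{(p)}(\lat_2)$, and the factor $2^p$ gained per row \emph{exactly} cancels the $(d/2)^p$-vs-$d^p$ deficit in the number of nonzero rows. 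This cancellation is what avoids erosion, and it is unavailable in your setup: you only view $Z$ as a matrix with rows in $\lat(B^{\otimes(j-1)})$ and columns in $\lat(B)$, and an even row of $Z$ need not be $2\times$ a lattice vector of $\lat(B^{\otimes(j-1)})$ — you get no factor-$2$ dividend in $\lambda_1^{(p)}$ terms. The paper's case 3 uses Minkowski's theorem and the technical determinant lemma (\cref{lem:tech}); your more direct alternative — that a nonzero column of $\vec{v}$ lies in $\lat(B_1')$ and hence is norm-huge, immediately giving $\|\vec{v}\|_p > d^{c+3p/2}$ — does appear to work and is arguably simpler, but it does not rescue the half-heavy sub-case above. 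Your remark on why $p=1$ is excluded is correct and unrelated to this lemma.
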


Haviv and Regev~\cite[Lemma 3.4]{HR12} proved a similar result for the case $p=2$ only. It turns out to be not hard to generalize for any $p\geq 1$.
Note, however, that the conditions of \cref{lem:havivregev} depend a priori on the (constant) number of times $c$ we will be tensoring the base $\SVP_p$ instance.
We provide a proof of \cref{lem:havivregev} in \cref{sec:HRgen}.

\subsection{The FPT $\NCP_2$ to $\SVP_p$ reduction amenable to tensoring}\label{sec:redHR}
We proceed to describe an FPT reduction from approximate $\NCP_2$, which we know is $\W[1]$-hard by \cref{thm:hardNCP}, to approximate $\SVP_p$ that yields the following result.
\begin{theorem}\label{thm:redNCPtoSVP}
    Fix an even integer $\gamma\geq 2$, an integer $c\geq 1$, and real numbers $p>1$, $\gamma'\geq 1$, and $\alpha\in (1/2+2^{-p},1)$ additionally satisfying
    \begin{equation*}
        \gamma'< \left(\frac{\gamma}{2^p+1+\alpha \gamma}\right)^{1/p}.
    \end{equation*}
    Then, there is a randomized algorithm which, for $m$ large enough, on input a $\gamma$-$\NCP_2$ instance $(G,\vec{t},k)$ with $G\in\F_2^{m\times n}$, $\vec{t}\in\F_2^m$, and $k\in\Z^+$ outputs in time $\poly(m)$ an instance $(\Bfinal,k')$ of $\gamma'$-$\SVP_p$ with $k'=(2^p k+\alpha \gamma k+1)^{1/p}<(\gamma k)^{1/p}$ satisfying the following properties with probability at least $0.9$:
    \begin{itemize}
        \item If $(G,\vec{t},k)$ is a YES instance of $\gamma$-$\NCP_2$, then $(\Bfinal, k')$ is a YES instance of $\gamma'$-$\SVP_p$;
        
        \item If $(G,\vec{t},k)$ is a NO instance of $\gamma$-$\NCP_2$, then $(\Bfinal, k')$ is a NO instance of $\gamma'$-$\SVP_p$ such that every nonzero vector $\vec{w}\in\lat(\Bfinal)$ satisfies at least one of the following conditions:
\begin{itemize}
    \item $\|\vec{w}\|_0> (\gamma' k')^p$;
    
    \item $\vec{w}\in 2\Z^m$ and $\|\vec{w}\|_0> (\gamma' k')^p/2^p$;
    
    \item $\vec{w}\in 2\Z^m$ and $\|\vec{w}\|_p>(\gamma' k')^{c+3p/2}$.
\end{itemize}
    \end{itemize}
\end{theorem}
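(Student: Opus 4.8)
The plan is to mimic the $\CVP_p\to\SVP_p$ reduction of \cref{sec:latreduction} but to reduce \emph{directly} from $\NCP_2$ — so that the ambient field $\F_2$ matches the characteristic of the binary BCH codes underlying the locally dense lattice — and to track the Hamming-weight and parity structure of every intermediate lattice vector carefully enough that the output provably satisfies the three conditions of \cref{lem:havivregev}. Note first that the hypotheses force $\gamma$ to be large: $\gamma'\ge1$ together with $\gamma'<(\gamma/(2^p+1+\alpha\gamma))^{1/p}$ gives $\gamma(1-\alpha)>2^p+1$, which for $k\ge1$ yields both $(\gamma'k')^p<\gamma k$ and $k'=(2^pk+\alpha\gamma k+1)^{1/p}<(\gamma k)^{1/p}$. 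Viewing $G\in\{0,1\}^{m\times n}$ and $\vec t\in\{0,1\}^m$ as integer matrices, I would first build a binary-BCH locally dense lattice $(A,\vec s)$: pick a BCH code $\BCH\subseteq\F_2^{m'}$ of length $m'=2^r-1=\poly(m,k)$ with design distance $\gamma k$, so that (by \cref{thm:bch}) its codimension is at most $\tfrac12\gamma k\log_2(m'+1)$; let $A$ be a basis of $\latBCH=\BCH+2\Z^{m'}$; and sample a uniformly random $0/1$ shift $\vec s$ of Hamming weight $w:=\lfloor\alpha\gamma k\rfloor$. A coset-counting argument exactly as in the proof of \cref{thm:ldc} shows that, with probability $\ge0.99$, $\latBCH-\vec s$ contains at least $N\approx(m')^{(\alpha-1/2)\gamma k}/g(k)$ vectors that are $0/1$ of Hamming weight $\le w$; here $\alpha>1/2$ puts the radius $w$ above the code's packing radius $\approx\tfrac12\gamma k$. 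Then set
\[
\Bint:=\begin{pmatrix} 2G & 4I_m & 0 & -2\vec t\\ 0 & 0 & A & -\vec s\\ \vec 0 & \vec 0 & \vec 0 & 1 \end{pmatrix}.
\]
The genuinely new ingredient is the scaling of the code block by $2$: it guarantees that any short lattice vector arising from the code part lies in $2\Z^{m+m'+1}$, which is exactly the hypothesis of conditions (2) and (3) of \cref{lem:havivregev}.

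Next I would analyze $\Lint=\lat(\Bint)$. Writing $\vec w=\Bint(\vec x,\vec z,\vec y,\beta)$, its three blocks are $2\vec a$ with $\vec a\equiv G\vec x-\beta\vec t\pmod 2$; $\vec b:=A\vec y-\beta\vec s$ with $\vec b\bmod 2$ equal to a BCH codeword plus $\beta(\vec s\bmod 2)$; and the scalar $\beta$. Three facts drive the proof. (i) \emph{Odd vectors are long}: if $\vec w\notin2\Z^{m+m'+1}$ then either $\beta$ is odd, whence the NO promise of $\gamma$-$\NCP_2$ forces the $0/1$ vector $(G\vec x-\vec t)\bmod 2$ to have more than $\gamma k$ ones, so $\|\vec w\|_0\ge\|2\vec a\|_0>\gamma k$, or $\beta$ is even and $\vec b\bmod 2$ is a nonzero codeword, so $\|\vec w\|_0\ge\gamma k$; since $(\gamma'k')^p<\gamma k$ this is condition (1). (ii) \emph{Completeness}: in the YES case each of the $\ge N$ good shifts $\vec y$ gives a vector $(\vec x,\ast,\vec y,1)$ with last coordinate $1$ and $\ell_p^p$-norm at most $2^pk+w+1=(k')^p$. (iii) \emph{Few annoying vectors}: a nonzero $\vec w$ failing all of (1)--(3) must lie in $2\Z^{m+m'+1}$, have $\|\vec w\|_0\le(\gamma'k')^p/2^p<\gamma k/2^p$, and have $\|\vec w\|_p\le(\gamma'k')^{c+3p/2}$; since $\gamma'k'\le(\gamma k)^{1/p}$ bounds each coordinate by a fixed power of $\gamma k$ (the exponent depending only on $c$ and $p$), the number of such vectors is at most $(m')^{\gamma k/2^p}\cdot h(k,c)$ for a computable $h$.

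Finally, as in \cref{sec:latreduction} I would intersect $\Lint$ with the random modular sublattice $\{\vec w:\langle\vec v,\vec w\rangle\equiv0\pmod\rho\}$, where $\vec v$ has i.i.d.\ uniform entries modulo a prime $\rho$ chosen a constant factor above the bound of (iii), so that $\log\rho=\poly(k,\log m)$ and $\rho$ is found in FPT time. The crucial point: both $N$ and the annoying count are of the form $(m')^{\Theta(\gamma k)}$ times a function of $k$ (and, for the latter, $c$), with leading exponents $(\alpha-\tfrac12)\gamma k$ and $\tfrac{\gamma k}{2^p}$ respectively, so since $\alpha-\tfrac12>2^{-p}$ the quantity $N$ dominates $\rho$ which dominates the annoying count once $m$ is large — and this comparison of exponents is exactly why $p>1$ is needed, the window $\alpha\in(\tfrac12+2^{-p},1)$ being empty for $p=1$. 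Thus \cref{lem:chebyshev} keeps at least one good vector with probability $\ge0.99$ in the YES case, while a union bound removes every annoying vector with probability $\ge0.99$ in the NO case. Output a basis $\Bfinal$ of the intersection — computable in $\poly(m)$ time from the Construction-A description of the modular constraint and the lattice-intersection routine cited in \cref{sec:latreduction} — and $k'=(2^pk+w+1)^{1/p}$. In the NO case, each surviving nonzero $\vec w$ is either non-annoying, hence satisfies (1), (2), or (3) by the anatomy above, or lies in $2\Z^{m+m'+1}$ with $\|\vec w\|_p>(\gamma'k')^{c+3p/2}$, i.e., satisfies (3); combining this with the elementary bounds $\|\vec w\|_0\le\|\vec w\|_p^p$ and $\|\vec w\|_p^p\ge2^p\|\vec w\|_0$ for $\vec w\in2\Z^{m+m'+1}$ also shows $\lambda_1^{(p)}(\lat(\Bfinal))>\gamma'k'$, so $(\Bfinal,k')$ is a NO instance of $\gamma'$-$\SVP_p$ of exactly the tensoring-friendly form required by \cref{lem:havivregev}.

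The main obstacle is fact (iii): one must show the vectors sparsification has to delete are bounded \emph{both} in support size and in coordinate magnitude, rather than merely in $\ell_p$-norm — an $\ell_p$-ball of radius $(\gamma'k')^{c+3p/2}$ contains $m^{\poly(k)}$ lattice points, far more than $N$ — and this is precisely what condition (3) of \cref{lem:havivregev} buys (long even vectors may be retained) and what forces the code block to be scaled so that short code-derived vectors are automatically even. With the construction arranged this way, reconciling the constraints $\alpha>\tfrac12+2^{-p}$, the bound on $\gamma'$, and the lower bound on $\gamma$ that they induce is routine but fiddly arithmetic; \cref{thm:hardness-lp-tensor} then follows by combining \cref{thm:redNCPtoSVP} with the $\W[1]$-hardness of $\gamma$-$\NCP_2$ (\cref{thm:hardNCP}) and $c$-fold tensoring via \cref{lem:havivregev}.
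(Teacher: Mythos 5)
Your proposal matches the paper's proof in every essential respect: you reduce directly from $\NCP_2$ so that characteristics agree, use a binary-BCH-based locally dense lattice, scale the $\NCP$ block by $2$ (your $[2G \mid 4I_m]$ is just a redundant generating set for $2\cdot(\C(G)+2\Z^m) = 2\lat(\BNCP)$, which is what the paper writes as $2\BNCP$), classify lattice vectors by parity of $\beta$ and of the BCH block exactly as in \cref{lem:LintpropertiesHR}, bound annoying vectors by support size times coordinate magnitude, and sparsify with a random modular constraint. The only slip is cosmetic: to guarantee the strict inequality $\lambda(\BCH) > \gamma k$ needed in the NO-case analysis you should take the BCH design distance to be $\gamma k + 1$ rather than $\gamma k$, as the paper does.
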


Combining \cref{thm:redNCPtoSVP} with \cref{lem:havivregev} and \cref{thm:hardNCP} immediately yields \cref{thm:hardness-lp-tensor}.
This is because \cref{lem:havivregev} guarantees that we can directly tensor the $\gamma'$-$\SVP_p$ instances from \cref{thm:redNCPtoMDP} with $\gamma'>1$ an arbitrary (constant) number of times $c\geq 1$ to conclude that $\gamma''$-$\SVP_p$ is $\W[1]$-hard for any constant $\gamma''\geq 1$.

We proceed to describe the algorithm we use to prove \cref{thm:redNCPtoSVP}.
First, we need to set up some auxiliary objects and lemmas.
Suppose that we are given as input an instance $(G,\vec{t},k)$ of $\gamma$-$\NCP_2$, where $G\in\F_2^{m\times n}$, $\vec{t}\in\F_2^m$, and $k\in\Z^+$.
Let $d=\gamma k$.
We denote by $\BNCP\in\Z^{m\times m}$ the basis of the Construction A lattice
\begin{equation*}
    \latNCP = \C(G)+2\Z^m,
\end{equation*}
which we can compute in time $\poly(m)$.

With some hindsight, set $m'$ to be the smallest integer of the form $2^r-1$ larger than
\begin{equation*}
    \max\left(m+1, (10^8  d^{12c})^{\frac{1}{\alpha -(1/2+2^{-p})}}\right) =\poly(m),
\end{equation*}
and let $\BCH\subseteq\F_2^{m'}$ be a binary BCH code with minimum distance at least $d+1$ and codimension
\begin{equation*}
    h\leq \left\lceil \frac{d}{2}\right\rceil \log(m'+1) =  \frac{d}{2} \log(m'+1)
\end{equation*}
guaranteed by \cref{thm:bch}.
We denote by $\BBCH\in\Z^{m'\times m'}$ the basis of the Construction A lattice
\begin{equation*}
    \latBCH=\BCH+2\Z^{m'}.
\end{equation*}
Note that we can compute a basis of $\latBCH$ in time $\poly(m')=\poly(m)$.
Furthermore, we sample a target vector $\vec{s}\in\F_2^{m'}$ uniformly at random from $B_{m',\alpha d}$, where we recall that $B_{m',\alpha d}$ is the set of vectors in $\{0,1\}^{m'}$ with Hamming weight $\alpha d$.
As in previous sections, the tuple $(\BBCH,\vec{s})$ satisfies a local density property with high probability over the sampling of $\vec{s}$, as described in the following lemma.
\begin{lemma}\label{lem:propBCHlat}
    It holds with probability at least $0.99$ over the sampling of $\vec{s}$ from $B_{m', \alpha d}$
    that
    \begin{equation*}
        |(\latBCH-\vec{s})\cap \B^{(p)}_{m'}((\alpha d)^{1/p})|\geq \frac{\binom{m'}{\alpha d}}{100(m'+1)^{d/2}}=: N.
    \end{equation*}
\end{lemma}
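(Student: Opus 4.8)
The plan is to reduce the count of points of $\latBCH-\vec{s}$ inside the $\ell_p$ ball $\B^{(p)}_{m'}((\alpha d)^{1/p})$ to a count of weight-$\alpha d$ vectors lying in a single coset of $\BCH$, and then to run the same averaging-over-cosets argument used in the proof of \cref{thm:ldc} (which itself follows~\cite[Lemma 4.3]{Kho05}).

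First I would observe that it suffices to count vectors $\vec{w}\in\{0,1\}^{m'}$ with $\vec{w}+\vec{s}\in\latBCH$ and $\|\vec{w}\|_0\le\alpha d$: any such $\vec{w}$ satisfies $\|\vec{w}\|_p=\|\vec{w}\|_0^{1/p}\le(\alpha d)^{1/p}$, hence $\vec{w}\in(\latBCH-\vec{s})\cap\B^{(p)}_{m'}((\alpha d)^{1/p})$. Since $\latBCH=\BCH+2\Z^{m'}$ and $\vec{s}\in\{0,1\}^{m'}$, the condition $\vec{w}+\vec{s}\in\latBCH$ is equivalent, after reducing coordinatewise mod $2$, to $\vec{w}\oplus\vec{s}\in\BCH$, i.e.\ to $\vec{w}$ lying in the coset $\vec{s}+\BCH$ of $\BCH$ in $\F_2^{m'}$. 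Restricting further to Hamming weight exactly $\alpha d$ therefore yields
\[
\bigl|(\latBCH-\vec{s})\cap\B^{(p)}_{m'}((\alpha d)^{1/p})\bigr|\ \ge\ \bigl|(\vec{s}+\BCH)\cap B_{m',\alpha d}\bigr|,
\]
so it is enough to show the right-hand side is at least $N$ with probability at least $0.99$ over $\vec{s}$ sampled uniformly from $B_{m',\alpha d}$.

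For this I would follow the proof of \cref{thm:ldc} essentially verbatim (specializing to $q=2$): call a coset $V$ of $\BCH$ \emph{good} if $|V\cap B_{m',\alpha d}|\ge|B_{m',\alpha d}|/(100\cdot 2^h)$ and \emph{bad} otherwise, where $h$ denotes the codimension of $\BCH$, so there are at most $2^h$ cosets in all. Since $\vec{s}$ is uniform in $B_{m',\alpha d}$, it lands in a given coset $V$ with probability $|V\cap B_{m',\alpha d}|/|B_{m',\alpha d}|$, and hence lands in a bad coset with probability
\[
\sum_{V\text{ bad}}\frac{|V\cap B_{m',\alpha d}|}{|B_{m',\alpha d}|}\ <\ \sum_{V\text{ bad}}\frac{1}{100\cdot 2^h}\ \le\ \frac{1}{100}.
\]
Thus with probability at least $0.99$ the vector $\vec{s}$ lies in a good coset, and then $|(\vec{s}+\BCH)\cap B_{m',\alpha d}|\ge|B_{m',\alpha d}|/(100\cdot 2^h)=\binom{m'}{\alpha d}/(100\cdot 2^h)$. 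Finally, since $\BCH$ was chosen via \cref{thm:bch} with design distance $d+1$, it has codimension $h\le\frac{d}{2}\log(m'+1)$, so $2^h\le(m'+1)^{d/2}$ and the last quantity is at least $\binom{m'}{\alpha d}/(100(m'+1)^{d/2})=N$. Combined with the displayed inequality in the previous paragraph this proves the lemma.

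The only point genuinely new relative to \cref{thm:ldc} is the first step, i.e.\ passing from the lattice statement about $\latBCH-\vec{s}$ and an $\ell_p$ ball to the coding statement about a Hamming sphere inside a coset of $\BCH$; once one restricts attention to $\{0,1\}$-valued lattice points this is immediate because for such vectors the $p$th power of the $\ell_p$ norm coincides with the Hamming weight, and the remaining averaging argument is routine. I therefore do not expect a real obstacle. (If one is fussy about $\alpha d$ being an integer, one simply rounds it in the definition of $B_{m',\alpha d}$, exactly as is done implicitly throughout \cref{sec:NCPtoSVPtensor}.)
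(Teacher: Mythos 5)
Your proof is correct and follows essentially the same route as the paper: the key step in both is the observation that for $\{0,1\}$-valued shifts the $p$th power of the $\ell_p$ norm equals the Hamming weight, which converts the $\ell_p$-ball lattice count into a Hamming-sphere coset count and lets the averaging argument from \cref{thm:ldc} (with $q=2$) carry over unchanged. The only cosmetic difference is which lattice point you associate to each weight-$\alpha d$ vector of the coset $\vec{s}+\BCH$ — you use $\vec{w}$ itself as a $\{0,1\}$-valued point of $\latBCH-\vec{s}$, while the paper uses $\vec{v}-\vec{s}\in\{-1,0,1\}^{m'}$ with $\vec{v}$ the $\{0,1\}$-lift of the corresponding codeword — but both injections yield the same count.
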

\begin{proof}
Consider the vectors $\vec{v}$ of $\latBCH$ which lie in $\BCH$ (seen as a subset of $\Z^{m'}$).
Since both $\vec{v}$ and $\vec{s}$ lie in $\{0,1\}^{m'}$, it follows that
\begin{equation*}
   \|\vec{v} - \vec{s}\|_0 = \|\vec{v}-\vec{s}\Mod 2\|_0= \|\vec{v}-\vec{s}\|^p_p.
\end{equation*}
This implies that
\begin{equation}\label{eq:0top}
    |(\latBCH-\vec{s})\cap \B^{(p)}_{m'}((\alpha d)^{1/p})| \geq |(\BCH-\vec{s}\Mod 2)\cap \B_{2,m'}(\alpha d)|.
\end{equation}
Following the derivation of \cref{eq:goods} from the proof of \cref{thm:ldc} on locally dense codes with $q=2$ guarantees that with probability at least $0.99$ over the sampling of $\vec{s}$ it holds that $|(\BCH-\vec{s}\Mod 2)\cap \B_{2,m'}(\alpha d)|\geq\frac{\binom{m'}{\alpha d}}{100(m'+1)^{d/2}}$.
Combining this with \cref{eq:0top} yields the desired result.
\end{proof}

Equipped with the above, we consider the intermediate lattice $\Lint$ generated by the basis
\begin{equation} \label{eq:Bint-tensor-svp}
    \Bint := \begin{pmatrix}
2\BNCP & 0_{m\times m'} & -2\vec{t} \\
0_{m'\times m} & \BBCH & -\vec{s}\\
\vec{0}_m & \vec{0}_{m'} & 1
\end{pmatrix} \in\Z^{(m+m'+1)\times (m+m'+1)} \ \text{.}
\end{equation}
The bottom $(0,\dots,0,1)$ row is added to ensure that $\Bint$ has full column rank over $\R$.

Then, we add a random constraint $\Lint$ to obtain our final SVP instance.
More precisely, we set $\rho$ to be a prime in the interval $(N/100, N/50]$ (as discussed in \cref{sec:CVPtoSVP}, we can sample $
\rho$ with high probability in time $\poly(m)$),
sample a vector $\vec{v}\in\Z^{m+m'+1}$ by sampling each entry of $\vec{v}$ independently and uniformly at random from $\{0,\dots,\rho-1\}$, and construct in time $\poly(m)$ a basis $\Bfinal$ of the random sublattice
\begin{equation*}
    \Lfinal=\{\vec{w}\in\Lint:\langle\vec{v},\vec{w}\rangle=0\Mod\rho\}
\end{equation*}
by following the procedure discussed in \cref{sec:latreduction}.
Then, we set
$k'=(2^p k+\alpha d+1)^{1/p}$ 
and output $(\Bfinal,k')$ as our $\gamma'$-$\SVP_p$ instance.\footnote{Our choice of $k'$ may not be an integer. For the sake of readability, we avoid working through the argument with floors and ceilings. It is also relevant to note that $\gamma$ can be chosen so that our choice of $k'$ is the $p$th root of an integer, which already matches the requirements of the definition of approximate $\SVP_p$ in~\cite{BBEGSLMM21}.}

\subsection{Proof of \cref{thm:redNCPtoSVP}}

In order to prove \cref{thm:redNCPtoSVP}, we begin by establishing some useful properties of the intermediate lattice $\Lint$ captured in the following lemma.

\begin{lemma}\label{lem:LintpropertiesHR}
    Fix an even integer $\gamma\geq 2$, an integer $c\geq 1$, and real numbers $p>1$, $\gamma'\geq 1$, and $\alpha\in (1/2+2^{-p},1)$ additionally satisfying
    \begin{equation*}
        \gamma'< \left(\frac{\gamma}{2^p+1+\alpha \gamma}\right)^{1/p}.
    \end{equation*}
Given a $\gamma$-$\NCP_2$ instance $(G,\vec{t},k)$ with $G \in \F_2^{m \times n}$, $\vec{t} \in \Z^m$, and $k\in\Z^+$,
the algorithm from \cref{sec:redHR} constructs $\Lint=\lat(\Bint)\subseteq \Z^{m+m'+1}$ in time $\poly(m)$ satisfying the following properties with probability at least $0.99$, where we recall that $d=\gamma k$ and $k'=(2^p k+\alpha d+1)^{1/p}$:
    \begin{itemize}
        
        \item If $(G,\vec{t},k)$ is a YES instance of $\gamma$-$\NCP_2$, then there are at least $N= \frac{\binom{m'}{\alpha d}}{100(m'+1)^{d/2}}$ vectors $\vec{w}$ in $\Lint$ such that $\|\vec{w}\|_p\leq k'$ and whose last coordinate equals $1$. We call such vectors \emph{good};
        
        \item If $(G,\vec{t},k)$ is a NO instance of $\gamma$-$\NCP_2$, then there are at most $A\leq 10^{-5} N$ nonzero vectors $\vec{w}$ in $\Lint$ that satisfy \emph{all} of the following properties:
        \begin{itemize}
            \item $\|\vec{w}\|_0\leq (\gamma' k')^p$;
            
            \item Either $\vec{w}\not\in 2\Z^{m+m'+1}$ or $\|\vec{w}\|_0\leq (\gamma' k')^p/2^p$;
            
            \item Either $\vec{w}\not\in 2\Z^{m+m'+1}$ or $\|\vec{w}\|_p\leq (\gamma' k')^{c+3p/2}$.
        \end{itemize}
        We call such vectors \emph{annoying}.\footnote{Annoying vectors are the ones that do not satisfy the properties of NO instances laid out in \cref{thm:redNCPtoSVP}.}
    \end{itemize}        
\end{lemma}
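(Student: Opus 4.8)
Throughout, write a generic vector of $\Lint$ as $\vec w=\Bint\vec z$ with $\vec z=(\vec x,\vec y,\beta)\in\Z^m\times\Z^{m'}\times\Z$, so that $\vec w=(2(\BNCP\vec x-\beta\vec t),\ \BBCH\vec y-\beta\vec s,\ \beta)$, where $\vec t$ is identified with its $\{0,1\}$-lift in $\Z^m$. The whole argument is driven by two reductions mod $2$: setting $\vec c:=\BNCP\vec x\bmod 2$ and $\vec b:=\BBCH\vec y\bmod 2$, we have $\vec c\in\C(G)$, $\vec b\in\BCH$, the first block $\equiv\vec c-\beta\vec t\pmod 2$, and the middle block $\equiv\vec b-\beta\vec s\pmod 2$; consequently $\vec w\in 2\Z^{m+m'+1}$ if and only if $\beta$ is even and $\vec b=\vec 0$. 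I will also use the elementary bound $(\gamma'k')^p=(\gamma')^p(2^pk+\alpha\gamma k+1)\le(\gamma')^p(2^p+\alpha\gamma+1)k<\gamma k=d$, which follows from $k\ge 1$ and the hypothesis on $\gamma'$; hence $(\gamma'k')^p<d$, $(\gamma'k')^p/2^p<d/2^p$, and $(\gamma'k')^{c+3p/2}<d^{c/p+3/2}$. The $\poly(m)$ running time is immediate from \cref{thm:bch} and the efficiency of computing Construction-A bases.

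\textbf{YES case.} Pick $\vec c\in\C(G)$ with $\|\vec c-\vec t\|_0\le k$ and lift it to $\{0,1\}^m$. Since $\vec c\in\latNCP=\lat(\BNCP)$ there is $\vec x$ with $\BNCP\vec x=\vec c$, and then $\BNCP\vec x-\vec t=\vec c-\vec t$ has at most $k$ nonzero entries, each $\pm 1$, so $\|2(\BNCP\vec x-\vec t)\|_p^p\le 2^pk$ and $\|2(\BNCP\vec x-\vec t)\|_0\le k$. By \cref{lem:propBCHlat}, with probability at least $0.99$ over $\vec s$ there are at least $N$ codewords $\vec v\in\BCH$, lifted to $\{0,1\}^{m'}$, with $\|\vec v-\vec s\|_p^p=\|\vec v-\vec s\|_0\le\alpha d$. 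For each such $\vec v$ take $\vec y$ with $\BBCH\vec y=\vec v$ and set $\vec z=(\vec x,\vec y,1)$; then $\|\Bint\vec z\|_p^p\le 2^pk+\alpha d+1=(k')^p$ and the last coordinate equals $1$. Distinct $\vec v$ yield distinct lattice vectors because $\BBCH$ and $\Bint$ have full column rank, so we obtain at least $N$ good vectors.

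\textbf{NO case.} This part is deterministic. Assume $\dist(\C(G),\vec t)>d$ and let $\vec w=\Bint\vec z$ be annoying. If $\beta$ is odd, then at every coordinate where $\vec c$ and $\vec t$ differ the corresponding entry of $\BNCP\vec x-\beta\vec t$ is odd, so the first block has at least $\|\vec c-\vec t\|_0\ge\dist(\C(G),\vec t)>d$ nonzero entries; hence $\|\vec w\|_0>d>(\gamma'k')^p$, contradicting the first annoying condition. If $\vec b\ne\vec 0$, then (using that $\BCH$ has minimum distance $\ge d+1$) the middle block reduces mod $2$ to a codeword of weight $\ge d+1$, so again $\|\vec w\|_0>d>(\gamma'k')^p$. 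Therefore every annoying vector has $\beta$ even and $\vec b=\vec 0$, so $\vec w\in 2\Z^{m+m'+1}$; the second and third annoying conditions then force $\|\vec w\|_0\le(\gamma'k')^p/2^p<d/2^p$ and $\|\vec w\|_p\le(\gamma'k')^{c+3p/2}<d^{c/p+3/2}$. Thus the number of annoying vectors is at most the number of integer vectors in $\Z^{m+m'+1}$ with at most $s:=\lfloor d/2^p\rfloor$ nonzero entries, each of absolute value at most $R:=d^{c/p+3/2}$, which is at most $(s+1)\binom{m+m'+1}{s}(2R)^s\le d\,(4m'R)^{s}$, using $m+m'+1\le 2m'$.

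\textbf{Main obstacle.} The crux is verifying $d\,(4m'R)^{s}\le 10^{-5}N$ with $N=\binom{m'}{\alpha d}/(100(m'+1)^{d/2})$, which is exactly where the choice of $m'$ is used. Bounding $\binom{m'}{\alpha d}\ge(m'/(\alpha d))^{\alpha d}$ and $(m'+1)^{d/2}\le(2m')^{d/2}$, all exponents are proportional to $d$, so after taking $d$-th roots the inequality reduces, up to $\poly(d)$ and absolute-constant factors, to $(m')^{\alpha-1/2-2^{-p}}\gtrsim d^{O(c)}$. Since $\alpha-1/2-2^{-p}>0$ by hypothesis and $m'$ is chosen larger than $(10^8 d^{12c})^{1/(\alpha-1/2-2^{-p})}$, the exponent $12c$ has enough slack to absorb all lower-order terms once $m$ (hence $d$ and $m'$) is large. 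This mirrors the subtlety handled in \cref{lem:havivregev} and~\cite{HR12}: it is precisely the $\ell_p$-norm bound in the third annoying condition that makes the family of annoying vectors finite and its size controllable, since bounded support alone does not bound the coordinate magnitudes. Combining the (mutually exclusive) YES and NO analyses yields the stated properties with probability at least $0.99$.
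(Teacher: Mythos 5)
Your proof is correct and follows essentially the same route as the paper's: the YES case pulls the $\{0,1\}$-lift of a close codeword and uses \cref{lem:propBCHlat}, and the NO case splits on the parity of $\beta$ and of $\BBCH\vec y$, uses the $\F_2$-minimum-distance bounds to kill the first two subcases, and then counts the annoying even vectors in the third subcase by their sparsity and $\ell_p$-magnitude before invoking the choice of $m'$. Two small remarks: your estimate of the coordinate magnitude $R=d^{c/p+3/2}$ is slightly tighter than the paper's looser $d^{3c}$ (either works with the $12c$ slack), and the parenthetical ``once $m$ (hence $d$ and $m'$) is large'' is a slip --- $d=\gamma k$ does not grow with $m$; the verification of $A\le 10^{-5}N$ goes through for every $d\ge 2$ because $m'$ is explicitly chosen so that $(m')^{\alpha-1/2-2^{-p}}>10^8 d^{12c}$, not because $d$ is large.
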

\begin{proof}
The claim about the running time of the algorithm is directly verifiable.
We proceed to argue the two items in the lemma statement.

Suppose that $(G,\vec{t},k)$ is a YES instance of $\gamma$-$\NCP_2$.
This means that there is a codeword $\vec{c}\in\C(G)$ such that
\begin{equation*}
  \|\vec{c}-\vec{t}\|_0\leq k.  
\end{equation*}
Noting that $\C(G)\subseteq\latNCP$ (when seen as a subset of $\{0,1\}^m\subseteq \Z^m$), we conclude that there is $\vec{x}\in\Z^m$ such that $\BNCP \vec{x}=\vec{c}$, and so
\begin{equation*}
    \|\BNCP \vec{x}-\vec{t}\|_p^p \leq k.
\end{equation*}
Moreover, by \cref{lem:propBCHlat} we also know that with probability at least $0.99$ there are at least $N=  \frac{\binom{m'}{\alpha d}}{100(m'+1)^{d/2}}$ vectors $\vec{y}\in\Z^{m'}$ such that
\begin{equation*}
    \|\BBCH \vec{y}-\vec{s}\|_p^p\leq \alpha d.
\end{equation*}
For each such good $\vec{y}$, consider the vector $\vec{z}_{\vec{y}}=(\vec{x},\vec{y},1)$.
Then, we have that
\begin{equation*}
    \|\Bint \vec{z}_{\vec{y}}\|_p^p = \|2\BNCP \vec{x}-2\vec{t}\|_p^p + \|\BBCH \vec{y}-\vec{s}\|_p^p +1 \leq 2^p k +\alpha d+1,
\end{equation*}
and so $\|\Bint \vec{z}_{\vec{y}}\|_p\leq (2^p k+\alpha d+1)^{1/p}= k'$.
Furthermore, the last coordinate of $\Bint \vec{z}_{\vec{y}}$ is always $1$.
As a result, there are at least $N$ good vectors in $\Lint$.

On the other hand, suppose that $(G,\vec{t},k)$ is a NO instance of $\gamma$-$\NCP_2$.
This means that for every $\vec{c}\in\C(G)$ it holds that
\begin{equation*}
    \|\vec{c}-\vec{t}\|_0>d=\gamma k\geq (\gamma' k')^p,
\end{equation*}
where the last inequality follows by our choice of $k'$ and the constraints on $p$,  $\gamma$, $\gamma'$, and $\alpha$.
Recall that our goal is to bound the number of annoying vectors in $\Lint$ appropriately.
Consider an arbitrary vector $\vec{z}=(\vec{x},\vec{y},\beta)\in\Z^{m+m'+1}$.
We proceed by case analysis:
\begin{enumerate}
    \item $\beta\not\in 2\Z$:
    In this case, we have
    \begin{align*}
        \|\Bint \vec{z}\|_0 \geq \|\BNCP\vec{x}-\beta\vec{t}\|_0 \geq \|G\vec{x}-\vec{t}\Mod 2\|_0>d\geq (\gamma' k')^p,
    \end{align*}
    and so no vector of this form is annoying.
    
    \item $\beta\in 2\Z$ and $\BBCH\vec{y}\not\in 2\Z^{m'}$:
    In this case, we have
    \begin{equation*}
        \|\Bint\vec{z}\|_0 \geq \|\BBCH\vec{y}-\beta\vec{s}\|_0 \geq \|\BBCH\vec{y}\Mod 2\|_0>d\geq (\gamma' k')^p,
    \end{equation*}
    and so no vector of this form is annoying.
    The third inequality uses the fact that $\BBCH\vec{y}\Mod 2$ is a nonzero codeword of $\BCH$, which has minimum distance larger than $d=\gamma k$.
    
    \item $\beta\in 2\Z$ and $\BBCH\vec{y}\in 2\Z^{m'}$:
    In this case, it holds that all coordinates of $\Bint\vec{z}$ are even.
    Therefore, in order for $\Bint\vec{z}$ to be annoying it must be that $\|\Bint\vec{z}\|_0\leq (\gamma' k')^p/2^p\leq d/2^p$ and $\|\Bint\vec{z}\|_p\leq (\gamma' k')^{c+3p/2}\leq d^{3c}$.
    There are at most
    \begin{equation*}
        (2d^{3c}+1)^{d/2^p}\binom{m+m'+1}{d/2^p}
    \end{equation*}
    vectors in $\Lint$ with these properties.
\end{enumerate}
We conclude that there are at most $A=(2d^{3c}+1)^{d/2^p}\binom{m+m'+1}{d/2^p}$ annoying vectors in $\Lint$.
Finally, we claim that, since we chose $m'$ to be larger than $\max\left(m+1, (10^8 d^{12c})^{\frac{1}{\alpha -(1/2+2^{-p})}}\right)$ in \cref{sec:redHR}, it follows that $A\leq 10^{-5} N$.
To see this, note that
\begin{equation*}
    A\leq (2d^{3c}+1)^{d/2^p}(m+m'+1)^{d/2^p}\leq (3d^{3c})^{d}(2m')^{d/2^p}
\end{equation*}
and
\begin{equation*}
    N\geq \frac{(m')^{(\alpha-1/2)d}}{100\cdot 2^{d}\cdot d^d },
\end{equation*}
where we have used the fact that $m'\geq m+1$.
Therefore, after simple algebraic manipulation, it follows that the desired inequality holds whenever
\begin{equation*}
    (m')^{(\alpha-(1/2+2^{-p}))d}\geq 10^7 \cdot (6 d^{4c})^d,
\end{equation*}
which is in turn satisfied by our choice of $m'$.
\end{proof}

\begin{proof}[Proof of \cref{thm:redNCPtoSVP}]
\cref{thm:redNCPtoSVP} follows by combining \cref{lem:LintpropertiesHR} with a standard sparsification argument, as carried out in the proof of \cref{thm:redCVPtoSVP}.
To avoid repetition, we omit the full argument here and simply give a sketch.

Let $\Lint := \lat(\Bint)$, where $\Bint$ is as defined in \cref{eq:Bint-tensor-svp}.
First, we note that the random sublattice $\Lfinal$ of $\Lint$ is defined in \cref{sec:redHR} with respect to a prime $\rho$ that satisfies $100A\leq\rho\leq N/100$.
Additionally, we note that there at least $N$ distinct good vectors of the form $(\vec{x}^T, \vec{y}_i^T, 1)^T \in \Lint$ such that $\vec{y}_i \in (\lat(\BBCH) - \vec{s})$ is a binary vector and $\vec{y}_i, \vec{y}_j$ have $1$s in distinct coordinates for all $i, j \in [N], i \neq j$.
Then, by \cref{lem:chebyshev} and the fact that any two such distinct good vectors $(\vec{x}^T, \vec{y}_i^T, 1)^T, (\vec{x}^T, \vec{y}_j^T, 1)^T$ are linearly independent modulo $\rho$ (which follows from the fact that $\vec{y}_i, \vec{y}_j$ have $1$s in distinct coordinates), with probability at least $0.99$ in the YES case there is at least one good vector left in $\Lfinal$, in which case $(\Bfinal,k')$ is a YES instance of $\gamma'$-$\SVP_p$.
Moreover, using the fact that each annoying vector will be kept with probability at most $1/\rho$ and taking a union bound, with probability at least $0.99$ in the NO case there are no annoying vectors left in $\Lfinal$, meaning that $(\Bfinal,k')$ is a NO instance of $\gamma'$-$\SVP_p$ with the additional properties outlined in \cref{thm:redNCPtoSVP}.
\end{proof}

\bibliographystyle{alpha}
\bibliography{refs}

\appendix

\section{Proof of \cref{thm:bch}}\label{sec:bchproof}

We restate \cref{thm:bch} here for convenience.
\begin{theorem}[Restatement of \cref{thm:bch}, $q$-ary BCH codes]
        Fix a prime power $q$.
        Then, given integers $m'=q^r-1$ and $1\leq d\leq m'$ for some positive integer $r$, it is possible to construct in time $\poly(m')$ a generator matrix $\GBCH\in\F_q^{m'\times n'}$ such that $\BCH=\C(\GBCH)\subseteq\F_q^{m'}$ has minimum distance at least $d$ and codimension
    \begin{align*}
        m'-n'&\leq \lceil (d-1)(1-1/q)\rceil \log_q(m'+1).
    \end{align*}
\end{theorem}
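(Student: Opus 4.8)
The plan is to realise $\BCH$ as a narrow-sense primitive BCH code of designed distance $d$, that is, as a suitable cyclic code of length $m' = q^r - 1$ over $\F_q$, and to extract both claimed properties from the standard structure theory of such codes. First I would build the splitting field: since $r = \log_q(m'+1) \le \log_2(m'+1)$, there are at most $m'+1$ monic polynomials of degree $r$ over $\F_q$, so a brute-force search finds an irreducible one (hence an explicit model of $\F_{q^r}$) and then a generator $\alpha$ of the multiplicative group $\F_{q^r}^\times$ (an element of multiplicative order exactly $m'$), where each irreducibility or order check costs $\poly(q^r) = \poly(m')$. Over $\F_{q^r}$ we then have $x^{m'} - 1 = \prod_{j=0}^{m'-1}(x - \alpha^j)$.

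Next I would set $g(x) \in \F_q[x]$ to be the least common multiple of the $\F_q$-minimal polynomials of $\alpha, \alpha^2, \dots, \alpha^{d-1}$, so that $g \mid x^{m'} - 1$, and define $\BCH$ to be the cyclic code with generator polynomial $g$; equivalently, identifying a vector $c \in \F_q^{m'}$ with the polynomial $\sum_j c_j x^j$, $\BCH = \{\, c : c(\alpha^i) = 0 \text{ for all } i = 1, \dots, d-1 \,\}$. Its generator matrix $\GBCH$ consists of the $m' - \deg g$ cyclic shifts of the coefficient vector of $g$, and computing $g$ (a few gcd/lcm operations over $\F_{q^r}$) and then $\GBCH$ clearly takes $\poly(m')$ time. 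The dimension is $n' = m' - \deg g$, so the codimension equals $\deg g$, and it remains to bound $\deg g$ and the minimum distance.

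For the minimum distance I would invoke the BCH bound. If $\BCH$ contained a nonzero codeword $c$ of Hamming weight $w \le d-1$ with support $\{j_1 < \dots < j_w\}$, then the first $w$ of the defining equations $c(\alpha^i)=0$ form a square linear system whose coefficient matrix factors as a diagonal matrix with nonzero entries $\alpha^{j_l}$ times a Vandermonde matrix in the pairwise distinct nodes $\alpha^{j_1}, \dots, \alpha^{j_w}$ (distinct because $0 \le j_l < m'$ and $\alpha$ has order $m'$); this matrix is invertible, forcing $c = 0$, a contradiction. Hence $\lambda(\BCH) \ge d$. This argument is entirely standard, so I would cite it rather than spell it out.

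The crux is the codimension bound. For $1 \le i \le m'-1$ let $C_i = \{\, i q^t \bmod m' : t \ge 0 \,\}$ be the $q$-cyclotomic coset of $i$ modulo $m'$; since the minimal polynomial of $\alpha^i$ over $\F_q$ has degree $|C_i|$ and roots $\{\alpha^j : j \in C_i\}$, and distinct cosets give coprime minimal polynomials, we get $\deg g = \bigl|\bigcup_{i=1}^{d-1} C_i\bigr|$. Each $|C_i|$ divides $r$ (as $q^r \equiv 1 \pmod{m'}$), hence $|C_i| \le r = \log_q(m'+1)$, so it suffices to bound by $\lceil (d-1)(1 - 1/q) \rceil$ the number of distinct cosets among $C_1, \dots, C_{d-1}$. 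The key observation is that any $i \in \{1,\dots,d-1\}$ divisible by $q$ is redundant: then $i/q$ is an integer in $\{1,\dots,d-1\}$, and since $i \le d-1 \le m'-1 < m'$ we have $q\cdot(i/q) \bmod m' = i$, so $C_i = C_{i/q}$. As there are $\lfloor (d-1)/q \rfloor$ multiples of $q$ in $\{1,\dots,d-1\}$, the number of distinct cosets is at most $(d-1) - \lfloor (d-1)/q \rfloor = \lceil (d-1)(1 - 1/q) \rceil$ (using $\lceil -x\rceil = -\lfloor x \rfloor$ with $x = (d-1)/q$). Multiplying the coset count by the per-coset size bound $r$ gives $\deg g \le \lceil (d-1)(1 - 1/q) \rceil \log_q(m'+1)$, as required. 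I expect this cyclotomic-coset counting — together with the little ceiling/floor identity — to be the only genuinely delicate point; everything else is routine BCH theory.
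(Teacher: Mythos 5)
Your proposal is correct and follows essentially the same route as the paper: the codimension bound comes from the Frobenius redundancy $f(\gamma)=0\Rightarrow f(\gamma^q)=0$, which you phrase as $C_{qi}=C_i$ for cyclotomic cosets while the paper phrases it as redundancy among the parity constraints $f(\alpha^i)=0$, and both use the bound $|C_i|\le r=\log_q(m'+1)$ together with the identity $(d-1)-\lfloor(d-1)/q\rfloor=\lceil(d-1)(1-1/q)\rceil$. The only cosmetic difference is that you spell out the BCH/Vandermonde minimum-distance argument and work with the generator polynomial, whereas the paper cites the Reed–Solomon subfield-subcode fact and counts $\F_q$-linear constraints directly.
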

\begin{proof}
Let $\alpha$ be a primitive element of $\F_{q^r}$.
For a given block length $m'=q^r-1$ and design distance $d\leq m'$, we define the (narrow-sense, primitive) $q$-ary BCH code $\BCH\subseteq\F_q^{m'}$ as
\begin{equation}\label{eq:bchdef}
    \BCH= \left\{(c_0,\dots,c_{m'-1})\in\F_q: f(x)=\sum_{i=0}^{m'-1} c_i x^i\in\F_{q^r}[x] \textnormal{ satisfies } f(\alpha^i)=0 \textnormal{ for } i=1,\dots,d-1\right\}.
\end{equation}

By~\cite[Theorem 3]{Gur10}, it holds that $\BCH$ is a subset of a Reed-Solomon code over $\F_{q^r}$ with block length $m'$ and minimum distance $d$.
Therefore, $\BCH$ has minimum distance at least $d$.

To see the claim about the codimension, note that each constraint of the form $f(\alpha^i)=0$ over $\F_{q^r}$ corresponds to $r=\log_q(m'+1)$ linear constraints over $\F_q$.
Moreover, if $f(\gamma)=0$ for some $\gamma\in\F_{q^r}$, it follows that
\begin{equation*}
    0 = f(\gamma)^q = \sum_{i=0}^{m'-1} c_i^q \left(\gamma^{i}\right)^q = \sum_{i=0}^{m'-1} c_i \left(\gamma^{q}\right)^i = f(\gamma^q),
\end{equation*}
where we have used the fact that $(\beta_1+\beta_2)^q=\beta_1^q+\beta_2^q$ for any $\beta_1,\beta_2\in\F_{q^r}$ and that $\beta^q=\beta$ for all $\beta\in\F_q$ (with the natural embedding of $\F_q$ in $\F_{q^r}$).
Therefore, there are at least $\left\lfloor\frac{d-1}{q}\right\rfloor$ redundant constraints over $\F_{q^r}$ in \cref{eq:bchdef}.
Combining both observations above shows that the codimension of $\BCH$ is at most
\begin{equation*}
    \lceil (d-1)(1-1/q)\rceil \log_q(m'+1).
\end{equation*}

Finally, since each constraint $f(\alpha^i)=0$ over $\F_{q^r}$ can be transformed into $r$ linear constraints over $\F_q$ in time $\poly(q^r)=\poly(m')$, we can construct the parity-check matrix (and hence the generator matrix $\GBCH$) of $\BCH$ in time $\poly(m')$ as well.
\end{proof}
\section{The Haviv-Regev conditions for general $\ell_p$ norms}\label{sec:HRgen}

In this section, we prove a generalization of Haviv-Regev's conditions which allow tensoring of approximate $\SVP_p$ instances for all $p\geq 1$.
Our argument follows that of~\cite{HR12} for $p=2$ closely.
For convenience, we restate the key lemma here.
\havivregev*

We will require some auxiliary lemmas in order to prove \cref{lem:havivregev}, starting with a version of Minkowski's first theorem for (possibly non-full-rank) lattices in the $\ell_2$ norm.
\begin{lemma}[Minkowski's first theorem]\label{lem:minkowski}
    Let $\lat$ be a rank-$r$ lattice.
    Then, it holds that
    \begin{equation*}
        \det(\lat)\geq \left(\frac{\lambda_1^{(2)}(\lat)}{\sqrt{r}}\right)^r.
    \end{equation*}
\end{lemma}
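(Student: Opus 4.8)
The plan is to reduce the statement to the classical full-rank Minkowski first theorem. Let $\lat=\lat(B)\subseteq\R^m$ with $B\in\R^{m\times r}$ of full column rank, and let $V=\mathrm{span}_\R(\lat)\subseteq\R^m$, an $r$-dimensional subspace. First I would fix an orthonormal basis of $V$, collected as the columns of a matrix $Q\in\R^{m\times r}$ (so $Q^TQ=I_r$ and $QQ^T$ is the orthogonal projection onto $V$), and consider the lattice $\lat':=\lat(Q^TB)\subseteq\R^r$. Since the columns of $B$ lie in $V$ we have $QQ^TB=B$, hence
\begin{equation*}
\det(\lat')=\sqrt{\det\big((Q^TB)^T(Q^TB)\big)}=\sqrt{\det(B^TQQ^TB)}=\sqrt{\det(B^TB)}=\det(\lat),
\end{equation*}
and since $\vec{x}\mapsto Q^T\vec{x}$ is an $\ell_2$-isometry from $V$ onto $\R^r$ carrying $\lat$ bijectively onto $\lat'$, we also have $\lambda_1^{(2)}(\lat')=\lambda_1^{(2)}(\lat)$. (One checks in passing that $\det(\lat)$ is independent of the chosen basis $B$, since two bases differ by a unimodular matrix $U$ and $\det((BU)^T(BU))=\det(U)^2\det(B^TB)=\det(B^TB)$.) Thus it suffices to prove the inequality for the full-rank lattice $\lat'$.

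For the full-rank case I would invoke Minkowski's convex body theorem with a cube rather than a ball, so that no ball-volume estimate is needed. Write $\mu=\lambda_1^{(2)}(\lat')$. For any real $a$ with $0<a<\mu/\sqrt{r}$, the open cube $C_a=(-a,a)^r$ is centrally symmetric and convex, and every $\vec{x}\in C_a$ satisfies $\norm{\vec{x}}_2<a\sqrt{r}<\mu$, so $C_a$ contains no nonzero point of $\lat'$. By Minkowski's convex body theorem this forces $\mathrm{vol}(C_a)=(2a)^r\le 2^r\det(\lat')$, i.e.\ $a^r\le\det(\lat')$. Letting $a\uparrow\mu/\sqrt{r}$ gives $(\mu/\sqrt{r})^r\le\det(\lat')=\det(\lat)$, which is exactly the claim. (Alternatively, one may simply cite any standard reference for the full-rank Minkowski first theorem and present only the reduction above.)

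I do not expect any serious obstacle, as this is a classical fact; the only points needing a little care are (i) verifying that both $\det(\lat)$ and $\lambda_1^{(2)}(\lat)$ are preserved upon passing to the full-rank lattice $\lat'$ — handled by the isometry computation above — and (ii) the open-versus-closed subtlety in the convex body argument, dispatched by the limiting argument $a\uparrow\mu/\sqrt{r}$ instead of taking $a=\mu/\sqrt{r}$ directly.
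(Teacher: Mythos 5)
Your proof is correct. The paper does not actually prove \cref{lem:minkowski} --- it is stated without proof as a classical fact (Minkowski's first theorem for lattices of possibly deficient rank), so there is no internal argument to compare against. Your reduction to the full-rank case via the isometry $\vec{x}\mapsto Q^T\vec{x}$ (preserving both $\det$ and $\lambda_1^{(2)}$), followed by Minkowski's convex body theorem applied to the open cube $(-a,a)^r$ with the limiting argument $a\uparrow\mu/\sqrt{r}$, is a clean, complete, and standard derivation; using the cube rather than the $\ell_2$ ball is a nice touch since it avoids any ball-volume computation and yields exactly the $\sqrt{r}$ factor in the statement.
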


We will also need to relate $\ell_p$ norms to the $\ell_2$ norm.
When $p\geq 2$, a standard application of H\"older's inequality yields
\begin{equation*}
    \|\vec{v}\|_2\geq \|\vec{v}\|_p\geq |\mathsf{supp}(\vec{v})|^{1/p-1/2} \|\vec{v}\|_2,
\end{equation*}
where $\mathsf{supp}(\vec{v})=\{i\in[m]:\vec{v}_i\neq 0\}$ is the support of $\vec{v}$, for any vector $\vec{v}\in\R^m$.
When $p<2$ we have that
\begin{equation*}
    \|\vec{v}\|_p\geq \|\vec{v}\|_2.
\end{equation*}
We can combine the inequalities above to conclude in particular that
\begin{equation}\label{eq:ellptoell2}
    \|\vec{v}\|_p\geq |\mathsf{supp}(\vec{v})|^{1/\max(2,p)-1/2} \|\vec{v}\|_2
\end{equation}
for all $p\geq 1$.
Moreover, using the fact that $\|\vec{v}\|_1 \leq |\mathsf{supp}(\vec{v})|^{1/2} \|\vec{v}\|_2$ by Cauchy-Schwarz and that $\|\vec{v}\|_1 \geq \|\vec{v}\|_p$ for all $p\geq 1$, it holds that
\begin{equation}\label{eq:ellptoell2alt}
    \|\vec{v}\|_2\geq |\mathsf{supp}(\vec{v})|^{-1/2} \|\vec{v}\|_p
\end{equation}
for all $p\geq 1$.

We prove a generalization of~\cite[Claim 3.5]{HR12} for $\ell_p$ norms with $p\geq 1$.
\begin{lemma}[\protect{Generalization of~\cite[Claim 3.5]{HR12}}]\label{lem:propsublattice}
    Let $(B,k)$ be a NO instance of $\gamma$-$\SVP_p$ with the properties outlined in \cref{lem:havivregev}, and let $\lat$ be a sublattice of $\lat(B)$ of rank $r$.
    Then, at least one of the following properties holds, where $d=\gamma k$:
    \begin{itemize}
        \item Every basis matrix of $\lat$ has more than $d^p$ nonzero rows;
        
        \item Every basis matrix of $\lat$ has only even entries and has more than $(d/2)^p$ nonzero rows;
        
        \item $\det(\lat)> d^{r(c+p/2)}$ and there is a basis matrix of $\lat$ which has at most $(d/2)^p$ nonzero rows.
        In particular, every vector in $\lat$ has Hamming weight at most $(d/2)^p$.
    \end{itemize}
\end{lemma}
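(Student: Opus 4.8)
The plan is to run a three-way case analysis on the single quantity $s := |\mathsf{supp}(\lat)|$, where $\mathsf{supp}(\lat)$ is the set of coordinates on which some vector of $\lat$ is nonzero. First observe that $s$ equals the number of nonzero rows of \emph{any} basis matrix of $\lat$ (a unimodular change of basis preserves which rows are zero, and a row is zero iff that coordinate vanishes on all of $\lat$), and that a basis matrix of $\lat$ has only even entries iff $\lat \subseteq 2\Z^m$. Hence all three bullets of the lemma are statements about the basis-independent data $s$, whether $\lat \subseteq 2\Z^m$, and $\det(\lat)$. Throughout I take $\lat$ to have rank $r \geq 1$ and use the elementary facts $r \leq s$ and $\|\vec{w}\|_0 \leq s$ for all $\vec{w} \in \lat$. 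The cases $s > d^p$, $(d/2)^p < s \leq d^p$, and $s \leq (d/2)^p$ are exhaustive.

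The first two cases are short. If $s > d^p$, every basis matrix has more than $d^p$ nonzero rows, so the first bullet holds. If $(d/2)^p < s \leq d^p$, then every nonzero $\vec{w} \in \lat$ has $\|\vec{w}\|_0 \leq s \leq d^p$, so by the hypothesis of \cref{lem:havivregev} it fails the first listed condition and therefore satisfies the second or third, both of which force $\vec{w} \in 2\Z^m$; hence $\lat \subseteq 2\Z^m$ and (since $s > (d/2)^p$) the second bullet holds.

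The substantive case is $s \leq (d/2)^p$. The Hamming-weight part of the third bullet is immediate, since $\|\vec{w}\|_0 \leq s \leq (d/2)^p$ for every $\vec{w} \in \lat$; the work is to prove $\det(\lat) > d^{r(c+p/2)}$. Since every nonzero $\vec{w} \in \lat$ has $\|\vec{w}\|_0 \leq d^p/2^p$, it fails the first two conditions of \cref{lem:havivregev} and therefore satisfies the third, so $\lambda_1^{(p)}(\lat) > d^{c+3p/2}$. I then pass to the Euclidean norm using \cref{eq:ellptoell2alt}: every nonzero $\vec{v} \in \lat$ has $|\mathsf{supp}(\vec{v})| \leq s \leq d^p$, hence $\|\vec{v}\|_2 \geq |\mathsf{supp}(\vec{v})|^{-1/2}\|\vec{v}\|_p \geq d^{-p/2}\|\vec{v}\|_p$, so $\lambda_1^{(2)}(\lat) \geq d^{-p/2}\lambda_1^{(p)}(\lat) > d^{c+p}$. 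Finally, Minkowski's first theorem (\cref{lem:minkowski}) gives $\det(\lat) \geq (\lambda_1^{(2)}(\lat)/\sqrt{r})^r > (d^{c+p}/\sqrt{r})^r$, and because $r \leq s \leq (d/2)^p < d^p$ we have $\sqrt{r} < d^{p/2}$, whence $d^{c+p}/\sqrt{r} > d^{c+p/2}$ and thus $\det(\lat) > d^{r(c+p/2)}$, as required.

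The step I expect to be the main obstacle is making the exponent bookkeeping in this last case line up exactly: the $d^{-p/2}$ loss in the $\ell_p$-to-$\ell_2$ comparison (justified precisely because $s \leq d^p$) together with the $\sqrt{r} \leq d^{p/2}$ loss from Minkowski (justified because $r \leq s \leq (d/2)^p$) must consume exactly the slack $d^{3p/2}$ built into the third hypothesis of \cref{lem:havivregev}, leaving $d^{c+p/2}$ per dimension. Aside from this, the argument mirrors the $p = 2$ proof of Haviv and Regev~\cite{HR12}; the only genuine change for general $p \geq 1$ is the use of the norm comparison \cref{eq:ellptoell2alt} in place of Cauchy-Schwarz.
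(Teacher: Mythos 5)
Your proof is correct and takes essentially the same approach as the paper: the paper argues the contrapositive (assuming the first two bullets fail and deducing the third), while you organize the same argument as an explicit three-way case split on $s = |\mathsf{supp}(\lat)|$, but the substantive steps — using the NO-instance conditions to force $\lat \subseteq 2\Z^m$, then $\|\vec{w}\|_0 \leq (d/2)^p$, then the bound $\lambda_1^{(p)}(\lat) > d^{c+3p/2}$, and finally Minkowski combined with the norm comparison \cref{eq:ellptoell2alt} and $\sqrt{r} \leq d^{p/2}$ — are identical.
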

\begin{proof}
    We assume that the first two properties do not hold and show that the third property holds in that case.
    Since there is a basis matrix of $\lat$ with at most $d^p$ nonzero rows, we conclude that $r\leq d^p$ and that every vector in $\lat$ has Hamming weight at most $d^p$.
    By the properties of $(B,k)$, this implies that $\lat\subseteq 2\Z^m$.
    Therefore, it must be the case that there is a basis matrix of $\lat$ that has at most $(d/2)^p$ nonzero rows, and so every vector in $\lat$ has Hamming weight at most $(d/2)^p$.
    As a result, we also gather that all nonzero vectors $\vec{w}\in\lat$ satisfy $\|\vec{w}\|_p> d^{c+3p/2}$, and so
    \begin{equation*}
        \det(\lat)\geq \left(\frac{\lambda_1^{(2)}(\lat)}{\sqrt{r}}\right)^r\geq \left(\frac{d^{-p/2}\lambda_1^{(p)}(\lat)}{\sqrt{r}}\right)^r> d^{r(c+p/2)} ,
    \end{equation*}
    where the first inequality follows from \cref{lem:minkowski}, the second inequality holds by \cref{eq:ellptoell2alt} and the fact that every vector in $\lat$ has Hamming weight at most $d^p$, and the third inequality uses the fact that $r\leq d^p$.
\end{proof}

We will also need the following technical lemma from~\cite{HR12}, which is specific for the $\ell_2$ norm.
\begin{lemma}[\protect{\cite[Claim 3.6]{HR12}}]\label{lem:tech}
    Let $\lat_1$ and $\lat_2$ be integer lattices of rank $r \geq 1$ generated by the bases
    $U=(\vec{u}_1,\dots,\vec{u}_r)$ and $W=(\vec{w}_1,\dots,\vec{w}_r)$, respectively.
    Consider the vector $\vec{v}=\sum_{i=1}^r \vec{u}_i\otimes \vec{w}_i\in\lat_1\otimes\lat_2$.
    Then, it holds that
    \begin{equation*}
        \|\vec{v}\|_2\geq \sqrt{r}(\det(\lat_1)\cdot \det(\lat_2))^{1/r}.
    \end{equation*}
\end{lemma}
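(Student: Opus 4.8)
The plan is to reinterpret the tensor $\vec{v}$ as a matrix, express $\|\vec{v}\|_2^2$ as a trace, and close the estimate with the AM--GM inequality applied to a suitable set of eigenvalues. First I would identify $\vec{v}=\sum_{i=1}^r\vec{u}_i\otimes\vec{w}_i$ with the matrix $M:=\sum_{i=1}^r\vec{u}_i\vec{w}_i^T=UW^T$, where $U$ and $W$ are the matrices whose columns are $\vec{u}_1,\dots,\vec{u}_r$ and $\vec{w}_1,\dots,\vec{w}_r$, respectively. Vectorization is a linear isometry from matrices with the Frobenius norm to vectors with the $\ell_2$ norm, so $\|\vec{v}\|_2^2=\|M\|_F^2=\operatorname{Tr}(M^TM)$. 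By the cyclic invariance of the trace, $\operatorname{Tr}(M^TM)=\operatorname{Tr}\big(WU^TUW^T\big)=\operatorname{Tr}\big(U^TU\,W^TW\big)=\operatorname{Tr}(AB)$, where $A:=U^TU$ and $B:=W^TW$. Since $U$ and $W$ are bases of rank-$r$ lattices they have full column rank, so $A$ and $B$ are symmetric positive definite $r\times r$ matrices, and by the definition of the lattice determinant $\det A=\det(\lat_1)^2$ and $\det B=\det(\lat_2)^2$.

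It then suffices to show that $\operatorname{Tr}(AB)\geq r\,(\det A\cdot\det B)^{1/r}$ for any symmetric positive definite $A,B$. For this I would pass to eigenvalues: conjugating $AB$ by $A^{1/2}$ shows that $AB$ is similar to $A^{1/2}BA^{1/2}$, which is symmetric positive definite, so $AB$ has positive real eigenvalues $\mu_1,\dots,\mu_r$ with $\sum_{i=1}^r\mu_i=\operatorname{Tr}(AB)$ and $\prod_{i=1}^r\mu_i=\det(AB)=\det A\cdot\det B$. The AM--GM inequality applied to $\mu_1,\dots,\mu_r$ then gives $\operatorname{Tr}(AB)=\sum_{i=1}^r\mu_i\geq r\big(\prod_{i=1}^r\mu_i\big)^{1/r}=r\,(\det A\cdot\det B)^{1/r}$. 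Substituting $\det A=\det(\lat_1)^2$ and $\det B=\det(\lat_2)^2$ yields $\|\vec{v}\|_2^2=\operatorname{Tr}(AB)\geq r\,\big(\det(\lat_1)^2\det(\lat_2)^2\big)^{1/r}$, and taking square roots gives the claimed bound $\|\vec{v}\|_2\geq\sqrt{r}\,(\det(\lat_1)\cdot\det(\lat_2))^{1/r}$.

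The argument is essentially mechanical, so I do not expect a genuine obstacle; the only points requiring a little care are the bookkeeping in the tensor-to-matrix identification (making sure that vectorization is an $\ell_2$-to-Frobenius isometry, that the cyclic-trace manipulation is applied correctly, and in particular that full column rank of the bases guarantees $A$ and $B$ are invertible) and the observation that the non-symmetric product $AB$ still has positive real eigenvalues, which is what makes the application of AM--GM legitimate.
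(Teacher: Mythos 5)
Your proof is correct, and the bookkeeping (vectorization as an $\ell_2$-to-Frobenius isometry, $\det(U^TU)=\det(\lat_1)^2$, and the exponent accounting when taking square roots) all checks out. The paper itself does not prove this lemma but imports it as \cite[Claim 3.6]{HR12}; your argument via the Gram matrices $U^TU$, $W^TW$, the trace identity $\|\vec{v}\|_2^2=\operatorname{Tr}(U^TU\,W^TW)$, and AM--GM applied to the (positive, real) eigenvalues of the product is essentially the same proof given in that cited source, so there is nothing to flag.
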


We are now ready to prove \cref{lem:havivregev}.
\begin{proof}[Proof of \cref{lem:havivregev}]
    It suffices to show that if $(B,k)$ is a NO instance of $\gamma$-$\SVP_p$ satisfying the conditions from the lemma statement, $\lat_1=\lat(B)$, and $\lat_2=\bigotimes_{i=1}^{c'} \lat$ for some integer $1\leq c'<c$ and such that $\lambda_1^{(p)}(\lat_2)>d^{c'}$, then
    \begin{equation*}
        \lambda_1^{(p)}(\lat_1\otimes\lat_2)> d^{c'+1}.
    \end{equation*}
    
    Consider an arbitrary nonzero vector $\vec{v}\in\lat_1\otimes\lat_2$.
    As shown in~\cite[Proof of Lemma 3.4]{HR12}, 
    we can write $\vec{v}=B'_1 (B'_2)^T$ for 
    full-column-rank matrices $B'_1$ and $B'_2$ (note that $B'_1$ and $B'_2$ have the same number of columns)
    such that $\lat'_i=\lat(B'_i)\subseteq \lat_i$ for $i=1,2$.
    We now proceed by case analysis based on \cref{lem:propsublattice} applied to $\lat'_1$:
    \begin{itemize}
        \item $B'_1$ has more than $d^p$ nonzero rows:
        In this case, more than $d^p$ rows of $B'_1 (B'_2)^T$ are nonzero vectors from $\lat'_2$, and so
        \begin{equation*}
            \|\vec{v}\|_p > d\cdot \lambda_1^{(p)}(\lat'_2)\geq d \cdot \lambda_1^{(p)}(\lat_2)> d^{c'+1}.
        \end{equation*}
        
        \item $B'_1$ has only even entries and has more than $(d/2)^p$ nonzero rows:
        In this case, more than $(d/2)^p$ rows of $B'_1 (B'_2)^T$ are even multiples of nonzero vectors from $\lat'_2$, and so
        \begin{equation*}
            \|\vec{v}\|_p > 2((d/2)^p)^{1/p}\lambda_1^{(p)}(\lat'_2)\geq d \cdot \lambda_1^{(p)}(\lat_2)> d^{c'+1}.
        \end{equation*}
        
        \item $\det(\lat'_1)> d^{r(c+p/2)}$ and $B'_1$ has at most $(d/2)^p$ nonzero rows. In particular, $r\leq (d/2)^p$ and every vector in $\lat'_1$ has Hamming weight at most $(d/2)^p$:
        In this case, we have 
        \begin{align*}
            \|\vec{v}\|_p &\geq d^{\frac{p}{\max(2,p)}-\frac{p}{2}}\|\vec{v}\|_2\\
            &\geq d^{\frac{p}{\max(2,p)}-\frac{p}{2}} \sqrt{r}(\det(\lat'_1)\cdot\det(\lat'_2))^{1/r}\\
            &\geq d^{\frac{p}{\max(2,p)}-\frac{p}{2}} \sqrt{r}\det(\lat'_1)^{1/r}\\
            &> d^{\frac{p}{\max(2,p)}-\frac{p}{2}}\cdot d^{c+p/2}\\
            &\geq d^{c'+1} \ .
        \end{align*}
        The first inequality follows from \cref{eq:ellptoell2} and the fact that $\vec{v}$ has support size at most $d^p$.
        The second inequality holds via \cref{lem:tech} applied to $\lat'_1$ and $\lat'_2$, which are both rank-$r$ lattices.
        The third inequality uses the fact that $\det(\lat'_2)\geq 1$, which holds because $\lat'_2$ is a non-trivial integer lattice (this in turn holds by the definition of $\lat'_2$ and the fact that $\vec{v} \neq \vec{0}$.).
        The fourth inequality holds by the lower bound on $\det(\lat'_1)$ and the fact that $c'\leq c-1$.\qedhere
    \end{itemize}
\end{proof}

\end{document}